\RequirePackage{amsmath}
\documentclass[runningheads]{llncs}
\input{imports}

\DeclareMathOperator{\mono}{\hookrightarrow}

\DeclareMathOperator{\Mono}{\mathcal{M}}

\DeclareMathOperator{\Epi}{\mathcal{E}^{\prime}}
\DeclareMathOperator{\Shift}{Shift}
\DeclareMathOperator{\Left}{Left}

\DeclareMathOperator{\Tree}{Tr}
\DeclareMathOperator{\Leafs}{Leaves}
\DeclareMathOperator{\src}{src}
\DeclareMathOperator{\tar}{tar}

\DeclareMathOperator{\dpess}{DPEss}
\DeclareMathOperator{\ipess}{IPEss}
\DeclareMathOperator{\dess}{DEss}

\DeclareMathOperator{\ess}{ess}
\DeclareMathOperator{\pshift}{PEShift}

\DeclareMathOperator{\cess}{CEss}

\DeclareMathOperator{\cessAC}{SCEss}
\DeclareMathOperator{\comp}{Comp}
\DeclareMathOperator{\essdbl}{ess_{dbl}}

\newcommand{\rle}[5]{#1 \overset{#2}{\longleftarrow\joinrel\rhook} #3 \overset{#4}{\lhook\joinrel\longrightarrow}#5 }
\newcommand{\rleShort}[3]{#1 \hookleftarrow #2 \hookrightarrow #3}
\newcommand{\rleComplete}{\rle{L}{l}{K}{r}{R}}

\newcommand{\true}{\textsf{true}}

\newcommand{\rleAC}[1]{(\rle{L_#1}{l_#1}{K_#1}{r_#1} {R_#1}, ac_#1)}
\newcommand{\rleACCompact}[1]{(\rleShort{L_#1}{K_#1}{R_#1}, ac_#1)}

\newcommand{\rlePlain}[1]{\rle{L_#1}{l_#1}{K_#1}{r_#1} {R_#1}}

\newcommand{\trans}[4]{#1 \Longrightarrow_{#2,#3}#4}

\newcommand{\transPair}[7]{#1\, {}_{#2,#3}\!\Longleftarrow #4 \Longrightarrow_{#5,#6} #7}

\newcommand{\EMfactorization}{$\Epi$-$\Mono$ factorization} 

\newcommand{\DPEss}[3]{\dpess^{#1}(#2, #3)}
\newcommand{\IPEss}[3]{\ipess^{#1}(#2, #3)}
\newcommand{\Ess}[3]{\ess_{#1}(#2, #3)}
\newcommand{\DEss}[2]{\dess(#1, #2)}

\newcommand{\ESS}[2]{\dess(#1,\allowbreak #2)}

\newcommand{\PEShift}[2]{\pshift_{#1}(#2)}

\newcommand{\CEss}[2]{\cess(#1, #2)}

\newcommand{\Comp}[2]{\comp(#1, #2)}

\newcommand{\rmono}{right hook-stealth}
\newcommand{\lmono}{left hook-stealth}
\newcommand{\edge}[1]{\scriptsize $#1$}

\newcommand{\Edge}[5]{(m-#1) edge[#2] node[#3]{\edge{#4}} (m-#5)}

\newcommand{\Cedge}[4]{edge[#1] node[#2]{\edge{#3}} (m-#4)}
\newcommand{\CEssAC}[2]{\cessAC(#1,#2)}

\newcommand{\Span}[5]{#1 \xleftarrow{#2} #3 \xrightarrow{#4} #5}
\newcommand{\MonoSpan}[5]{#1 \xhookleftarrow{#2} #3 \xhookrightarrow{#4} #5}
\newcommand{\CoSpan}[5]{#1 \xrightarrow{#2} #3 \xleftarrow{#4} #5}
\newcommand{\MonoCoSpan}[5]{#1 \xhookrightarrow{#2} #3 \xhookleftarrow{#4} #5}

\definecolor{orange}{RGB}{255,145,0}

\newcommand{\decapAttribute}{\emph{decapsulateAttribute}}
\newcommand{\pullAttribute}{\emph{pullUpEncapsulatedAttribute}}

\begin{document}

	 \title{		
		Conflict Essences for Transformation Rules with Nested Application Conditions -- Long Version
	}
		\titlerunning{Conflict Essences for Nested Application Conditions}

	%
	\author{Alexander Lauer\inst{1}
	\orcidID{0009-0001-9077-9817} \and
		 Jens Kosiol\inst{1}\orcidID{0000-0003-4733-2777}\and
		 Leen Lambers\inst{2}\orcidID{0000-0001-6937-5167}\and
		Gabriele Taentzer\inst{1}\orcidID{0000-0002-3975-5238} } 
	\authorrunning{Lauer et al.}
	%
	\institute{
		Philipps-Universität Marburg, Marburg, Germany \\
		\email{alexander.lauer@uni-marburg.de} \\
		\email{\{kosiolje,taentzer\}@mathematik.uni-marburg.de}
		\and
		Brandenburg University of Technology Cottbus-Senftenberg, Cottbus, Germany \\
		\email{leen.lambers@b-tu.de} 
	}
	\maketitle              
	\begin{abstract}
	
		\emph{Conflict and dependency analysis} is an important static analysis tool that provides an overview of the potential interactions of (graph) transformation rules.
		This analysis is based on \emph{critical pairs} and \emph{initial conflicts}, which represent conflicting transformations in a minimal context. However, the crucial information about a conflicting transformation pair is contained in much smaller structures, called \emph{disabling/conflict essences} in existing research. 
		Recently, we introduced \emph{disabling essences for rules with application conditions} which contain the information on how an application condition can be violated by another rule. 
		In this paper, we extend the notion of disabling essences to support not only application conditions in \emph{Alternating Quantifier Normal Form}, but also arbitrary nested conditions. We introduce \emph{(symbolic) conflict essences} that are constructed from disabling essences and which capture the interaction between two rules. 
		We show that a transformation pair is parallel dependent if and only if a symbolic conflict essence can be embedded into it and
		relate symbolic conflict essences to initial conflicts for transformation rules with application conditions.
		We present our results for adhesive HLR categories, which includes several types of graph-like structures.

		\keywords{%
			Graph transformation \and
			Critical pair analysis \and
			Static analysis \and
			Adhesive HLR categories
		}
	\end{abstract}

	\section{Introduction}
	\label{chapter:introduction}
	The algebraic approach to graph transformation~\cite{EEPT06, EEGH15} has a broad range of interesting applications, especially in software engineering~\cite{HT20}.
\emph{Transformation rules} specify how graphs can be modified by either deleting existing elements or creating new elements. 
In addition, rules may be equipped with a \emph{nested application condition} to further model the situations in which a rule is allowed to be applied. 
When multiple rules are to be applied to the same graph, they can interact in the sense that one rule might prevent the application of the other by either deleting elements that are used in the other transformation or by inserting or deleting elements so that the application condition is not satisfied anymore. 
Then one rule is said to cause a \emph{conflict} for the other.

Statically analysing rules for these interactions plays an important role in applications such as feature interaction detection, model versioning, test case generation, and graph parsing (see~\cite{LSTBH18} for an overview of such applications). 
Each of these analysis techniques provides an overview of all potential conflicts for a given rule set with different types of granularities. 
\emph{Critical pairs} describe conflictual transformation pairs in a minimal context~\cite{EHPP04}, whereas \emph{initial conflicts} (a subset of the set of critical pairs of a rule pair) describe conflictual transformation pairs with minimal overlaps. 
Since critical pairs contain unnecessary information for many applications~\cite{LSTBH18}, research has been conducted to extract only the essential information about potential conflicts~\cite{LBKST19,LSTBH18,ACR19,LKST19,LauerKT25}.  
All notions developed for \emph{plain} rules (i.e., those without application conditions) are closely related as shown at the top of~\cref{fig:overview}. 
\begin{figure}[t]
	\centering\includegraphics[width = \textwidth]{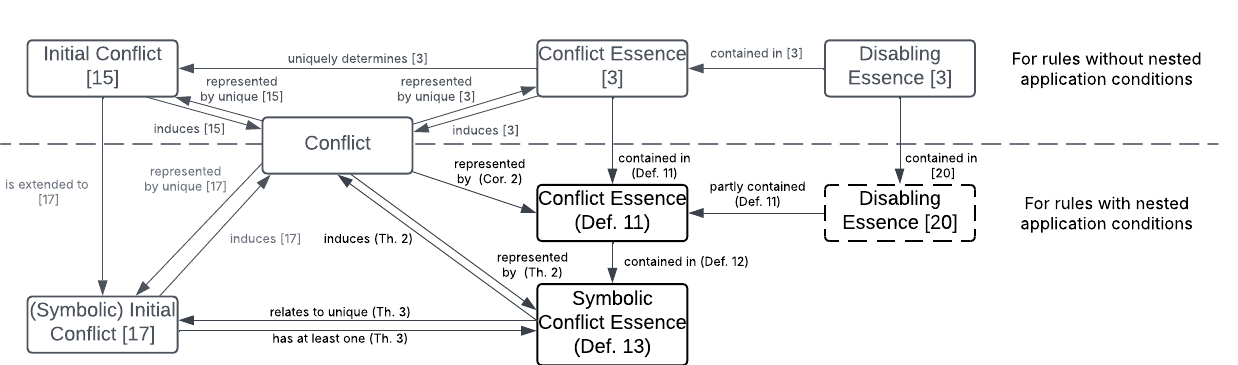}
	\vspace{-.5cm}
	\caption{Overview of concepts for conflicts between rules with and without nested application conditions (where square brackets contain literature references and round brackets refer to concepts and results of this paper).}
	\label{fig:overview}
	\vspace{-1.7em}
\end{figure}

\emph{Conflict reasons}~\cite{LBKST19} or \emph{conflict essences}~\cite{ACR19} show the core information of a conflict, namely, the elements whose deletion leads to a conflict.
Both approaches only consider plain rules. 
Since practical applications usually involve rules with conditions and these can be involved in conflicts, this is a significant gap in the research.

We have started to address this gap by introducing \emph{disabling essences} for rules with application conditions in \ac{ANF}~\cite{LauerKT25}. These essences describe how a rule can prevent the application of another rule by either deleting its used elements or by violating its application condition.
We showed that in each pair of conflicting transformations a disabling essence can be embedded. 
Conversely, an embedding of a disabling essence does not imply that the transformations are in conflict, which is a limitation of the theory. 
In this paper, we continue our work on disabling essences for rules with application conditions as follows: 
(1) We generalise our construction to support arbitrary nested application conditions. 
(2) We introduce \emph{conflict essences} for rules with application conditions, which describe situations in which both transformations cause a conflict for the other.  To achieve this, we present a construction for composing disabling essences and conflict essences.
(3) We define \emph{symbolic conflict essences}, which are \emph{conflict essences} with additional application conditions that ensure they are only embeddable in conflictual transformation pairs. 
(4) We relate symbolic conflict essences to \emph{initial conflicts} for rules with application conditions (see \cref{fig:overview}). 
Whereas contribution (1) is a rather straightforward extension of our work in~\cite{LauerKT25}, contributions (2)--(4) are results we did \emph{not} already obtain for conditions in \ac{ANF} there.

The paper is structured as follows:  \cref{chapter:relatedWork} presents related work on conflict analysis.  \Cref{chapter:example} presents the example used throughout the paper. 
\Cref{chapter:preliminaries} presents the formal preliminaries, \cref{chapter:essences} introduces our generalised construction for disabling essences, \cref{chapter:embedding} addresses conflict essences, and \cref{chapter:symbolic_conflict_essences} introduces symbolic conflict essences and relates them to initial conflicts.
Appendices contain all proofs (\cref{app:proofs}) and additional examples (\cref{app:detailed_examples}).

	\section{Related Work}
	\label{chapter:relatedWork}
	
We have thoroughly discussed related work on the topic of this paper in~\cite{LauerKT25}. 
Therefore, we here just focus on providing the immediate context of our work.

A \emph{critical pair} is a pair of conflicting transformations in a minimal context and the set of all critical pairs gives an overview of all conflicts of a graph transformation system (\emph{completeness}). 
Since critical pairs tend to be large objects, and even small transformation systems can induce large sets of critical pairs, two orthogonal research directions have been pursued: 
(1) reducing the number of critical pairs that must be computed while ensuring completeness and (2) identifying objects that contain the core information about conflicting transformations more compactly than critical pairs do. 
Both has been addressed first for \emph{plain} rules and subsequently also for rules with application conditions. 
Regarding (1), it has been found that (under certain technical conditions) so-called \emph{initial conflicts} are the smallest set of critical pairs that is still complete~\cite{LBOST18}; this also holds in the presence of application conditions (where initial conflicts then become \emph{symbolic})~\cite{LO21}. 
Regarding (2), i.e., computing the core information contained in a conflict, we~\cite{LBKST19} as well as Azzi et al.~\cite{ACR19} introduced \emph{(minimal) conflict reasons} and \emph{conflict essences}, respectively, as suitable such objects and also showed their unique correspondence to initial transformations. 
Importantly, these works only treat plain rules. 
That is, a notion of conflict essences and clarification of their relation to initial conflicts is still missing in the presence of application conditions. 
We started work in this direction in~\cite{LauerKT25}, were we lift the notion of disabling essences of Azzi et al. to the setting of rules with application conditions. 
However, we only addressed application conditions in \ac{ANF} and did not develop a composition of disabling essences to yield \emph{conflict essences} (considering the conflicts caused by each of the two transformations for the other simultaneously) or relate them to initial conflicts -- both of which we do in this paper for general application conditions.

	\section{Running Example}
	\label{chapter:example}
	To illustrate our work, we will consider two refactoring methods~\cite{fowler2018refactoring}.  
Typically, a sequence of refactorings is required to achieve a larger system design improvement.  Due to the potential of implicit conflicts and dependencies between refactorings, developers may have difficulties determining which refactorings to use and in what order.   The identification of potential conflicts and dependencies can help developers in specifying correct refactorings.

Assuming graphs that model the class design of software systems, we consider \cref{fig:examplerules} for two class model refactorings being specified as graph-based transformation rules.  These rules are depicted in an integrated form (as used, e.g., in the model transformation tool Henshin~\cite{ArendtBJKT10}), with annotations specifying which graph elements are deleted, preserved, or created.  The preserved and deleted elements form the \ac{LHS} of a rule,  and the preserved and created elements form the \ac{RHS}.  
Rule \emph{decapsulateAttribute} removes the getter and setter methods for a given attribute, thus inverting the well-known encapsulate attribute refactoring. Rule \emph{pullUpEncapsulatedAttribute} takes an attribute with its getter and setter methods and moves them to a superclass.

\begin{figure}[t]

	\centering 

	\includegraphics[scale=.35]{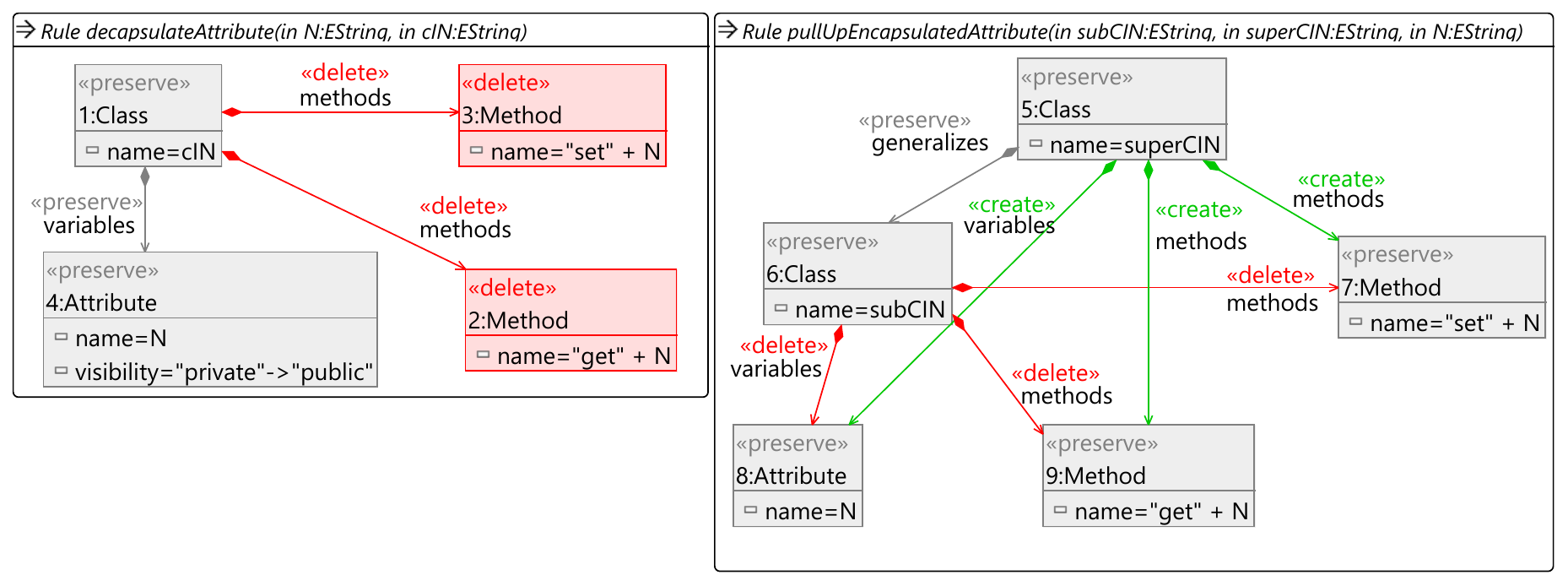}
	
	\vspace{-.7em}
	\caption{Henshin rules of our running example}
	\label{fig:examplerules}
\end{figure}

The rule \emph{pullUpEncapsulatedAttribute} has an additional nested application condition that is depicted in \cref{fig:exampleapplcond}.  This condition specifies that all subclasses of \textsf{5:Class} must also have a private attribute named \textsf{N} and corresponding getter and setter methods or a public attribute named \textsf{N}. 
In our presentation of the application condition, we omit morphisms and recurring parts of graphs.

\begin{figure}[t]
	
	\centering
	\includegraphics[scale=.55]{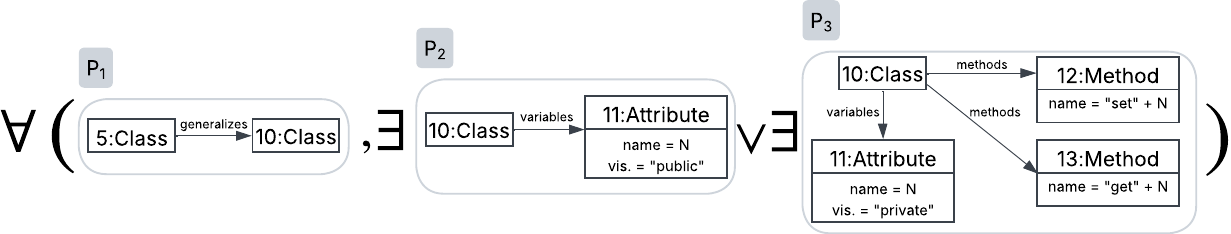}
	\vspace{-0.8em}
	\caption{Additional application condition for rule \emph{pullUpEncapsulatedAttribute}}
	\label{fig:exampleapplcond}
	\vspace{-1.5em}
\end{figure}

	\section{Preliminaries}
	\label{chapter:preliminaries}
	In this section, we present the preliminaries of our work. 
For reasons of space,  we omit standard background information and technical details that are required for the proofs, but which are not necessary for understanding the main ideas of the paper. 
In particular, we assume familiarity with basic category theory (including the concept of adhesiveness) and the double-pushout approach to graph transformation~\cite{LS05, EEPT06, EEGH15}.
A full introduction that covers all the necessary details can be found in~\cite{ACR19}. 
Here, we recall \emph{nested conditions} and concepts related to \emph{parallel independence}.

\emph{Throughout our work, we make the following assumptions: we work in an adhesive HLR category $(\mathcal{C}, \Mono)$ with \emph{all} pullbacks, initial pushouts over $\mathcal{M}$-mor\-phisms, an $\mathcal{M}$-initial object $\bot$, whose outgoing arrows are denoted by $!_A\colon \bot \hookrightarrow A$ for any object $A$, initial transformations, binary coproducts, and a unique \EMfactorization\ of cospans of morphisms.}

The semantics we assume for transformations in this paper is the double-push\-out approach
~\cite{EEGH15}. 
Thus, a \emph{transformation} $t\colon G \Longrightarrow_{\rho, m} H$ comes with a span $G \xhookleftarrow{g} D \xhookrightarrow{h} H$ of $\Mono$-morphisms that has been computed via two pushouts for some transformation rule $\rho = (\rleComplete, \mathit{ac})$ and a match $m$ of $\rho$ in $G$ (satisfying the application condition $\mathit{ac}$ of $\rho$); $G$ is the \emph{input (graph)} of $t$, the \emph{context (graph)} $D$ is computed as a pushout complement (by deleting the elements specified by $\rho$ from $G$), and $H$ is computed as a pushout (by creating the elements specified by $\rho$ on top of $D$). 
An \emph{ac-disregarding transformation} is one for which the match does not need to satisfy the rule's application condition.

\paragraph{Nested Conditions.} 
In the context of graph transformation, \emph{nested conditions and constraints} have been developed as a suitable (graphical) formalism for expressing properties of graphs and graph homomorphisms. 
They have been shown to be expressively equivalent to first-order logic on graphs~\cite{Rensink04, HabelP09}, but can be defined for arbitrary categories.

\begin{definition}[Nested condition]\label{def:conditions}
	A \emph{nested condition} over an object $P_0$ is defined recursively as follows: 
	$\true$ is a condition over $P_0$, 
	each Boolean combination of conditions over $P_0$ is a condition over $P_0$, 
	
	and $\exists\,(p_1 \colon P_0 \to P_1, d)$ is a condition over $P_0$ if $d$ is a condition over $P_1$.
	A morphism $q \colon P_0 \to G$ \emph{satisfies} a condition $c = \exists\,(p_1 \colon P_0 \to P_1,d)$, denoted by $q \models c$, if there is an $\Mono$-morphism $q_1 \colon P_1 \mono G \in \Mono$ so that $q = q_1 \circ p_1$ and $q_1 \models d$.
	Satisfaction for nested conditions composed by Boolean operators is defined as usual.
	We use the abbreviation
	$\forall\,(p_0 \colon P_0 \to P_1, d) \coloneqq \neg \exists\,(p_0 \colon P_0 \to P_1, \neg d)$.
	The object $P_1$ in a condition of the form $\exists\,(p_0 \colon P_0 \to P_1,d)$ is called \emph{existentially bound} and the object $P_1$ in a condition of the form $\forall\,(p_0 \colon P_0 \to P_1,d)$ is called \emph{universally bound}.
\end{definition}
In this paper, we make the following assumptions about all conditions: (i) All morphisms appearing in a condition are $\Mono$-morphisms, except potentially those starting at the object $P_0$ over which the condition is defined (i.e., the condition is in \emph{$\Mono$-normal form}); (ii) no isomorphisms appear in a condition; (iii) all negations are pushed inward as far as possible by eliminating double negations, replacing existential quantifiers with universal ones (and vice versa), and applying De Morgan's law (in this way, it is unambiguous whether an object is bound existentially or universally). 
\emph{These assumptions do not restrict the expressivity of the considered conditions;} the desired presentation can be achieved by applying well-established equivalence rules~\cite{HabelP09,Pennemann2009}. 

Conditions can be shifted along morphisms and left-translated along rules using operations called $\Shift$ and $\Left$,  respectively, in such a way that the semantics of the conditions are preserved~\cite{HabelP09,EEGH15}.  
We will use these operations in the following without recalling their definitions.

\paragraph{Parallel independence.}
Two \emph{parallel independent} transformations do not interfere with each other. This concept is formally captured by the Local Church--Rosser theorem~\cite{EEPT06}, which states that two such transformations can be applied in any order. 
We recall the definition of parallel independence and revisit the notion of a \emph{disabling essence}~\cite{ACR19} for analysing transformations that \emph{conflict} with each other, i.e., those that are not parallel independent. 

\begin{figure}[t]
	\centering
	\begin{tikzpicture}
			\matrix (m) [	matrix of math nodes,
		nodes in empty cells,
		row sep=.6em,
		column sep=1em,
		minimum width=1em]
		{
					& & B & & & C \\ 
					& & & \text{\scriptsize(3)}\\
			& & A_1' & & & A & & & A_2' \\ 
					& & & \text{\scriptsize(2)} & & & & \text{\scriptsize(4)}\\
			R_1 & & K_1 & & L_1 & \text{\scriptsize(1)} & L_2 & & K_2 & & R_2 \\ \\
			H_1 & & D_1 & &  & G_{} &  & & D_2 & & H_2 \\
			};
		\path[-stealth]
			\Edge{1-3}{\rmono}{fill=white}{k'_1}{1-6}
			\Cedge{}{}{}{3-3}
			\Edge{1-6}{}{fill=white}{c}{3-6}
			
			\Edge{3-3}{\rmono}{fill=white}{k_1}{3-6}
			\Cedge{}{}{}{5-3}
			
			\Edge{3-6}{}{fill=white}{a_1}{5-5}
			\Cedge{}{fill=white}{a_2}{5-7}
			
			\Edge{5-3}{\rmono}{fill=white}{l_1}{5-5}
			\Cedge{\lmono}{fill=white}{r_1}{5-1}
			\Cedge{}{}{}{7-3}
			
			\Edge{7-3}{\rmono}{fill=white}{g_1}{7-6}
			\Cedge{\lmono}{fill=white}{h_1}{7-1}
			
			\Edge{5-1}{}{fill=white}{n_1}{7-1}

			\Edge{3-9}{\lmono}{fill=white}{k_2}{3-6}
			\Cedge{}{}{}{5-9}
			
			\Edge{5-9}{\lmono}{fill=white}{l_2}{5-7}
			\Cedge{\rmono}{fill=white}{r_2}{5-11}
			\Cedge{}{}{}{7-9}
			
			\Edge{7-9}{\lmono}{fill=white}{g_2}{7-6}
			\Cedge{\rmono}{fill=white}{h_2}{7-11}
			
			\Edge{5-11}{}{fill=white}{n_2}{7-11}
			
			\Edge{5-7}{}{fill=white, near end}{d_1}{7-3}
			\Edge{5-5}{}{fill=white, near end}{d_2}{7-9}

			\Edge{5-5}{-,draw=white, line width=4pt}{}{}{7-6}
			\Edge{5-5}{}{fill=white}{m_1}{7-6}
			
			\Edge{5-7}{-,draw=white, line width=4pt}{}{}{7-6}
			\Edge{5-7}{}{fill=white}{m_2}{7-6}
		;
	\end{tikzpicture}
	\vspace{-1em}
	\caption{Parallel independence and construction of disabling essence}
	\label{fig:Construction_essences}
	\vspace{-1.5em}
\end{figure}

\begin{definition}[Parallel independence]\label{def:parallel_independence}
	Given two rules $\rho_j = \rleACCompact{j}$, with $j = 1,2$, two transformations $(t_1,t_2) = (\trans{G}{\rho_1}{m_1}{H_1}, \trans{G}{\rho_2}{m_2}{H_2})$ are \emph{parallel independent} if there are morphisms $d_1 \colon L_2 \to D_1$ and $d_2 \colon L_1 \to D_2$ to the contexts of $t_1$ and $t_2$ so that $m_2 = g_1 \circ d_1$, $m_1 = g_2 \circ d_2$, $h_1 \circ d_1 \models ac_2$, and $h_2 \circ d_2 \models ac_1$ (as shown in \cref{fig:Construction_essences}).
	If the morphism $d_1$ does not exist, or if $h_1 \circ d_1 \not \models ac_2$, we say that $t_1$ \emph{causes a conflict} for $t_2$.
	We say that 
	$t_1$ and $t_2$ are \emph{parallel independent ac-disregarding} if both morphisms $d_j$ exist.
\end{definition}

A transformation via a plain rule causes a conflict with another such transformation if it deletes items that the second transformation also requires. 
A \emph{disabling essence} captures exactly this problematic structure and its occurrence 
in conflicting transformations. 
The construction can be introduced for arbitrary cospans of the LHSs of rules, leading to the notion of a \emph{proto-essence}, whereas only matches for transformations lead to a \emph{disabling essence}. 

\begin{definition}[Proto-/disabling essence~\cite{ACR19}]\label{def:essential_conflicts}
	Given two plain rules $\rho_j = \rleShort{L_j}{K_j}{R_j}$ 
	and morphisms $m_j \colon L_j \to G$, with $j=1,2$, the \emph{proto-essence for $(m_1, m_2)$ under $l_1$}, denoted by $\ess_{l_1}(m_1, m_2)$, is defined as 
	$(a_1 \circ c, a_2 \circ c)$, where the squares $(1)$ and $(2)$ in \cref{fig:Construction_essences} are pullbacks and $(3)$ is the initial pushout over $k_1$. $C$ is called the \emph{essence object}. 
	A proto-essence $\ess_{l_1}(m_1, m_2)$ is a \emph{disabling essence} if $m_1$ and $m_2$ are matches for transformations $t_1$ and $t_2$ as in \cref{fig:Construction_essences}; in that case, it is denoted as $\essdbl(t_1,t_2)$ and called the \emph{disabling essence of $(t_1,t_2)$}. 
\end{definition}
For ease of notation,  we will sometimes drop the morphism $c$ in proto-essences if it is not needed.
Crucially, transformation $t_1$ causes a conflict for $t_2$ if and only if the disabling essence $\essdbl(t_1,t_2)$ is not trivial, i.e., if it is not the pair of morphisms $(!_{L_1}\colon\bot \to L_1, !_{L_2}\colon \bot \to L_2)$~\cite[Theorem~3.13]{ACR19}.

\paragraph{Initial conflicts.}
Starting with the notion of critical pairs, parallel (in)dependence of transformation pairs has been related to the kind of transformation pairs they \emph{extend}, culminating in the concept of \emph{initial conflicts}~\cite{LBKST19, LO21} and, for plain rules, the construction of initial conflicts from \emph{conflict reasons}~\cite[Theorem 4.4]{ACR19}.
In this paper, we continue this line of research to complete the picture for rules with application conditions. 
The following definitions briefly introduce the necessary concepts. 
For a detailed introduction to symbolic transformation pairs, the construction of the corresponding application conditions, and initial conflicts, we refer to~\cite{EGHLO12, ACR19, LO21}.

\begin{definition}[Extension of transformation. Initial transformation pair]\label{def:extension}
	Given a transformation $t = \trans{G}{\rho}{m}{H}$, an \emph{extension} of $t$ is a transformation $t' = \trans{G'}{\rho}{m'}{H'}$ such that there is an \emph{extension morphism} $f \colon G \to G'$ with $f \circ m = m'$ such that $f$ is a match in $G'$ for the transformation $t\colon G \hookleftarrow D \hookrightarrow H$ when considered as rule. 

	Given a transformation pair $(t_1, t_2)$, an \emph{initial transformation pair} for $(t_1, t_2)$ is an ac-disregarding transformation pair $(t_1^I, t_2^I)$ such that (i) there is a common extension morphism $f^I$ from the transformation $t_1^I$ to $t_1$ and from $t_2^I$ to $t_2$ and (ii) the transformation pair $(t_1^I, t_2^I)$ factors uniquely through any other transformation pair that $(t_1, t_2)$ extends by a common extension morphism.
\end{definition}

\begin{definition}[Symbolic transformation pair and initial conflict]\label{def:initial-conflict}
	Given two rules $\rho_1$ and $\rho_2$, a \emph{symbolic transformation pair} $\mathit{stp}_G = (\mathit{tp}_G, \mathit{ac}_G, \mathit{ac}^*_G)$ consists of a pair $\mathit{tp}_G\colon \transPair{H_1}{\rho_1}{m_1}{G}{\rho_2}{m_2}{H_2}$ of ac-disregarding transformations and conditions $\mathit{ac}_G, \mathit{ac}^*_G$ over $G$ 
	defined as $\mathit{ac}_G = \Shift(m_1, ac_1) \wedge \Shift(m_2, ac_2)$ and $\mathit{ac}^*_G = \neg (ac^*_{G,d_1} \wedge ac^*_{G,d_2})$, where $ac^*_{G,d_1} = \mathrm{false}$ if $d_1$ does not exist and 
	\begin{equation*}
		ac^*_{G,d_1} = \Left(G \hookleftarrow D_1 \hookrightarrow H_1, \Shift(h_1 \circ d_1, ac_2))
	\end{equation*}
	otherwise (analogously for $ac^*_{G,d_2}$).
	
	An \emph{initial conflict} for two rules $\rho_1$ and $\rho_2$ is a symbolic transformation pair $\mathit{stp}_K = (\mathit{tp}_K, \mathit{ac}_K, \mathit{ac}^*_K)$ such that
	$\mathit{tp}_K$ is (i) an initial transformation pair for itself and (ii) is \emph{conflict-inducing}, i.e., it extends to at least one conflicting pair of transformations $\transPair{H_1}{m_1}{\rho_1}{G}{m_2}{\rho_2}{H_2}$.
\end{definition}
In the construction of symbolic transformation pairs, $\mathit{ac}_G$ indicates when extending transformation pairs satisfy the rules' application conditions and $\mathit{ac}^*_G$ indicates when an extending pair of transformations is in conflict.
A transformation pair is in conflict if and only if an initial conflict $\mathit{stp}_K = (\mathit{tp}_K, \mathit{ac}_K, \mathit{ac}^*_K)$ extends to it and the extension morphism satisfies the condition $\mathit{ac}_K \wedge \mathit{ac}^*_K$. 
Given two rules, the initial conflict $(\transPair{R_1 + L_2}{i_{L_1}}{\rho_1}{L_1 + L_2}{i_{L_2}}{\rho_2}{L_1 + R_2}, ac_{L_1+L_2}, ac^*_{L_1+L_2})$, where $i_{L_1}$ and $i_{L_2}$ are obtained by the universal property of the coproduct, plays an important role in \cref{chapter:symbolic_conflict_essences} of this paper, as it is the only initial conflict that indicates a conflict caused solely by violations of the application conditions. 
We refer to it as the \emph{symbolic initial conflict}.

	\section{Construction and Embedding of Disabling Essences}
	\label{chapter:essences}
	
In this section, we extend the notion of \emph{disabling essences} 
to cover arbitrary nested application conditions and develop a concept for embedding these into transformation pairs. 
To convey some intuition, we first continue our running example 
and then develop the formal details. 

\begin{example}[Disabling essences]\label{ex:disabling_essences}
\Cref{fig:deleting_essences} shows a selection of the disabling essences for the two rules in \cref{fig:examplerules}.
 The graphs with gray borders show how the LHSs and/or application condition graphs need to overlap in a pair of transformations for a conflict to potentially occur. 
 The inner graphs with orange borders constitute the essence graphs, and the identifiers in the nodes indicate how the graphs are mapped to the LHSs and/or application condition graphs. 

\begin{figure}[t]
		\includegraphics[width = \textwidth]{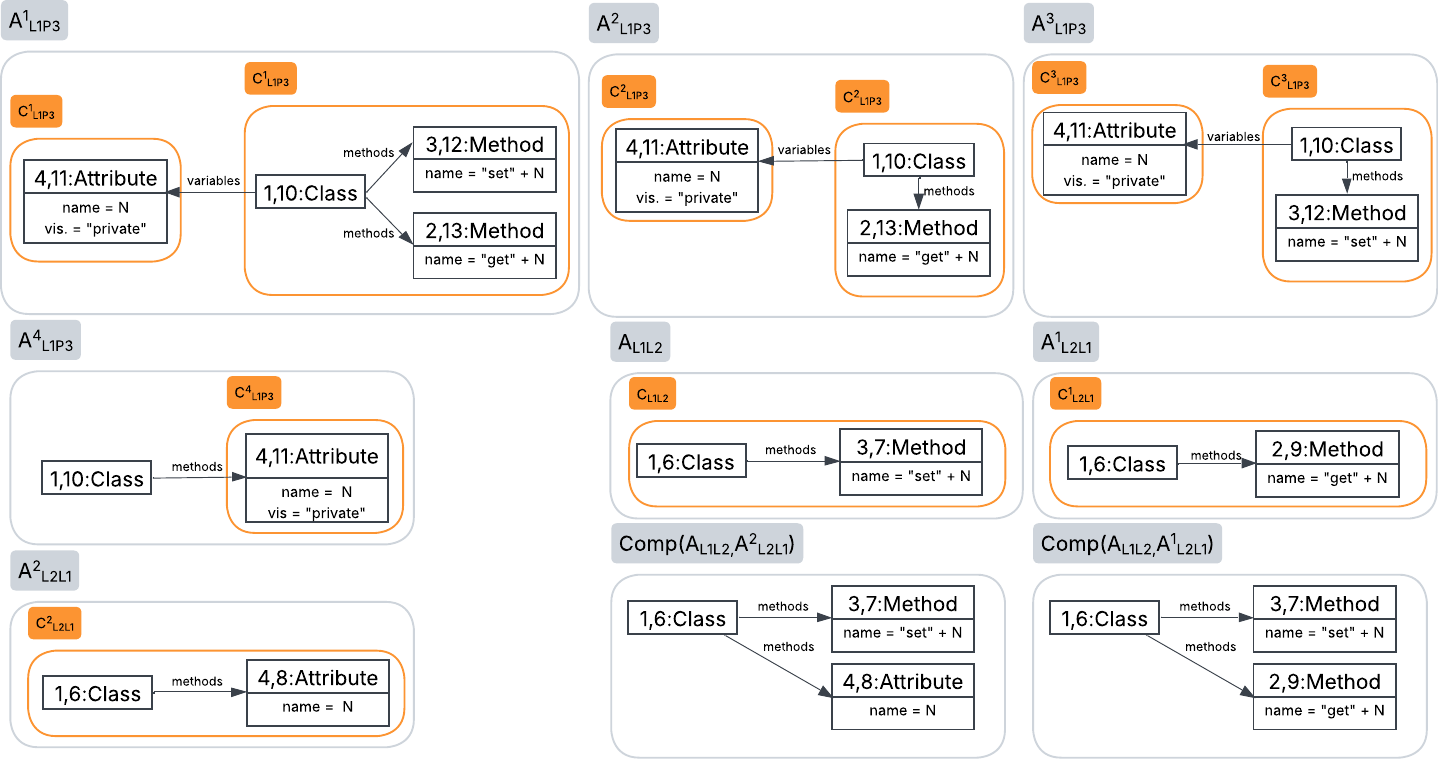}
		\vspace{-1em}
		\caption{Disabling essences of \emph{decapsulateAttribute} and \emph{pullUpEncapsulatedAttribute} and composed essences}\label{fig:deleting_essences}
		\vspace{-1.5em}
	\end{figure}

First, both rules can conflict with each other by matching to the same \textsf{Methods} and/or \textsf{Attribute}. 
The disabling essences \textsf{A\textsubscript{L1L2}, A\textsuperscript{1}\textsubscript{L2L1}}, and \textsf{A\textsuperscript{2}\textsubscript{L2L1}} can be computed using the methods developed for plain rules in~\cite{LBKST19,ACR19}. 
Importantly, when the two rules are matched as indicated by one of these essences, the resulting transformation pair is guaranteed to be conflicting. 

Second,  the \emph{decapsulateAttribute} rule can conflict with the \emph{pullUpEncapsulatedAttribute} rule by invalidating its application condition. 
To capture the core structure of these conflicts in a manner similar to that of classic disabling essences for plain rules, we expect results as displayed in the first and second row of \cref{fig:deleting_essences} (cases A\textsuperscript{1}\textsubscript{L1P3}--A\textsuperscript{4}\textsubscript{L1P3}). 
The  \emph{decapsulateAttribute} rule can invalidate \emph{pullUpEncapsulatedAttribute's} application condition by deleting
an occurrence of $P_3$ in \cref{fig:exampleapplcond}, for which there are different possibilities (e.g.  deleting the getter or setter method, setting the visibility to public, or combinations thereof. 
While not possible in this example (given the semantics of the \emph{decapsulateAttribute} rule), in principle  \emph{pullUpEncapsulatedAttribute's} application condition can also be invalidated by a rule creating an occurrence of the graph $P_1$ or deleting an occurrence of $P_2$. 
This occurs, for instance, with the rule \emph{encapsulateAttribute}, the inverse of \emph{decapsulateAttribute}, which switches the visibility of the attribute from public to private. 
\end{example}

On closer inspection of the example, it becomes apparent that, in principle, disabling essences between rules and application conditions can be found through overlaps in the same way as disabling essences between plain rules.  
The central structural difference is that these overlaps are no longer restricted to the LHSs of the rules, but also involve the application condition graphs. 
Moreover, conflicts cannot only arise from the deletion of existentially bound graphs of the application condition (as illustrated above) but also from the creation of universally bound graphs: 
A further refactoring rule could invalidate applications of \pullAttribute{} by creating occurrences of $P_1$ (newly inserting inheriting classes). 
Importantly, (i) at points of bifurcation (like the disjunction in our example) all options have to be considered  because each one can be the source of a conflict, and (ii) creating actions can also cause conflicts.
Finally, the central semantic difference from the situation with plain rules is that matching a pair of rules as indicated by a disabling essence no longer  guarantees a conflicting transformation pair. 
In our example, the deletion of an occurrence of $P_3$ may be compensated for by the presence of an occurrence of $P_2$, and vice versa. 
Moreover, even though it is semantically dubious in our specific example, there could be several occurrences of the graphs $P_2$ and $P_3$, such that deleting one of them would not invalidate the application condition.

In this section, we present the construction of disabling essences for rules with \emph{arbitrary nested application conditions}. 
We started this work in~\cite{LauerKT25}, where we computed disabling essences for application conditions in \ac{ANF}. 
This is a special case in which a condition is essentially a chain of morphisms (i.e., there are no points of bifurcation caused by conjunction or disjunction). 
The restriction in~\cite{LauerKT25} mainly served to simplify the presentation. 
When conditions are in \ac{ANF}, disabling essences can be computed using the construction developed in~\cite{ACR18, ACR19} for plain rules,  which is then applied iteratively along the chain of morphisms that constitute the condition.
For arbitrary conditions, whose morphisms form a tree structure (which we will introduce next; compare also~\cite{RensinkC25}), the situation is slightly more involved. 
The construction must be applied iteratively along each path from the root to a leaf of the tree.

\begin{definition}[Tree structure of a condition]\label{def:tree_structure_condition}
	Given a condition $c$ over an object $P_0$, the tree structure of $c$ is a graph $\Tree(c) = (V_{\Tree(c)}, E_{\Tree(c)}, \src_{\Tree(c)},\allowbreak \tar_{\Tree(c)})$ which is recursively constructed as follows: 
	If $c = \bigodot_{i \in \mathcal{I}} Q (p_1^i \colon P_0 \to P_1^i, d_i)$ for some index set $\mathcal{I}$, $\odot \in \{\vee, \wedge\}$, and $Q \in \{\exists, \forall\}$, we set $V_{\Tree(c)} = \bigcup_{i \in \mathcal{I}} V_{\Tree(d_i)} \cup \{P_0\}$, $E_{\Tree(c)} = \bigcup_{i \in \mathcal{I}} (E_{\Tree(d_i)} \cup \{p_1^i\})$, $\src_{\Tree(c)} = \bigcup_{i \in \mathcal{I}} (\src_{\Tree(d_i)} \cup\allowbreak \{(p_1^i, P_0)\})$, and $\tar_{\Tree(c)} = \bigcup_{i \in \mathcal{I}} (\tar_{\Tree(d_i)} \cup \{(p_1^i, P_1^i)\})$. 
	If $c = \true$ is a condition over  $P_i$, we set $\Tree(\true) = ({P_i}, \emptyset, \emptyset, \emptyset)$, and
	$\Tree(c) = \Tree(c')$ if $c = \neg c'$.
	Each node of $\Tree(c)$ without outgoing edges is called a \emph{leaf}, the set of all leaves is denoted by $ \Leafs(\Tree(c))$. 
\end{definition}

\begin{example}[Tree structure]\label{ex:treeStructure}
\Cref{fig:exampleapplcond} shows the 
application condition of the \emph{pullUpEncapsulatedAttribute} rule.
The tree structure of this condition has four nodes starting with the LHS of the rule as root, which is embedded into graph $P_1$, which is again embedded into $P_2$ and into $P_3$, the two leaves of the tree.
\end{example}

\begin{figure}[t]
	\centering
	\begin{tikzpicture}
		\matrix (m) [	matrix of math nodes,
		nodes in empty cells,
		row sep=.5em,
		column sep=.8em,
		minimum width=1em]
		{
			B & & C \\ 
			& \text{\scriptsize (IPO)} \\
			A'_{} & & A_{} & & P_i & & & L_2\\
			& \text{\scriptsize (PB)} & & \text{\scriptsize (PB)} & \\ 
			K_1 & & L_1 & & L_1P_n \\
		};
		\path[-stealth] 
		\Edge{1-1}{}{above}{q_i'}{1-3}
		\Cedge{}{}{}{3-1}
		\Edge{3-1}{\rmono}{fill=white}{q_i}{3-3}
		\Cedge{\lmono}{}{}{5-1}
		
		\Edge{3-3}{}{above}{b_i}{3-5}
		\Cedge{\lmono}{fill=white}{a_1}{5-3}
		\Edge{5-1}{\rmono}{fill=white}{l_1}{5-3}
		\Edge{5-3}{}{fill=white}{p_{L_1}}{5-5}
		\Edge{3-5}{\lmono}{right}{e_{P_i}}{5-5}
		\Edge{3-8}{}{above}{p_{i-1} \circ \ldots \circ p_1}{3-5}
		
		\Edge{1-3}{}{fill=white}{c}{3-3}
		;
		
	\end{tikzpicture}
	\vspace{-.3cm}
	\caption{Construction of proto-essences for rules with application conditions}\label{fig:conflict_construction}
	\vspace{-1.5em}
\end{figure}

Next, we define \emph{proto-essences} for rules with application conditions; these are later filtered to obtain \emph{disabling essences}.
For their construction, we consider each path $p$ in the tree structure of the second rule's application condition that starts at the root and ends at a leaf. 
For each overlap (cospan) between the \ac{LHS} $L_1$ of the first rule and a leaf $P_n$, 
we iteratively calculate whether applying the rule can destroy an occurrence of a graph $P_i$ contained within the path. 
We achieve this by applying the computation of proto-essences of plain rules to $L_1$ and $P_i$ for all $i \le n$. 
If such an object is found, a \emph{proto-essence by deletion} has been identified. 
To account for the fact that the first rule might also invalidate the second rule's application condition by creating an occurrence of one of its objects $P_i$, we also define \emph{proto-essences by insertion}. 
These are computed analogously to the ones by deletion, just using the \ac{RHS} $R_1$ of the first rule instead of its \ac{LHS} $L_1$.

\begin{definition}[Proto-essence for rules with conditions]\label{def:essencesAC}
	Let two rules $\rho_j = (L_j \xhookleftarrow{l_j} K_j \xhookrightarrow{r_j} R_j, \mathit{ac}_j)$, with $j=1,2$, the tree-representation $\Tree(\mathit{ac}_2)$ of  $\mathit{ac}_2$, an object $P_n \in \Leafs(\Tree(\mathit{ac}_2)) \cup \{L_2\}$, a path $p_{L_2} = p_n \circ \ldots p_1$ from $L_2$ to $P_n$ in $\Tree(ac_2)$,  and a cospan $(e_{L_1} \colon L_1 \to L_1P_n, e_{P_n} \colon P_n \mono L_1P_n) \in \Epi$ be given. 
The \emph{proto-essence by deletion} of $e_{L_1}$ and $e_{P_n}$ at level $1 \leq i \leq n$, denoted as $\DPEss{i}{e_{L_1}}{e_{P_n}}$, is defined as
\begin{align*}
	\DPEss{0}{e_{L_1}}{e_{P_n}} & \coloneqq \Ess{l_1}{e_{L_1}}{e_{P_1}\circ p_1}  , \\ 
	\DPEss{i}{e_{L_1}}{e_{P_n}} & \coloneqq \begin{cases}
		\DPEss{i-1}{e_{L_1}}{e_{P_n}} &\text{if } \DPEss{i-1}{e_{L_1}}{e_{P_n}} \neq (!_{L_1}, !_{P_{i-1}}) \\
		\Ess{l_1}{e_{L_1}} {e_{P_i}} &\text{if } \DPEss{i-1}{e_{L_1}}{e_{P_n}} = (!_{L_1}, !_{P_{i-1}})
	\end{cases} ,
\end{align*}
where 
$p_n \circ \ldots \circ p_{i+1}$ is the path from $P_i$ to $P_n$ in $\Tree(\mathit{ac}_2)$ (and $P_i$ is an object contained in the path from $L_2$ to $P_n$) and $e_{P_i} := e_{P_n} \circ p_n \circ \ldots \circ p_{i+1}$ (see \cref{fig:conflict_construction}).
The \emph{set of proto-essences by deletion of $\rho_1$ in $\rho_2$}, denoted by $\DPEss{}{\rho_1}{\rho_2}$, is the set of all proto-essences by deletion at level $n$ of morphism pairs $(e_{L_1}\colon L_1 \to L_1P_n, e_{P_n}\colon P_n \mono L_1P_n) \in \Epi$ of all leaves $P_n$ contained in $\Tree(\mathit{ac}_2)$.

The set of \emph{proto-essences by insertion}, denoted by $\IPEss{}{\rho_1}{\rho_2}$ is defined as $\IPEss{}{\rho_1}{\rho_2} \coloneq \DPEss{}{\rho^{-1}_1}{\rho_2}$ where we stop the recursion at $i=1$.

\end{definition}

\begin{example}[Construction of a proto-essence by deletion]\label{ex:construction_deleting_essence}
	In this example, we sketch the computation of a proto-essence by deletion between $L_1$ of \decapAttribute{} and the path $L_2, p_1, P_1, p_3,P_3$ in the application condition of \pullAttribute{}; a detailed version is provided as \cref{ex:construction_deleting_essence_complete} in \cref{app:detailed_examples}.
	\Cref{fig:ex_construction_proto_essence_compact1} shows the crucial part of the computation, not displaying graphs that are already known or identical to other graphs displayed. 

	\begin{figure}[t]
		\includegraphics[width = \textwidth]{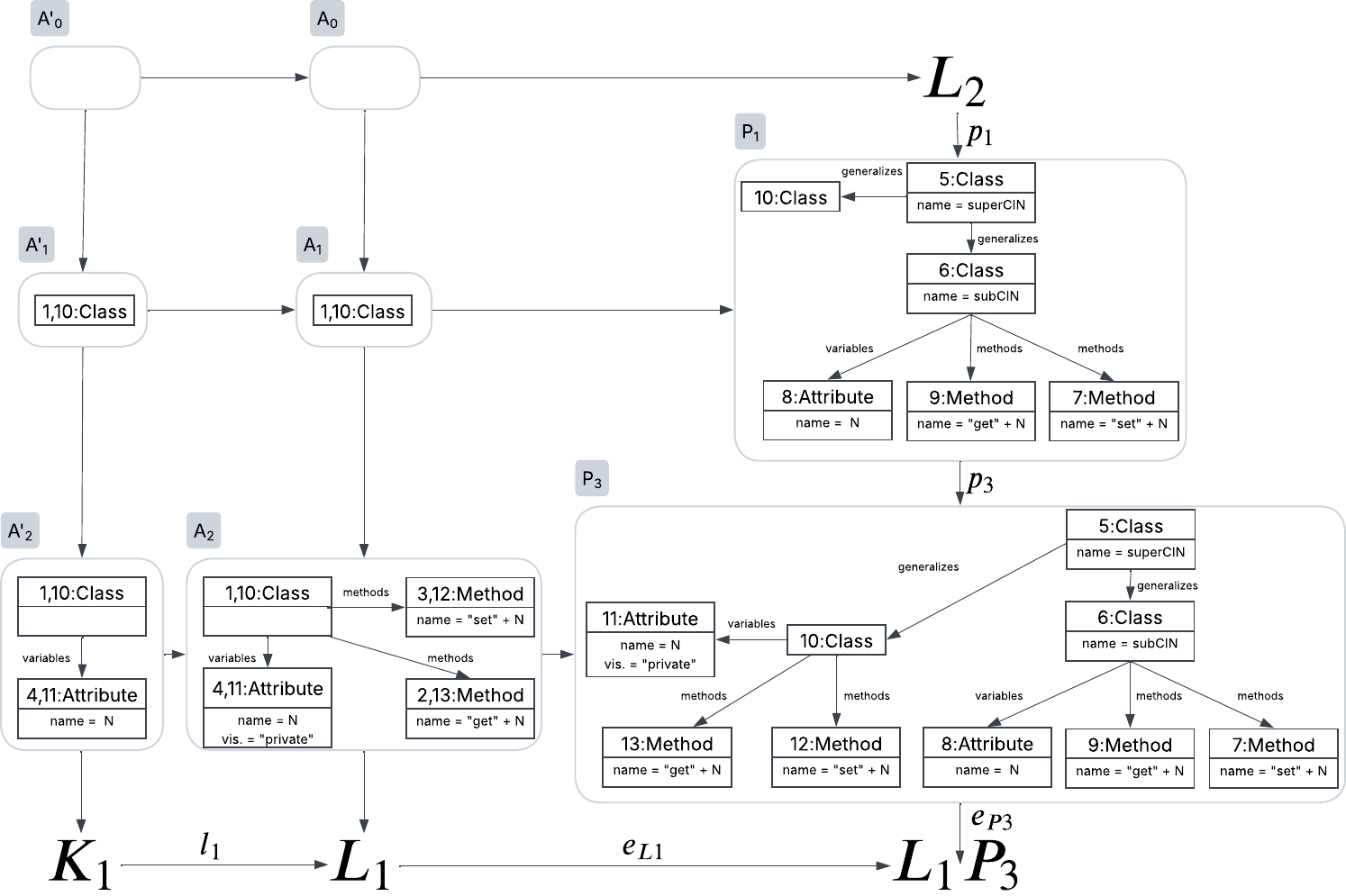}
		\vspace{-1.5em}
		\caption{Central extract of the construction of proto-essences by deletion of the rules \decapAttribute\ and \pullAttribute }\label{fig:ex_construction_proto_essence_compact1}
	\end{figure}
	
	The construction starts with the overlap $L_1P_3 = P_3$ of $L_1$ and $P_3$, to which $P_3$ is mapped via the identity morphism $e_{P_3} = \mathit{id}_{P_3}$ and $L_1$ by the morphism that is fixed by mapping \textsf{1:Class} to \textsf{10:Class}. 
	It proceeds by computing the proto-essence by deletion at level 0, which is done by computing the pullback for $L_1$ and $L_2$ in $L_1P_3$, resulting in $A_0$ (the empty graph), followed by computing the pullback of $A_0$ and $K_1$ in $L_1$, resulting in $A_0'$ (also the empty graph). 
	That $A_0$ and $A_0'$ are isomorphic indicates that the proto-essence at this level will be trivial; we therefore continue to further levels, which are computed analogously. 
	The computation at level 1 is done using $P_1$ instead of $L_2$, leading to isomorphic graphs $A_1$ and $A_1'$, again indicative of a trivial proto-essence at this level. 
	At level 2, we then use $P_3$ instead of $L_2$ and compute $A_2$ and $A_2'$, which are not isomorphic. 
	Because the proto-essences computed at lower levels were trivial, the proto-essence by deletion at level 2 turns out to be \emph{the} proto-essence by deletion for the considered overlap $L_1P_3$.
	
	The actual essence graph is obtained by computing the initial pushout over the morphism from $A_2'$ to $A_2$. 
	It indicates as potential cause of a conflict that \decapAttribute{} simultaneously deletes the getter and setter methods and disables the attribute's visibility necessary for \pullAttribute{}'s application condition to be valid. 
	It is displayed as \textsf{A\textsuperscript{1}\textsubscript{L1P3}} in \cref{fig:deleting_essences}. 
\end{example}

To define \emph{disabling essences}, we need a notion of \emph{embeddability} for proto-essences. 
To be able to provide a single definition of \emph{embeddability} that also covers the further kinds of essences we introduce in this paper, we first define \emph{rule overlaps} (capturing the structure all essences have) and then define embeddings for rule overlaps.

\begin{definition}[Rule overlap. Embedding of rule overlap]\label{def:embedding:spans}
	Given two rules $\rho_1 = \rleAC{1}$, $\rho_2 = \rleAC{2}$, an \emph{overlap} of $\rho_1$ and $\rho_2$ is a tuple $\mathit{ro} = (b_j \colon A \to P_j, b_i \colon A \to P_i, p_{L_1} \colon L_1 \to P_j, p_{L_2} \colon L_2 \to P_i)$. 
	Here, $P_j = L_1$ and/or $P_i = L_2$ is allowed. 
	A pair of morphisms $(m_1^j \colon P_j \to G, m_2^i \colon P_j \to G)$ is an \emph{embedding of $\mathit{ro}$ in a transformation pair $(t_1,t_2) = (\trans{G}{\rho_1}{m_1}{H_1}, \trans{G}{\rho_2}{m_2}{H_2})$} if $m_1 = m_1^j \circ p_{L_1}$, $m_2 = m_2^i \circ p_{L_2}$ and the square (1) in \cref{fig:embedding_disabling_essence} is a pullback.

		\begin{figure}[t]
			\centering
		\begin{tikzpicture}
			\matrix (m) [	matrix of math nodes,
			nodes in empty cells,
			row sep=1em,
			column sep=.8em,
			minimum width=1em]
			{
				& &     & &     & C&     & A      &     &  & L_2'    & & K_2'    & & \\
				& &     & &     & &     &        &     &\text{\scriptsize (2)} &     &\text{\scriptsize (3)} &     & & \\
				R_1 & & K_1 & & L_1 & & P_j/L_1 &    \text{\scriptsize(1)}    & P_i & & L_2 & & K_2 & & R_2 \\				
				& &     & &     & &     &        &     & &     & &     & & \\
				H_1	& & D_1 & &     & &     & G &     & &     & & D_2 & & H_2 \\
			};
			
			\path[-stealth]
			
			(m-1-6)edge node[fill=white]{\edge{c}}(m-1-8)
			
			(m-1-8) edge[] node [fill=white] {\edge{b_j}} (m-3-7)
			(m-1-8) edge[] node [fill=white] {\edge{b_i}} (m-3-9)
			
			(m-3-3) edge[\lmono] node [fill = white] {\edge{r_1}} (m-3-1)
			edge[\rmono] node [fill = white] {\edge{l_1}} (m-3-5)
			edge[] node [] {} (m-5-3)
			
			(m-3-1)edge[] node [fill = white] {\edge{n_1}} (m-5-1)
			
			(m-5-3)edge[\lmono] node [fill = white] {\edge{h_1}} (m-5-1)
			edge[\rmono] node [fill = white] {\edge{g_1}} (m-5-8)
			
			(m-3-5)edge[] node [above] {\edge{p_{L_1}/ id_{L_1}}} (m-3-7)
			edge[bend right = 10] node [fill = white,near end] {\edge{m_1}} (m-5-8)
			
			(m-3-7)edge[] node [fill = white] {\edge{m_1^j}} (m-5-8)
			
			(m-3-9)edge[] node [fill = white] {\edge{m_2^i}} (m-5-8)
			
			(m-3-11)edge[] node [fill = white] {\edge{p_{L_2}}} (m-3-9)
			edge[bend left = 10] node [fill = white,near end] {\edge{m_2}} (m-5-8)
			
			(m-3-13)edge[\lmono] node [fill=white]{\edge{l_2}}(m-3-11)
			edge[\rmono] node [fill=white]{\edge{r_2}}(m-3-15)
			edge[] node []{}(m-5-13)
			
			(m-5-13) edge[\lmono] node [fill=white] {\edge{g_1}}(m-5-8)
			edge[\rmono] node [fill=white] {\edge{h_1}}(m-5-15)
			
			(m-3-15)edge[] node [fill=white] {\edge{n_2}}(m-5-15)
			
			\Edge{1-11}{}{}{}{1-8}
			\Cedge{}{}{}{3-11}
			
			\Edge{1-13}{\lmono}{fill=white}{l_2'}{1-11}
			\Cedge{}{}{}{3-13}
			
			;

		\end{tikzpicture}
		\vspace{-2em}
		\caption{Diagram for the embedding of rule overlaps and ac-conflicting disabling essences. Consider the alternatives $P_j$ for the embedding of rule overlaps and $L_1$ (separated by ``/'') for ac-conflicting disabling essences}\label{fig:embedding_disabling_essence}
		\vspace{-1.5em}
	\end{figure} 
	
\end{definition}

\emph{Disabling essences for rules with application conditions} can now be defined analogously to~\cite{LauerKT25}. 
For reasons of space, the exact construction will not be presented in this section, as it involves a somewhat lengthier construction to \emph{shift} proto-essences by insertion so that both kinds of essences are defined as the same kinds of spans. 
Intuitively, the definition of disabling essences filters out proto-essences that can never indicate a conflict by requiring that a disabling essence (i) is non-trivial, (ii) can actually be embedded in a transformation pair, and (iii) indicates the creation of a universally bound object or the deletion of an existentially bound one (but not vice versa); this final requirement is well-defined if conditions are given as assumed here, i.e., if all negations are pushed inwards as far as possible. 
Importantly, given the more general definition of proto-essences here, the requirements for disabling essences remain identical to those presented in~\cite{LauerKT25}. 
Moreover, each result presented in~\cite{LauerKT25} directly transfers to our extended construction. 
We have included the concrete definition of \emph{disabling essences} and re-proved the main results of~\cite{LauerKT25} in Appendix~\ref{sec:previous_results}.

	\section{From Disabling Essences to Conflict Essences}
	\label{chapter:embedding}

The disabling essences constructed in the previous section are ``asymmetric''  in that they only indicate potential conflicts in one direction. 
Next, we 
introduce \emph{conflict essences for rules with application conditions} that capture potential conflicts between rules in both directions simultaneously. 
These conflict essences are defined by 
composing the disabling essences for both directions.

When we start to compose disabling essences, i.e.,  when we consider the effects of both transformations on each other, some of the disabling essences that we previously computed become obsolete: 
A disabling essence $de = (b_1 \circ c \colon C \to L_1, b_i \circ c \colon C \to P_i) \in \DEss{\rho_1}{\rho_2}$ indicates that the application of $\rho_1$ can prevent the application of $\rho_2$ by invalidating its application condition (by destroying an occurrence of $P_i$). 
However, when the roles are switched, the overlap used to compute $de$ may instead produce a disabling essence $de' = (b_2 \circ c' \colon C \to L_2, b_1' \circ c' \colon C' \to L_1) \in \DEss{\rho_2}{\rho_1}$, which indicates that the application of $\rho_2$ can prevent the application of $\rho_1$ by destroying its match. 
In other words, when this disabling essence is embedded in a pair of transformations,  
the underlying ac-disregarding transformations are already in conflict. 
We introduce \emph{ac-conflicting rule overlaps} which imply ac-dis\-re\-gard\-ing parallel independence (i.e., neither transformation destroys the match of the other) when embedded in a transformation pair. 
Once again, we use \emph{rule overlaps} to ensure that the definition can be applied to any kind of essence considered in this paper.

\begin{definition}[Ac-conflicting rule overlap]\label{def:ac-conflicting}
	Given rules $\rho_1$ and $\rho_2$, a rule overlap $\mathit{ro}$ of them is called \emph{ac-conflicting} if there is an ac-disregarding parallel independent transformation pair $(t_1,t_2)$ via $\rho_1$ and $\rho_2$ into which $\mathit{ro}$ embeds.
\end{definition}

The following result shows that each transformation pair in which an ac-conflicting rule overlap is embedded is parallel independent ac-disregarding. 

\begin{proposition}[Disjoint sets of ac-conflicting and non-ac-conflicting rule overlaps]\label{thm:disjointness_ac-disregarding}
	If a rule overlap $\mathit{ro}$ 
	is \emph{ac-conflicting}, then each transformation pair $(t_1,t_2)$ such that $\mathit{ro}$ is embeddable in $(t_1,t_2)$ is parallel independent ac-disregarding. 
	Moreover, if a non-ac-conflicting rule overlap is embeddable in a transformation pair $(t_1,t_2)$, this pair is parallel dependent ac-disregarding.
\end{proposition}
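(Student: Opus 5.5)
The plan is to show that, for a fixed rule overlap $\mathit{ro} = (b_j, b_i, p_{L_1}, p_{L_2})$, the ac-disregarding (in)dependence of a transformation pair into which $\mathit{ro}$ embeds is determined by $\mathit{ro}$ alone. Then either all such pairs are independent or none is. Both claims follow from this: the first is the ``all'' case, and the second is its contrapositive. If a non-ac-conflicting overlap embedded into an ac-disregarding parallel independent pair, that pair would witness $\mathit{ro}$ being ac-conflicting by \cref{def:ac-conflicting}, which is a contradiction.

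The key tool is the characterisation of ac-disregarding conflicts for plain rules by disabling essences, \cite[Theorem~3.13]{ACR19}. The underlying plain transformation $t_1$ causes a conflict for $t_2$ iff $\essdbl(t_1,t_2)$ is non-trivial, and symmetrically. So I would show that $\essdbl(t_1,t_2)$ and $\essdbl(t_2,t_1)$ can be computed from $\mathit{ro}$, independently of $G$ and of the embedding $(m_1^j,m_2^i)$.

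The first step is to compute the pullback of $m_1$ and $m_2$. Since $m_1 = m_1^j\circ p_{L_1}$ and $m_2 = m_2^i\circ p_{L_2}$, I pull back in stages. Square (1) in \cref{fig:embedding_disabling_essence} is a pullback, giving $A$ as the pullback of $m_1^j$ and $m_2^i$. Pulling $A$ back further along $p_{L_1}$ and $p_{L_2}$, which is possible since all pullbacks exist, yields an object $A_{12}$ with morphisms to $L_1$ and $L_2$. By pullback composition and decomposition, $A_{12}$ with these morphisms is the pullback of $m_1$ and $m_2$, i.e.\ square (1) of \cref{fig:Construction_essences}. Hence $A_{12}$, $a_1$ and $a_2$ depend only on $\mathit{ro}$, up to isomorphism. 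Next, squares (2) and (3) of \cref{fig:Construction_essences}, the pullback along $l_1$ and the initial pushout over $k_1$, are built from $a_1$ and $l_1$ alone, so $\essdbl(t_1,t_2)$ is determined by $\mathit{ro}$. The same holds for $\essdbl(t_2,t_1)$ using $l_2$. Consequently, the triviality of both disabling essences, and hence the existence of $d_1$ and $d_2$, is the same for every transformation pair into which $\mathit{ro}$ embeds.

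The main obstacle is the pullback-composition step in the first stage. The morphisms $p_{L_1}$ and $p_{L_2}$ need not be in $\Mono$, so I must check that the standard pasting lemma for pullbacks applies in $\mathcal{C}$ without $\Mono$-assumptions. It should, since it is a purely categorical fact. I must also check that uniqueness up to isomorphism suffices: initial pushouts and triviality, meaning that the result is $(!_{L_1},!_{L_2})$ up to isomorphism, are invariant under isomorphism. A secondary point is that \cite[Theorem~3.13]{ACR19} is stated for matches of transformations. Here $m_1$ and $m_2$ are genuine matches of the (ac-disregarding) transformations, so it applies directly.
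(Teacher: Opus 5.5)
Your proposal is correct and follows essentially the paper's argument: the paper also rebuilds the pullback of $m_1$ and $m_2$ from $\mathit{ro}$ alone by staged pullbacks along $p_{L_1}$ and $p_{L_2}$, notes that the resulting conflict test (there, whether the morphisms $l_1'$, $l_2'$ obtained by pulling back $l_1$, $l_2$ are isomorphisms) is therefore the same for every embedding, and gets the second claim directly from \cref{def:ac-conflicting}. The only cosmetic difference is that you phrase the conflict test via triviality of $\essdbl$ and \cite[Theorem~3.13]{ACR19}, which is built on the same pullback along $l_1$.
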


The following criterion can be used to decide whether a disabling essence $(b_1 \colon A \to L_1, b_j \colon A \to P_j)$ of a rule pair is ac-conflicting;  
we illustrate its use in \cref{ex:ac-conflicting-essence} in \cref{app:detailed_examples}.

\begin{proposition}[Ac-conflicting disabling essences]\label{thm:embedding_of_ac_conflicting_essences}
	Given  
	rules $\rho_1$ and $\rho_2$, a disabling essence $ce = (b_1 \colon A \to L_1, b_i \colon A \to P_i, id_{L_1}, p_{L_2}) \in \ESS{\rho_1}{\rho_2}$ is \emph{ac-conflicting} if and only if the morphism $l_2'$, which is obtained by computing the pullbacks (2) and (3) shown in \cref{fig:embedding_disabling_essence}, is an isomorphism. 
\end{proposition}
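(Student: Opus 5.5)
The plan is to reduce ac-conflictingness of the disabling essence $ce$ to the ac-disregarding disabling essence of $\rho_2$ on $\rho_1$. We then use the characterisation recalled after \cref{def:essential_conflicts}: $t_2$ destroys the match of $t_1$ iff the plain disabling essence $\essdbl(t_2,t_1)$ is non-trivial. We also use that $t_1$ never destroys the match of $t_2$ for an overlap of this form.

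We read the pullbacks (2) and (3) of \cref{fig:embedding_disabling_essence} as follows. $L_2'$ is the pullback of $b_i\colon A\to P_i$ and $p_{L_2}\colon L_2\to P_i$, i.e.\ the part of $L_2$ that overlaps with $L_1$ through $A$. $K_2'$ is the pullback of $L_2'\to L_2$ along $l_2$, i.e.\ the part of that overlap preserved by $\rho_2$. Thus $l_2'$ is an isomorphism exactly when $\rho_2$ deletes nothing of $L_2$ that lies in the overlap $A$.

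For the direction ``$\Leftarrow$'', assume $l_2'$ is an isomorphism. We construct a concrete ac-disregarding transformation pair into which $ce$ embeds. Take $G$ as the pushout of $b_1$ and $b_i$ over $A$ (or the minimal object given by the $\Epi$-$\Mono$ factorization of the cospan), with $m_1^j=$ the injection of $L_1$ and $m_2^i=$ the injection of $P_i$. Square (1) is then a pullback by adhesiveness, since pushouts along $\Mono$-morphisms are pullbacks. We check that the gluing conditions hold, which is the step where we use that $ce$ is a disabling essence and hence embeddable by definition. Alternatively, we simply start from any embedding $(t_1,t_2)$ provided by the definition of disabling essence. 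In such an embedding, the overlap of $L_1$ and $L_2$ in $G$ is the pullback of $m_1$ and $m_2$. By pullback composition with (1) and (2), it factors through $L_2'$. Since $l_2'$ is an isomorphism, this overlap lies in $K_2$. Hence, by the standard plain-rule argument (pullback (1) of \cref{fig:Construction_essences} factors through $K_2$), $d_2$ exists. Existence of $d_1$ follows because $ce$, being a disabling essence by deletion/insertion concerning the condition of $\rho_2$ and not its LHS, has overlap with $L_2$ only through elements that $\rho_1$ preserves. Here we would invoke the level-0 triviality built into \cref{def:essencesAC}: a non-trivial level-0 essence would be a plain conflict on $L_2$, and we would argue it does not interfere. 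So the pair is parallel independent ac-disregarding, and $ce$ is ac-conflicting.

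For ``$\Rightarrow$'', assume $l_2'$ is not an isomorphism. Given any embedding into $(t_1,t_2)$, the pullback of $m_1,m_2$ contains the image of $L_2'$ by (1) and (2). Some element of it is not in $K_2$, because $l_2'$, an $\Mono$-morphism that is not an isomorphism, is pulled back from $l_2$. Therefore $\essdbl(t_2,t_1)$ is non-trivial and $d_2$ cannot exist. Then no ac-disregarding parallel independent pair admits an embedding, so $ce$ is not ac-conflicting, consistent with \cref{thm:disjointness_ac-disregarding}.

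The main obstacle is the pullback bookkeeping. We must show that the overlap pullback of $m_1,m_2$ coincides with, or at least contains and is contained in, $L_2'$ composed through (1). This needs pullback composition/decomposition in the adhesive HLR setting together with $\Mono$-stability. We also need to justify that $d_1$ always exists for such essences.
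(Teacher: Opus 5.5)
Your ``$\Rightarrow$'' direction, argued contrapositively, is correct and matches the paper's part (a). Pasting (1) and (2) shows that $L_2'$ is the pullback of $m_1$ and $m_2$ in every embedding, so $l_2'$ being an isomorphism is exactly the condition that $t_2$ does not destroy $m_1$. The paper gets there via \cref{thm:disjointness_ac-disregarding}; you quantify over all embeddings directly, and either works. The $d_2$ half of your ``$\Leftarrow$'' also matches the paper. Drop the pushout construction of $G$: it needs extra justification, and the embedding guaranteed by the definition of disabling essence is all you need.

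The genuine gap is the other half of ``$\Leftarrow$'': that $t_1$ does not destroy $m_2$, i.e.\ that $d_1$ exists. You assume this at the outset and later only say ``we would argue'' it via level-$0$ triviality. For essences by deletion found at a level $i\geq 1$ this can be made precise. $A$ is the pullback of $e_{L_1}$ and $e_{P_i}$, so pasting with (2) identifies the level-$0$ object of \cref{def:essencesAC} with $L_2'$. Triviality at level $0$ then says exactly that the pullback of $L_2'\to L_1$ along $l_1$ is an isomorphism, which is the condition for $d_1$. You should carry this out rather than leave it as ``does not interfere''.

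For essences by insertion the level-$0$ argument is simply unavailable. They come from $\IPEss{}{\rho_1}{\rho_2}$, which works with $R_1$ and $r_1$ and stops at $i=1$, so the recursion tells you nothing about $l_1$. The missing idea is the shift of \cref{def:shift_proto_essence}. The span object of a shifted essence is obtained as a pullback along $r_1$, so it maps into $L_1$ through $K_1$: $b_1 = l_1\circ a$ for some $a\colon A\to K_1$. Hence $L_2'\to A\to L_1$ factors through the $\Mono$-morphism $l_1$, its pullback along $l_1$ is an isomorphism, and $d_1$ exists. This is the paper's (5)+(6) step; without it, half of ``$\Leftarrow$'' is unproved.

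Your restriction to essences concerning the condition rather than the LHS is necessary, not a convenience. Consider a plain disabling essence with $P_i=L_2$, found at level $0$: for example, $\rho_1$ deletes a node that $\rho_2$ preserves and $ac_2=\true$. Then $l_2'$ is an isomorphism, yet $t_1$ destroys $m_2$ in every embedding, so the essence is never ac-conflicting. The paper's proof makes the same tacit assumption when it concludes that level $0$ is trivial.
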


We now develop a construction for composing rule overlaps (encompassing disabling essences) in such a way that composed overlaps can be embedded in a transformation pair if the individual overlaps are embedded in it, and vice versa. 

\begin{definition}[Composition of rule overlaps]\label{def:composition_overlaps}
	Given two rule overlaps $\mathit{ro} = (\Span{P_i}{b_i}{A}{b_j}{P_j}, e_{L_1}, e_{L_2})$ and $\mathit{ro}' = (\Span{P'_i}{b'_i}{A'}{b'_j}{P'_j}, e'_{L_1}, e'_{L_2})$, 
	the \emph{set of compositions} of $\mathit{ro}$ and $\mathit{ro}'$, denoted by $\Comp{\mathit{ro}}{\mathit{ro}'}$, contains each rule overlap 
	$(\MonoSpan{P_j^*}{b_j^*}{A^*}{b_i^*}{P_i^*},e_{P_j} \circ p_{L_1},e_{P_i} \circ p_{L_2} )$ so that $(e_{P_j}\colon P_j \to P_j^*, e_{P'_j} \colon P_j' \to P_j^*)$, $(e_{P_i} \colon P_i \to P_i^*, e_{P'_i} \colon P_i' \to P_i^*)$, $(e_{P^*_j} \colon\allowbreak P_j^* \mono K^*, e_{P^*_i} \colon P_j^* \mono K^*) \in \Epi$, $e_{P_j} \circ p_{L_1} = e_{P_j'} \circ p_{L_1}'$, $e_{P_i} \circ p_{L_2} = e_{P_i'} \circ p_{L_2}'$, the outer squares in the diagrams shown in \cref{fig:comp} are pullbacks and $b^*_j$ and $b_i^*$ are obtained by computing the pullback (1).
	
	\begin{figure}[t]
		\centering
		\begin{tikzpicture}
			\matrix (m) [	matrix of math nodes,
			nodes in empty cells,
			row sep=1em,
			column sep=1em,
			minimum width=1em]
			{
				P_i & & &  A &  A' & &  & P_i'\\ 
				P_i^* & & A^*_{} & & &    A^*_{} & & P_i^*\\ 
				& \text{\scriptsize(1)} &    & & & &\text{\scriptsize(1)}\\
				K^*_{} & & P_j^*  & P_j  & P_j' &  P_j^* & & K^* \\
			};
			
			\path[-stealth] 
			\Edge{1-4}{}{fill=white}{b_i}{1-1}
			\Cedge{}{fill=white}{b_j}{4-4}
			
			\Edge{1-5}{}{fill=white}{b_i'}{1-8}
			\Cedge{}{fill=white}{b_j'}{4-5}
			
			\Edge{2-6}{\rmono}{fill=white}{b_i^*}{2-8}
			\Cedge{\rmono}{fill=white}{b_j^*}{4-6}
			
			\Edge{2-3}{\lmono}{fill=white}{b_i^*}{2-1}
			\Cedge{\rmono}{fill=white}{b_j^*}{4-3}
			
			\Edge{1-1}{}{left}{e_{P_i}}{2-1}
			\Edge{2-1}{\rmono}{fill=white}{e_{P_i^*}}{4-1}
			
			\Edge{2-8}{\rmono}{fill=white}{e_{P^*_i}}{4-8}
			
			\Edge{1-8}{}{right}{e_{P_i'}}{2-8}
			
			\Edge{4-3}{\lmono}{below}{e_{P_j^*}}{4-1}
			\Edge{4-6}{\rmono}{fill=white}{e_{P^*_j}}{4-8}
			
			\Edge{4-4}{}{below}{e_{P_j}}{4-3}
			\Edge{4-5}{}{below}{e_{P_j'}}{4-6}
			
			;
		\end{tikzpicture}
		\vspace{-1em}
		\caption{Diagram for the composition of rule overlaps } \label{fig:comp}
		\vspace{-1.5em}
	\end{figure}
	
\end{definition}

The following example briefly illustrates the result of composing rule overlaps; we illustrate a computation in detail as \cref{ex:composedRuleOverlap1} in \cref{app:detailed_examples}. 
\begin{example}[Composed rule overlap]
\label{ex:composedRuleOverlap}
	The composed rule overlap \textsf{Comp($A_{L1L2},\allowbreak A^1_{L2L1}$)} in \cref{fig:deleting_essences} indicates that $A_{L1L2}$ and $A^1_{L2L1}$ can be embedded in the same transformation by identifying \textsf{1,6:Class}. 
	An occurrence of this composed overlap simultaneously indicates (i) that \emph{decapsulateAttribute} causes a conflict with \emph{pullUpEncapsulatedAttribute} by destroying \textsf{3,7:Method} 
	and (ii)
	that \emph{pullUpEncapsulatedAttribute} causes a conflict with \emph{decapsulateAttribute} by destroying the edge leading to \textsf{2,9: Method}.
\end{example}

 We show that the desired property of composed rule overlaps is satisfied, i.e., that a composed overlap is embedded in a transformation pair if and only if both of its constituent rule overlaps are also embedded in the pair. 

\begin{theorem}[Embedding of composed rule overlaps]\label{thm:embedding_conflict_essences}
	Given two rule overlaps $\mathit{ro}$ and $\mathit{ro}'$ of two rules $\rho_1$, and $\rho_2$, 
	if both rule overlaps $\mathit{ro}$ and $\mathit{ro}'$ are embeddable in a transformation pair $(t_1,t_2)$ via $\rho_1$ and $\rho_2$, then there is a composed overlap $co \in \Comp{\mathit{ro}}{\mathit{ro}'}$ that is also embeddable in $(t_1,t_2)$.
	Conversely, if any composed overlap $co \in \Comp{\mathit{ro}}{\mathit{ro}'}$ is embedded in a transformation pair $(t_1,t_2)$ via $\rho_1$ and $\rho_2$, then both $\mathit{ro}$ and $\mathit{ro}'$ are also embedded in $(t_1,t_2)$. 
\end{theorem}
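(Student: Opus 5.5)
Both directions rest on the $\Epi$-$\Mono$ factorization of cospans and on pullback pasting, with the embedding condition (square (1) of \cref{fig:embedding_disabling_essence} being a pullback) carried through each step.

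\emph{First direction.} Let $\mathit{ro}$ be embedded via $(m_1^j, m_2^i)$ and $\mathit{ro}'$ via $(m_1'^j, m_2'^i)$ in $(t_1,t_2)$. We factor the cospan $(m_1^j, m_1'^j)$ as an $\Epi$-morphism pair $(e_{P_j}, e_{P_j'})$ into $P_j^*$, followed by an $\Mono$-morphism $u_j\colon P_j^* \mono G$, and treat $(m_2^i, m_2'^i)$ the same way to get $P_i^*$ and $u_i$. Since $m_1 = m_1^j \circ p_{L_1} = m_1'^j \circ p'_{L_1}$ and $u_j$ is a monomorphism, we obtain $e_{P_j}\circ p_{L_1} = e_{P'_j}\circ p'_{L_1}$, and likewise on the $i$-side.

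Next we factor $(u_j, u_i)$ as $(e_{P_j^*}, e_{P_i^*}) \in \Epi$ into $K^*$, followed by $v\colon K^* \mono G$. Because $\Mono$-morphisms are left-cancellative, $e_{P_j^*}$ and $e_{P_i^*}$ are $\Mono$. We then let $A^*$ with $b_j^*, b_i^*$ be the pullback (1) of $e_{P_j^*}, e_{P_i^*}$. Since $v$ is a monomorphism, this square is also the pullback of $(u_j,u_i)$ over $G$.

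It remains to check that the outer squares of \cref{fig:comp} are pullbacks. The square with $A$ composed with $u_i, u_j$ is the given embedding pullback over $G$, so we factor it as: the $A$-square over $P_i^*, P_j^*$, followed by the square (1) that is a pullback over $G$. Pullback decomposition, using that the lower square is a pullback, makes the outer square in \cref{fig:comp} a pullback. Here we must identify $A \to P_i \to P_i^* \to K^*$ with the corresponding route in the diagram, which holds by construction. The same argument applies to $\mathit{ro}'$. Hence $co$ lies in $\Comp{\mathit{ro}}{\mathit{ro}'}$, embedded via $(u_j,u_i)$, and (1) being a pullback over $G$ is exactly the embedding condition.

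\emph{Converse.} Let $co$ be embedded via $(u_j,u_i)$, so (1) composed with $u_j,u_i$ is a pullback. We set $m_1^j = u_j\circ e_{P_j}$, $m_2^i = u_i\circ e_{P_i}$, and analogously for $\mathit{ro}'$. Commutation with $m_1, m_2$ follows from the equations in \cref{def:composition_overlaps}. For the pullback property, we consider the square of $A$ over $P_i$ and $P_j$ into $G$. The outer square into $K^*$ is a pullback by assumption, and the square $A^*$ over $K^*$ equals the pullback over $G$.

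The main obstacle is that the data relate $A$ to $K^*$ rather than to $G$, and $K^*$ need not embed in $G$. I would proceed as follows. Given a test cone $(x\colon X\to P_j, y\colon X\to P_i)$ over $G$, first map it into $P_j^*, P_i^*$; this gives a cone over $G$. By the embedding pullback it factors through $A^*$, hence it commutes over $K^*$, since $e_{P_j^*}\circ b_j^* = e_{P_i^*}\circ b_i^*$. Then the pullback property of the outer square over $K^*$ yields a unique morphism $X\to A$. Uniqueness follows from that same pullback. This step needs care, because the outer squares are cones to $K^*$ whose edges pass through the $e$-maps; the argument above avoids any assumption that $K^*$ embeds in $G$.
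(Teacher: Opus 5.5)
Your proof is correct. The first direction is essentially the paper's argument: two $\Epi$-$\Mono$ factorizations, decomposition to get $e_{P_j^*},e_{P_i^*}\in\Mono$, the pullback (1), and transfer of (1) to $G$ via $v\in\Mono$. Only your justification that the outer squares of \cref{fig:comp} are pullbacks is more roundabout than necessary. Since $m_1^j=v\circ e_{P_j^*}\circ e_{P_j}$ and $m_2^i=v\circ e_{P_i^*}\circ e_{P_i}$ with $v\in\Mono$, \cref{lem:included_pullback} turns the given pullback over $G$ into the outer pullback over $K^*$ in one step. This is what the paper does.

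The converse is where you genuinely differ. The paper proceeds as follows:
\begin{itemize}
\item it re-factors the embedding cospan of $co$ through a fresh object $P_i^*P_j^*$ that has an $\Mono$-morphism into $G$;
\item it uses the lemma to show that $A^*$ is also a pullback over $P_i^*P_j^*$;
\item it rebuilds $A$ as the iterated pullback (2)--(4) over $A^*$, pastes with (5), and applies the lemma again.
\end{itemize}
You instead use that $A^*$ is simultaneously the pullback of $(e_{P_j^*},e_{P_i^*})$ over $K^*$ and of $(u_j,u_i)$ over $G$. Hence a pair $(x,y)$ into $P_j,P_i$ commutes over $G$ (after postcomposing with the $e$'s and $u$'s) iff it commutes over $K^*$. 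So both pullback problems have the same cones and the same mediating morphisms. This avoids the extra factorization and the four-square pasting. It also makes transparent why it is irrelevant that $K^*$ need not embed in $G$: you never use that $(e_{P_j^*},e_{P_i^*})\in\Epi$ or that these legs lie in $\Mono$.

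The one thing you skip is checking that the square $u_j\circ e_{P_j}\circ b_j=u_i\circ e_{P_i}\circ b_i$ commutes at all, which a pullback requires. It is the other half of the same equivalence. The pair $(e_{P_j}\circ b_j,\,e_{P_i}\circ b_i)$ commutes over $K^*$, so it factors through $A^*$ by the pullback (1) over $K^*$. Then $u_j\circ b_j^*=u_i\circ b_i^*$ gives commutation over $G$.
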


A crucial observation is that if both rule overlaps are ac-conflicting, then so are all composed overlaps.
This statement follows directly from the second part of \cref{thm:embedding_conflict_essences}.
We will use this result later on when comparing conflict essences (which encompass composed disabling essences) 
with initial conflicts for rules with application conditions.
\begin{corollary}[Inheritance of ac-conflicting rule overlaps]
	Given two ac-conflicting rule overlaps $\mathit{ro}$ and $\mathit{ro}'$ of rules $\rho_1$ and $\rho_2$, each composed overlap $ce \in \Comp{\mathit{ro}}{\mathit{ro}'}$ is ac-conflicting.
\end{corollary}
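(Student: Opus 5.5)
The corollary follows by combining the second part of \cref{thm:embedding_conflict_essences} with \cref{thm:disjointness_ac-disregarding}. Let $ce \in \Comp{\mathit{ro}}{\mathit{ro}'}$ be a composed overlap. By \cref{def:ac-conflicting}, we must exhibit an ac-disregarding parallel independent transformation pair via $\rho_1$ and $\rho_2$ into which $ce$ embeds.

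The first step is to show that $ce$ embeds into some transformation pair at all. Write $ce = (\MonoSpan{P_j^*}{b_j^*}{A^*}{b_i^*}{P_i^*}, e_{P_j}\circ p_{L_1}, e_{P_i}\circ p_{L_2})$. We build an ac-disregarding pair as follows.
\begin{itemize}
\item Take the cospan $(e_{P_j^*}\colon P_j^* \mono K^*,\ e_{P_i^*}\colon P_i^* \mono K^*)$ and set $G := K^*$.
\item Use the matches $m_1 := e_{P_j^*}\circ e_{P_j}\circ p_{L_1}$ and $m_2 := e_{P_i^*}\circ e_{P_i}\circ p_{L_2}$.
\end{itemize}
Square (1) of \cref{fig:comp} is a pullback by construction, so the embedding condition of \cref{def:embedding:spans} holds for $(e_{P_j^*}, e_{P_i^*})$.

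Two issues remain in this step. Because the transformations are ac-disregarding, the application conditions can be ignored. However, the gluing condition (existence of the pushout complements for $l_1,l_2$ along $m_1,m_2$) is not automatic. The plan is to avoid proving it by a direct construction. Instead, we use that the composition is only interesting when embeddable, and argue via the first step below.

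Once an embedding of $ce$ into some pair $(t_1,t_2)$ is available, the second part of \cref{thm:embedding_conflict_essences} yields that $\mathit{ro}$ is embedded in $(t_1,t_2)$. Since $\mathit{ro}$ is ac-conflicting, \cref{thm:disjointness_ac-disregarding} gives that $(t_1,t_2)$ is parallel independent ac-disregarding. Hence $ce$ embeds into an ac-disregarding parallel independent pair and is ac-conflicting. Only one of $\mathit{ro},\mathit{ro}'$ is actually needed here.

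The main obstacle is the existence step: \cref{def:composition_overlaps} does not explicitly require $ce$ to be embeddable anywhere. If the gluing condition fails for $K^*$, I would proceed in three steps.
\begin{enumerate}
\item Read the corollary as intended, namely that ``ac-conflicting'' for overlaps is the embedding-based property of \cref{thm:disjointness_ac-disregarding}: every pair into which $ce$ embeds is parallel independent ac-disregarding.
\item Prove it in this form by the same two-step chain: embedding of $ce$ gives embedding of $\mathit{ro}$, and then \cref{thm:disjointness_ac-disregarding} applies.
\item Check separately that the dangling condition holds for $m_1,m_2$ into $K^*$. This follows from the fact that $\mathit{ro}$, and hence $p_{L_1}$, already lives inside an actual transformation pair, so pushout complements exist after the $\Epi$-extension.
\end{enumerate}
In an adhesive HLR category without a dangling condition (pushout complements along $\Mono$ always existing under the initial-pushout assumptions), this last check is routine.
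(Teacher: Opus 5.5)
\textbf{Verdict.} Your core argument is the paper's, and you correctly spot that existence is missing, but your step~3 for supplying it fails.

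\textbf{The core argument.} The paper justifies the corollary in one line: it ``follows directly from the second part of \cref{thm:embedding_conflict_essences}''. That is, an embedding of $ce$ into $(t_1,t_2)$ yields an embedding of $\mathit{ro}$, and \cref{thm:disjointness_ac-disregarding} then makes $(t_1,t_2)$ parallel independent ac-disregarding. You are right that only $\mathit{ro}$ is needed. You are also right that \cref{def:ac-conflicting} is existential. So this chain only shows $ce$ is ac-conflicting once $ce$ embeds into \emph{some} transformation pair, a point the paper passes over silently.

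\textbf{Where step~3 fails.} You claim that the gluing condition for $m_1=e_{P_j^*}\circ e_{P_j}\circ p_{L_1}$ into $K^*$ is inherited from the embedding of $\mathit{ro}$. This is false. The pullback requirements of \cref{def:composition_overlaps} only control what $e_{P_j}$ identifies on the part of $P_j$ that meets $P_i$. Elsewhere, $e_{P_j}$ may glue an element outside $p_{L_1}(L_1)$ onto a deleted one.

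Here is a concrete counterexample in graphs.
\begin{itemize}
\item Let $\rho_1$ delete a single node $u$ ($K_1=R_1=\emptyset$), let $\rho_2$ be the identity on the empty graph, and let all application conditions be $\true$.
\item Let $\mathit{ro}=\mathit{ro}'$ have $A=P_i=L_2=\emptyset$, with $P_j$ consisting of the isolated node $u$ plus an edge $w\to z$.
\item Taking $G=P_j$ embeds $\mathit{ro}$ into an ac-disregarding parallel independent pair, so $\mathit{ro}$ is ac-conflicting.
\item Now let $e_{P_j}=e_{P_j'}$ be the quotient identifying $u$ with $w$, and set $P_i^*=\emptyset$ and $K^*=P_j^*$. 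All conditions of \cref{def:composition_overlaps} hold, since every pullback involved has an empty side and is therefore empty.
\item However, any morphism $P_j^*\to G$ sends the edge to an edge incident to $m_1(u)$. Since $L_1$ has no edges, that edge lies outside $m_1(L_1)$, so the dangling condition fails in every $G$.
\end{itemize}
Hence this $ce$ embeds into no transformation pair at all, and it is \emph{not} ac-conflicting in the sense of \cref{def:ac-conflicting}. Existence therefore cannot be proved in general.

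\textbf{Your closing remark is also wrong.} In an adhesive HLR category, pushout complements of $l_1$ along an arbitrary match need not exist. Initial pushouts at best give a \emph{criterion} for their existence (the gluing condition), not a guarantee.

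\textbf{What survives.} Your steps~1--2 are correct, and they are exactly what the paper's one-liner actually establishes. Every transformation pair into which $ce$ embeds is parallel independent ac-disregarding; equivalently, $ce$ is ac-conflicting whenever it is embeddable at all. This is all the later results use, since they only ever consider embedded essences. Drop step~3 and state the corollary in this conditional form.
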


We now define \emph{conflict essences}, which provide an overview on how two transformations can cause a conflict for each other.
To construct conflict essences, we compose rule overlaps, in which the first transformation causes a conflict with the second and the second transformation causes a conflict with the first. 
All disabling essences from \cref{fig:deleting_essences} and, in particular, the composed rule overlap from \cref{ex:composedRuleOverlap} are conflict essences. 

\begin{definition}[Set of conflict essences]
	Given two rules $\rho_j = \rleACCompact{j}$, with $j=1,2$, the \emph{set of conflict essences of $\rho_1$ and $\rho_2$}, denoted by $\CEss{\rho_1}{\rho_2}$, is recursively defined as follows: 
	\begin{enumerate}
		\item A disabling essence $de = (b_i \colon A \to P_i, b_j \colon A \to P_j) \in \ESS{\rho_1}{\rho_2} \cup \ESS{\rho_2}{\rho_1}$ is a \emph{conflict essence} for $\rho_1$ and $\rho_2$ if and only if (i) $de$ is ac-conflicting or (ii) $P_i = L_1$ and $P_j = L_2$.
		\item If $ce_1$ and $ce_2$ are conflict essences so that both are either ac-conflicting or  non-ac-conflicting, then every composed rule overlap $ce^* \in \Comp{ce_1}{ce_2}$ is a conflict essence.
	\end{enumerate}
\end{definition}

The following corollary directly follows from the fact that, for each parallel dependent transformation pair, there is a disabling essence that embeds in it~\cite{LauerKT25}. This is because the set of conflict essences contains each ac-conflicting disabling essence in both directions. 
For ac-disregarding parallel dependent transformations, the claim follows from the fact that the set of conflict essences contains each conflict essence for plain rules, together with the results shown in~\cite{ACR19}.
\begin{corollary}[Embedding of conflict essences]\label{cor:embedding_conflict_essences}
	Given a pair of parallel dependent transformations $(t_1,t_2) = (\trans{G}{\rho_1}{m_1}{H_1}, \trans{G}{\rho_2}{m_2}{H_2})$ via rules 
	$\rho_1$ and $\rho_2$, there is a conflict essence $ce \in \CEss{\rho_1}{\rho_2}$ that is embeddable in $(t_1,t_2)$.
\end{corollary}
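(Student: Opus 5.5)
Split by the kind of dependence of the pair $(t_1,t_2)$. Either $(t_1,t_2)$ is parallel dependent ac-disregarding, i.e.\ one of $d_1$, $d_2$ in \cref{def:parallel_independence} does not exist. Or both $d_j$ exist, but $h_1\circ d_1\not\models ac_2$ or $h_2\circ d_2\not\models ac_1$. In each case the plan is to take a suitable disabling essence that embeds in $(t_1,t_2)$ and check that it belongs to $\CEss{\rho_1}{\rho_2}$ by clause~1 of the definition.

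\emph{Case 1: ac-disregarding dependence.} Say, without loss of generality, that $d_1$ does not exist, so $t_1$ destroys the match $m_2$ as a plain-rule transformation. By \cite[Theorem~3.13]{ACR19}, the plain disabling essence $\essdbl(t_1,t_2)=\Ess{l_1}{m_1}{m_2}$ is then non-trivial. Its underlying rule overlap has $P_j=L_1$ and $P_i=L_2$: factor the cospan $(m_1,m_2)$ through its unique $\Epi$-$\Mono$ factorization, and then the pullback (1) of \cref{fig:embedding_disabling_essence} holds by pullback pasting. This overlap is precisely the level-$0$ proto-essence by deletion of \cref{def:essencesAC} for the corresponding overlap of $L_1$ and $L_2$. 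I then verify that it satisfies the filtering conditions (i)--(iii) for disabling essences. Non-triviality and embeddability are clear. Condition (iii) is vacuous or trivially met at level $0$, since $L_2$ is the root, and I treat it as bound in the sense required there; this is exactly how the paper keeps the plain essences. Hence it lies in $\ESS{\rho_1}{\rho_2}$. Because $P_i=L_1$ and $P_j=L_2$ (up to the ordering convention of the pair), condition (ii) of clause~1 applies and it is a conflict essence.

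\emph{Case 2: the pair is parallel independent ac-disregarding, but, say, $h_1\circ d_1\not\models ac_2$.} Here I invoke the result re-proved in Appendix~\ref{sec:previous_results} (transferred from \cite{LauerKT25}): for every transformation pair in which $t_1$ causes a conflict for $t_2$, some disabling essence $de\in\ESS{\rho_1}{\rho_2}$ embeds into $(t_1,t_2)$. It remains to see that $de$ is ac-conflicting. By \cref{def:ac-conflicting}, this requires only that $de$ embeds into \emph{some} ac-disregarding parallel independent pair, and $(t_1,t_2)$ itself is such a pair. So condition (i) of clause~1 holds, and $de\in\CEss{\rho_1}{\rho_2}$. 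The symmetric case $h_2\circ d_2\not\models ac_1$ is identical with the roles swapped, using $\ESS{\rho_2}{\rho_1}$.

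\emph{Main obstacle.} I expect the delicate step to be Case~1. There one must make sure that the plain essence really is contained in the set of disabling essences for rules with application conditions, i.e.\ that the filtering in the definition of disabling essences (which is only sketched here) does not discard it. One must also check that the embedding notion of \cref{def:embedding:spans}, with its pullback requirement, agrees with the embedding used in \cite{ACR19}. To handle this, I would first show that the pullback of $m_1$ and $m_2$ factors through the $\Epi$-$\Mono$ decomposition of $(m_1,m_2)$. Then I would use the fact that proto-essences are computed purely from pullbacks and initial pushouts, which are stable under composition with $\Mono$-morphisms. This reduces the question to the plain result.
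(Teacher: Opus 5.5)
Your proposal is correct and follows essentially the paper's route. The paper splits the same way. In the ac-disregarding dependent case it uses the plain disabling essences of Azzi et al., which lie in $\CEss{\rho_1}{\rho_2}$ by clause~1(ii) via the containment of plain essences in the appendix. Where only an application condition is violated, it uses the transferred embedding theorem for disabling essences, and your observation that the embedded essence is ac-conflicting because the given pair itself is ac-disregarding parallel independent (so clause~1(i) applies), together with your $\mathcal{E}'$-$\mathcal{M}$ factorization argument, just spells out steps the paper leaves implicit.
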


	\section{Symbolic Conflict Essences and Initial Conflicts}
	\label{chapter:symbolic_conflict_essences}

In this section, we introduce \emph{symbolic conflict essences}, which consist of conflict essences accompanied by a nested condition. 
As with initial conflicts, the embedding morphisms of a conflict essence satisfy the corresponding condition if and only if the transformation pair is parallel dependent. 
This ensures that a \emph{symbolic conflict essence} is embedded in a transformation pair if and only if the transformation pair is parallel dependent. 
Moreover, each embedded symbolic conflict essence indicates a location at which an occurrence of an application condition  object is destroyed (or created),  or a rule match is destroyed. 
We relate each symbolic conflict essence 
to a unique initial conflict. 
We then use the condition of the initial conflict to construct the condition of the conflict essence using the well-known shift operator for nested conditions~\cite{HabelP09}.
 
To ensure that the condition of a \emph{symbolic conflict essence} is satisfied if and only if the transformation pair is parallel dependent,  both embedding morphisms of the conflict essence embedding must satisfy the corresponding condition. 
To achieve this, we introduce \emph{cospan conditions}, which establish a way to restrict the satisfaction of conditions. 
Rather than using a single morphism, we will use a cospan on the first nesting level of the condition (whereas the condition itself is a nested condition as usual).
\begin{definition}[cospan condition]
	Given two objects $P_0$ and $P_0'$, $cc = \exists(P_0 \xrightarrow{p_1} P_1 \xleftarrow{p_1'} P_0',d)$ is a \emph{cospan condition over $P_0$ and $P_0'$} if $d$ is a condition over $P_1$. 
	Morever, every Boolean combination of cospan conditions over $P_0$ and $P_0'$ is a cospan condition over $P_0$ and $P_0'$. 
	
	Two morphisms $q \colon P_0 \to G$ and $q' \colon P_0' \to G$ satisfy the cospan condition $cc$, denoted by $(q,q') \models cc$, if there is a morphism $q_1 \colon P_1 \to G$ so that $q_1 \models d$, $q = q_1 \circ p_1$ and $q' = q_1 \circ p_1'$.
	
\end{definition}

Next, we introduce \emph{symbolic conflict essences}. 
To relate each \emph{conflict essence} to an initial conflict, we must consider two different cases:
Either the conflict essence is ac-conflicting, or it is not. 
For a non-ac-conflicting conflict essence, there is no need of an additional condition to check for parallel dependence, since an embedding of a non-ac-conflicting conflict essence already shows (ac-disregarding) parallel dependence (\cref{thm:disjointness_ac-disregarding}). 
Therefore, the condition of ac-conflicting conflict essences is $\true$. 

If an ac-conflicting conflict essence is embeddable in a transformation pair,  the pair is parallel independent and ac-disregarding (\cref{thm:disjointness_ac-disregarding}). 
This means that the symbolic initial conflict is also embeddable in this transformation pair~\cite[Lemma 4]{LO21}, and we can use the condition of the symbolic initial conflict to construct the condition for the conflict essence. 
If the symbolic initial conflict is embedded in a transformation pair, the transformation pair is parallel dependent ac-disregarding. 
Although each symbolic conflict essence uniquely corresponds to an initial conflict, the conflict essence is much more informative because it compactly encodes the root cause(s) of the conflicts present in the transformation pairs in which it is embedded. 

\begin{definition}[Symbolic conflict essence]\label{def:symbolic_conflict_essences}
	Given two rules $\rho_j = \rleACCompact{j}$, with $j = 1,2$, and a conflict essence $ce = (b_i \circ c \colon C \to P_i, b_j \circ c \colon C \to P_j) \in \ESS{\rho_1}{\rho_2}$, the  pair $(ce, ac_{ce})$ is a \emph{symbolic conflict essence} where
	$$ac_{ce} \coloneq \exists(\CoSpan{P_i}{b_j'}{D}{b_i'}{P_j}, \Shift(m^*, ac_{L_1+L_2} \wedge ac^*_{L_1+L_2}))$$ if $\mathit{ce}$ is ac-conflicting, and $ac_{ce} \coloneq \true$ otherwise.
	The morphism  $b_j'$ is obtained by computing the pushout (1) in \Cref{fig:construction_conflict_essence}, $m^*\colon L_1 +L_2 \to D$ is the morphism obtained by the universal property of the coproduct $L_1 + L_2$ 
	and $(\transPair{R_1 + L_2}{i_{L_1}}{\rho_1}{L_1 + L_2}{i_{L_2}}{\rho_2}{L_1 + R_2}, ac_{L_1+L_2}, ac^*_{L_1+L_2})$ is the 
	symbolic intitial conflict.
	
	\begin{figure}[t]
		\vspace{-1em}
		\centering
		\begin{tikzpicture}
				\matrix (m) [	matrix of math nodes,
			nodes in empty cells,
			row sep=1em,
			column sep=1em,
			minimum width=1em]
			{
				& & & A_{} \\
				L_1 & & P_i & \text{\scriptsize (1)}& P_j & & L_2 \\ 
				& & & D \\ 
				& & & L_1+L_2 \\
			};
			\path[-stealth]
			\Edge{1-4}{}{fill=white}{b_j}{2-5}
			\Cedge{}{fill=white}{b_i}{2-3}
			
			\Edge{2-1}{}{fill=white}{p_{L_1}}{2-3}
			\Cedge{bend right=20}{fill=white}{i_{L_1}}{4-4}
			
			\Edge{2-7}{}{fill=white}{p_{L_2}}{2-5}
			\Cedge{bend left=20}{fill=white}{i_{L_2}}{4-4}
			
			\Edge{4-4}{}{right}{m^*}{3-4}
			
			\Edge{2-3}{}{fill=white}{b_j'}{3-4}
			\Edge{2-5}{}{fill=white}{b_i'}{3-4}
			;
		\end{tikzpicture}
		\vspace{-1em}
		\caption{Construction of symbolic conflict essences}\label{fig:construction_conflict_essence}
		\vspace{-1.5em}
	\end{figure}
		The set of \emph{symbolic conflict essences} of the rules $\rho_1$ and $\rho_2$, denoted by $\CEssAC{\rho_1}{\rho_2}$, is defined as 
		$\CEssAC{\rho_1}{\rho_2} \coloneq \{(ce, ac_{ce}) \mid ce \in \CEss{\rho_1}{\rho_2}\}.$

\end{definition}

\begin{remark}
	The pushout $(1)$ in \cref{fig:construction_conflict_essence} exists if the considered conflict essence is ac-conflicting because either $b_i$ or $b_j$ is an $\Mono$-morphism: 
	If the essence is composed, both morphisms are $\Mono$-morphisms by construction (see \cref{def:composition_overlaps}). 
	Moreover, for disabling essences of the form $(b_1 \colon A \to L_1, b_j \colon A \to P_j)$ where $P_j \neq L_2$, we have $b_1 \in \Mono$ by construction (\cref{def:asynchronous_essences}).  Disabling essences of the form $(b_1 \colon A \to L_1, b_j \colon A \to L_1)$ are also disabling essences for plain rules~\cite[Proposition 3]{LauerKT25}, which directly implies that they are not ac-conflicting.
\end{remark}

The following example briefly illustrates symbolic conflict essences; we illustrate a symbolic conflict essence in detail as \cref{ex:detailed_symbolic_essence} in \cref{app:detailed_examples}. 
\begin{example}[Symbolic conflict essence]
	Together with a cospan condition $ac_{ce}$, each conflict essence shown in \cref{fig:deleting_essences} forms a symbolic conflict essence.
	For \textsf{A\textsubscript{L1L2}, A\textsuperscript{1}\textsubscript{L2L1}}, \textsf{A\textsuperscript{2}\textsubscript{L2L1}}, and the composed essences,
	which are not ac-conflicting, $ac_{ce}$ is equal to $\true$. 
	For the remaining ones, $ac_{ce}$ is non-trivial and 
	ensures that (i) both rules satisfy their application conditions before the transformations
	and that (ii) the transformation pair is parallel dependent. 
\end{example}

Because symbolic conflict essences consist of a conflict essence and a condition, we need to extend the notion of embeddability to also account for the condition: the embedding morphisms must satisfy it.

\begin{definition}[Embeddability of symbolic conflict essences]
	Given a pair of transformations $(t_1,t_2) = (\trans{G}{\rho_1}{m_1}{H_1}, \trans{G}{\rho_2}{m_2}{H_2})$ via rules $\rho_1$ 
	and $\rho_2$, 
	a symbolic conflict essence $(ce, ac_{ce}) \in \CEssAC{\rho_1}{\rho_2}$ is embeddable in $(t_1,t_2)$ if $ce$ is embeddable in $(t_1,t_2)$ via embedding morphisms $m_1^j$ and $m_2^i$ and $(m_1^j, m_2^i) \models ac_{ce}$.
\end{definition}

We can now establish the main results of this paper, namely that symbolic conflict essences play the same role for rules with application conditions that conflict essences do for plain rules (compare \cref{fig:overview}): 
They unambiguously indicate a transformation pair to be in conflict and correspond to (symbolic) initial conflicts.
\begin{theorem}[Correctness of symbolic conflict essences]\label{thm:correctness_ce_conditions}
	Two transformations $(t_1,t_2) = (\trans{G}{\rho_1}{m_1}{H_1}, \trans{G}{\rho_2}{m_2}{H_2})$ via rules $\rho_1$ and $\rho_2$ are parallel dependent if and only if a symbolic conflict essence $(ce, ac_{ce}) \in \CEssAC{\rho_1}{\rho_2}$ is embeddable in $(t_1,t_2)$.
\end{theorem}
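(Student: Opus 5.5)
We prove the two directions separately. Throughout we split according to whether the transformation pair is parallel dependent ac-disregarding, or parallel independent ac-disregarding but in conflict only through the application conditions.

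\emph{($\Rightarrow$).} Let $(t_1,t_2)$ be parallel dependent. By \cref{cor:embedding_conflict_essences} some conflict essence $ce \in \CEss{\rho_1}{\rho_2}$ embeds via morphisms $(m_1^j, m_2^i)$. It remains to show $(m_1^j,m_2^i) \models ac_{ce}$.

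If $ce$ is not ac-conflicting, then $ac_{ce}=\true$ and we are done. Otherwise $(t_1,t_2)$ is parallel independent ac-disregarding by \cref{thm:disjointness_ac-disregarding}. We then produce the witness $q_1\colon D\to G$ for the cospan condition in three steps:
\begin{itemize}
\item Since square (1) of the embedding commutes, $m_1^j\circ b_j = m_2^i \circ b_i$ (up to the naming in \cref{fig:construction_conflict_essence}).
\item The universal property of the pushout (1) in \cref{fig:construction_conflict_essence} yields a unique $q_1\colon D \to G$ with $q_1\circ b_j' = m_1^j$ and $q_1 \circ b_i' = m_2^i$.
\item By the coproduct property, $q_1\circ m^* = [m_1,m_2]$, the induced morphism $L_1+L_2\to G$.
\end{itemize}
By \cite[Lemma 4]{LO21}, the symbolic initial conflict extends to $(t_1,t_2)$ via the extension morphism $[m_1,m_2]$. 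Because $t_1,t_2$ are genuine transformations, $[m_1,m_2]\models ac_{L_1+L_2}$. Because the pair is parallel dependent although ac-disregarding independent, $[m_1,m_2]\models ac^*_{L_1+L_2}$; this is the completeness property of symbolic transformation pairs recalled after \cref{def:initial-conflict}. Semantic preservation of $\Shift$ then gives $q_1 \models \Shift(m^*, ac_{L_1+L_2}\wedge ac^*_{L_1+L_2})$, so $(m_1^j,m_2^i)\models ac_{ce}$.

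\emph{($\Leftarrow$).} Suppose $(ce,ac_{ce})$ embeds in $(t_1,t_2)$.

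If $ce$ is not ac-conflicting, \cref{thm:disjointness_ac-disregarding} shows that $(t_1,t_2)$ is parallel dependent ac-disregarding, hence parallel dependent.

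If $ce$ is ac-conflicting, there is $q_1\colon D\to G$ with $q_1\models \Shift(m^*,\ldots)$ and $q_1\circ b_j'=m_1^j$, $q_1\circ b_i'=m_2^i$. Then $q_1\circ m^*=[m_1,m_2]$, since both sides agree on the coproduct injections via $p_{L_1},p_{L_2}$. By $\Shift$ semantics, $[m_1,m_2]\models ac_{L_1+L_2}\wedge ac^*_{L_1+L_2}$. Moreover $(t_1,t_2)$ is ac-disregarding independent by \cref{thm:disjointness_ac-disregarding}, so $[m_1,m_2]$ is an extension morphism of the symbolic initial conflict into $(t_1,t_2)$, again by \cite[Lemma 4]{LO21}. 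The characterisation of symbolic transformation pairs (satisfaction of $ac^*$ by the extension morphism iff conflict) then yields parallel dependence.

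\emph{Main obstacle.} The delicate step is the identification of $q_1\circ m^*$ with the extension morphism of the symbolic initial conflict, and the justification that satisfaction of $ac^*_{L_1+L_2}$ by that morphism is equivalent to parallel dependence. This requires checking that the $\Left$/$\Shift$ construction in $ac^*_{G,d_j}$ commutes with the extension, i.e.\ that the contexts $D_j$ of $(t_1,t_2)$ arise as pushouts of the contexts $L_1+K_2$, $K_1+L_2$ along $[m_1,m_2]$, and that the $d_j$ factor correspondingly. I would handle this by invoking the results of \cite{LO21,EGHLO12} on extensions of symbolic transformation pairs rather than redoing them. A further check is needed for the case where $(m_1^j,m_2^i)$ are not jointly in $\Mono$: there the cospan condition is satisfied via the pushout-induced morphism and not via an $\Mono$-morphism, so the semantics of cospan conditions must be read accordingly.
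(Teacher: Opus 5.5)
Your proposal is correct and follows essentially the same route as the paper's proof: completeness via \cref{cor:embedding_conflict_essences} with a case split on whether $ce$ is ac-conflicting, the pushout-induced witness $D \to G$ whose composite with $m^*$ is the extension morphism of the symbolic initial conflict from \cite[Lemma 4]{LO21}, and correctness of $\Shift$ combined with the characterisation of $ac_{L_1+L_2} \wedge ac^*_{L_1+L_2}$, which the paper takes from \cite[Lemma 6.2]{EGHLO12}. Splitting the condition into $ac_{L_1+L_2}$ and $ac^*_{L_1+L_2}$ just unpacks that citation, and your concern about $\mathcal{M}$-morphisms is moot, since the paper's definition of satisfaction of cospan conditions does not require the witness to lie in $\mathcal{M}$.
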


\begin{theorem}[Relation of symbolic conflict essences and initial conflicts]\label{relation_to_initial_conflicts}
	For each symbolic conflict essence $ce$, there is a unique initial conflict so that the initial conflict is embeddable in a transformation pair via an extension diagram if $ce$ is embeddable in the transformation pair. 
	Moreover, when an initial conflict is embedded in a transformation pair, there is at least one symbolic conflict essence that is also embedded in this transformation pair.
\end{theorem}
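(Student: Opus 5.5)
The plan is to split along the dichotomy of \cref{thm:disjointness_ac-disregarding}: every conflict essence, and hence every symbolic conflict essence, is either ac-conflicting or not.

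\emph{First part, non-ac-conflicting case.} Let $(ce,\true)$ be embedded in $(t_1,t_2)$ via $(m_1^j,m_2^i)$. By \cref{thm:disjointness_ac-disregarding}, $(t_1,t_2)$ is parallel dependent ac-disregarding, so at least one of $d_1,d_2$ does not exist. Among the initial conflicts of $\rho_1,\rho_2$ I take the one induced by the initial transformation pair $(t_1^I,t_2^I)$ of $(t_1,t_2)$ from \cref{def:extension}. Its underlying pair is then an initial transformation pair for itself, and it is conflict-inducing since it extends to $(t_1,t_2)$. To obtain \emph{uniqueness} and independence of the particular pair, I would argue that the initial pair is determined by the pushout of $m_1^j \circ p_{L_1}$ and $m_2^i \circ p_{L_2}$ restricted to the images of $L_1,L_2$. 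This should mean the $\Epi$-part of the cospan $(p_{L_1}', p_{L_2}')$ through the pushout object $P_i +_A P_j$, which is well-defined by uniqueness of \EMfactorization. Because the embedding square (1) of \cref{def:embedding:spans} is a pullback, the overlap object of $L_1,L_2$ in $G$ is fixed by $ce$ alone, so the same initial conflict arises for every pair into which $ce$ embeds.

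\emph{First part, ac-conflicting case.} Here \cref{thm:disjointness_ac-disregarding} gives ac-disregarding parallel independence. I associate the symbolic initial conflict on $L_1+L_2$, which is unique by definition. By \cite[Lemma 4]{LO21} it extends to $(t_1,t_2)$ via $[m_1,m_2]\colon L_1+L_2\to G$. The satisfied condition $ac_{ce}$ yields $q_1\colon D\to G$ with $q_1\circ m^* = [m_1,m_2]$ and $q_1 \models \Shift(m^*, ac_{L_1+L_2}\wedge ac^*_{L_1+L_2})$. By the semantics of $\Shift$, this gives $[m_1,m_2]\models ac_{L_1+L_2}\wedge ac^*_{L_1+L_2}$, so the extension is an admissible embedding of the initial conflict.

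\emph{Second part.} Suppose an initial conflict is embedded, meaning its condition is satisfied by the extension morphism. Then $(t_1,t_2)$ is parallel dependent (recalled after \cref{def:initial-conflict}). By \cref{thm:correctness_ce_conditions}, some symbolic conflict essence is embeddable in $(t_1,t_2)$, and we are done.

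\emph{Main obstacle.} I expect the hardest step to be the uniqueness in the non-ac-conflicting case. One must show that distinct transformation pairs embedding the same $ce$ determine isomorphic initial transformation pairs. I would first try to show that the initial pair's start object is the $\Epi$-image of $L_1+L_2$ in $G$, and that this object is recoverable from $ce$ via the pullback property of the embedding. If that fails, I would weaken ``unique'' to uniqueness up to the choice dictated by the \EMfactorization\ of the cospan $(p_{L_1},p_{L_2})$ composed with the pushout (1) of \cref{fig:construction_conflict_essence}.
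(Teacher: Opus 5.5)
Your ac-conflicting case and your second part follow the paper's proof. The paper likewise uses \cite[Lemma~4]{LO21} for the former and \cref{thm:correctness_ce_conditions} for the latter. Your explicit check via $\Shift$ that $q_1\circ m^*=[m_1,m_2]$ satisfies $ac_{L_1+L_2}\wedge ac^*_{L_1+L_2}$ is a useful addition. Uniqueness there is not ``by definition'', though: it holds because the symbolic initial conflict is the only initial conflict that an ac-disregarding parallel independent pair extends.

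The genuine gap is in the non-ac-conflicting case, exactly at the step you flag. Your pasting observation is correct. Because square (1) is a pullback, $L_1\times_G L_2\cong(L_1\times_{P_j}A)\times_A(A\times_{P_i}L_2)$, together with its legs to $L_1$ and $L_2$, depends only on $ce$. But you never justify why this span determines the initial transformation pair of $(t_1,t_2)$, and the bridges you sketch do not hold as stated:
\begin{itemize}
\item $m_1^j\circ p_{L_1}$ and $m_2^i\circ p_{L_2}$ are just $m_1,m_2$, a cospan into $G$, so there is no pushout of them to take.
\item The pushout $P_i+_AP_j$ is only guaranteed when one of $b_i,b_j$ lies in $\Mono$. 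The paper secures this for ac-conflicting and composed essences, but not for non-ac-conflicting disabling essences.
\item The induced morphism from $P_i+_AP_j$ into $G$ need not be in $\Mono$, so its $\Epi$-part need not be that of $(m_1,m_2)$.
\item The $\Epi$-image of $(m_1,m_2)$ is in general not determined by the pullback span alone, since it also records identifications made by a single match.
\end{itemize}
Weakening ``unique'' as in your fallback would not prove the stated theorem.

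The missing ingredient is the plain-rule correspondence of \cite{ACR19}. It relates each plain disabling essence to a unique initial conflict; note that such an essence's overlap object, once embedded, is exactly the pullback of the two matches. The paper uses this as the base of a structural induction over $\CEss{\rho_1}{\rho_2}$:
\begin{itemize}
\item A non-ac-conflicting disabling essence can only be a conflict essence via clause (ii). So both legs end in $L_1$ and $L_2$, and by \cref{prop:containment_proto_essences} it is a plain disabling essence.
\item A non-ac-conflicting composition has non-ac-conflicting constituents, since compositions of ac-conflicting overlaps are ac-conflicting. These constituents embed whenever the composition does (\cref{thm:embedding_conflict_essences}). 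Since at most one initial conflict embeds into a parallel dependent pair \cite{LO21}, the composition inherits their initial conflict.
\end{itemize}

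You can also keep your direct route and close it without induction. Since $(t_1,t_2)$ is parallel dependent ac-disregarding, the span $L_1\leftarrow L_1\times_GL_2\rightarrow L_2$ you computed is the overlap of a non-trivial plain disabling essence of $(t_1,t_2)$ in at least one direction. By your pasting argument, this span is the same for every pair into which $ce$ embeds. Applying \cite{ACR19} to it yields the required initial conflict, independent of the chosen pair.
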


	\section{Conclusion}
	\label{chapter:conclusion}

In this paper, we continue our work on providing techniques to statically analyse the interactions of (graph) transformation rules. 
Concretely, we introduce \emph{conflict essences} for rules equipped with application conditions; these objects encapsulate the essential information about how two rule applications can cause a conflict with each other, potentially by invalidating one another's application condition. 
This closes an important gap because the construction was previously only available for plain rules (i.e., rules without application conditions). 
Our construction builds on the existing one, enabling us to lift the most important results from the ``plain situation'' to the case of application conditions. 
To achieve this, we construct \emph{symbolic conflict essences} (which are conflict essences equipped with an application condition), which uniquely correspond to a \emph{symbolic initial conflict}~\cite{LO21} and embed in a transformation pair if and only if that pair is conflicting. 

For a conflicting transformation pair, the set of embedded symbolic conflict essences is not minimal, in the sense that each embedded symbolic conflict essence signals a spot that causes the conflict. 
In the future, we plan to further investigate our composition operation. For example, we aim to identify the ``maximal'' ones that cannot be extended by additional spots.  
Moreover we intend to use this analysis to investigate the effects of rules on constraints to refine the graph repair approaches presented in~\cite{LauerKT24,FritscheLKST26} and to identify effective mutation orders in model-driven optimisation~\cite{JohnKLT23}.

\subsubsection{\ackname}
This work was partially funded by the German Research Foundation (DFG), project “Triple Graph Grammars (TGG) 3.0” and  ``Model-Driven Optimization in Software Engineering''.
The authors would like to thank the ICGT reviewers for their insightful comments.

\subsubsection{\discintname}

The authors have no competing interests to declare that are relevant to the content of this article.

	%
	%
	%
	\bibliographystyle{splncs04}
	\bibliography{bibliography}

@article{ACR19,
  author       = {Guilherme Grochau Azzi and
                  Andrea Corradini and
                  Leila Ribeiro},
  title        = {On the essence and initiality of conflicts in $\mathcal{M}$-adhesive transformation
                  systems},
  journal      = {J. Log. Algebraic Methods Program.},
  volume       = {109},
  year         = {2019},
  doi          = {10.1016/J.JLAMP.2019.100482}
}

@inproceedings{ACR18,
  author       = {Guilherme Grochau Azzi and
                  Andrea Corradini and
                  Leila Ribeiro},
  editor       = {Leen Lambers and
                  Jens H. Weber},
  title        = {On the Essence and Initiality of Conflicts},
  booktitle    = {Graph Transformation -- 11th International Conference, {ICGT} 2018,
                  Held as Part of {STAF} 2018, Toulouse, France, June 25--26, 2018, Proceedings},
  series       = {Lecture Notes in Computer Science},
  volume       = {10887},
  pages        = {99--117},
  publisher    = {Springer},
  year         = {2018},
  doi          = {10.1007/978-3-319-92991-0\_7}
}

@book{EEPT06,
 author       = {Hartmut Ehrig and
                  Karsten Ehrig and
                  Ulrike Prange and
                  Gabriele Taentzer},
  title        = {Fundamentals of Algebraic Graph Transformation},
  series       = {Monographs in Theoretical Computer Science. An {EATCS} Series},
  publisher    = {Springer},
  year         = {2006},
  doi          = {10.1007/3-540-31188-2},
  isbn         = {978-3-540-31187-4}
}

@book{EEGH15,
  author       = {Hartmut Ehrig and
                  Claudia Ermel and
                  Ulrike Golas and
                  Frank Hermann},
  title        = {Graph and Model Transformation -- General Framework and Applications},
  series       = {Monographs in Theoretical Computer Science. An {EATCS} Series},
  publisher    = {Springer},
  year         = {2015},
  doi          = {10.1007/978-3-662-47980-3}
}

@article{EGHLO12,
  author       = {Hartmut Ehrig and
                  Ulrike Golas and
                  Annegret Habel and
                  Leen Lambers and
                  Fernando Orejas},
  title        = {{{\(\mathcal{M}\)}-Adhesive Transformation Systems with Nested Application
                  Conditions. Part 2: Embedding, Critical Pairs and Local Confluence}},
  journal      = {Fundam. Informaticae},
  volume       = {118},
  number       = {1--2},
  pages        = {35--63},
  year         = {2012},
  doi          = {10.3233/FI-2012-705}
}

@inproceedings{EHPP04,
  author       = {Hartmut Ehrig and
                  Annegret Habel and
                  Julia Padberg and
                  Ulrike Prange},
  editor       = {Hartmut Ehrig and
                  Gregor Engels and
                  Francesco Parisi{-}Presicce and
                  Grzegorz Rozenberg},
  title        = {Adhesive High-Level Replacement Categories and Systems},
  booktitle    = {Graph Transformations, Second International Conference, {ICGT} 2004,
                  Rome, Italy, September 28 -- October 2, 2004, Proceedings},
  series       = {Lecture Notes in Computer Science},
  volume       = {3256},
  pages        = {144--160},
  publisher    = {Springer},
  year         = {2004},
  doi          = {10.1007/978-3-540-30203-2\_12},
}

@article{HabelP09,
  author       = {Annegret Habel and
                  Karl{-}Heinz Pennemann},
  title        = {Correctness of high-level transformation systems relative to nested
                  conditions},
  journal      = {Math. Struct. Comput. Sci.},
  volume       = {19},
  number       = {2},
  pages        = {245--296},
  year         = {2009},
  doi          = {10.1017/S0960129508007202}
}

@book{HT20,
  author       = {Reiko Heckel and
                  Gabriele Taentzer},
  title        = {Graph Transformation for Software Engineers -- With Applications to
                  Model-Based Development and Domain-Specific Language Engineering},
  publisher    = {Springer},
  year         = {2020},
  doi          = {10.1007/978-3-030-43916-3}
}

@article{JohnKLT23,
  author       = {Stefan John and
                  Jens Kosiol and
                  Leen Lambers and
                  Gabriele Taentzer},
  title        = {A graph-based framework for model-driven optimization facilitating
                  impact analysis of mutation operator properties},
  journal      = {Softw. Syst. Model.},
  volume       = {22},
  number       = {4},
  pages        = {1281--1318},
  year         = {2023},
  doi          = {10.1007/S10270-022-01078-X}
}

@article{LS05,
  author       = {Stephen Lack and
                  Pawel Sobocinski},
  title        = {Adhesive and quasiadhesive categories},
  journal      = {{RAIRO} Theor. Informatics Appl.},
  volume       = {39},
  number       = {3},
  pages        = {511--545},
  year         = {2005},
  doi          = {10.1051/ITA:2005028}
}

@article{LBKST19,
  author       = {Leen Lambers and
                  Kristopher Born and
                  Jens Kosiol and
                  Daniel Strüber and
                  Gabriele Taentzer},
  title        = {Granularity of conflicts and dependencies in graph transformation
                  systems: {A} two-dimensional approach},
  journal      = {J. Log. Algebraic Methods Program.},
  volume       = {103},
  pages        = {105--129},
  year         = {2019},
  doi          = {10.1016/J.JLAMP.2018.11.004}
}

@article{LO21,
  author       = {Leen Lambers and
                  Fernando Orejas},
  title        = {Transformation rules with nested application conditions: Critical
                  pairs, initial conflicts {\&} minimality},
  journal      = {Theor. Comput. Sci.},
  volume       = {884},
  pages        = {44--67},
  year         = {2021},
  doi          = {10.1016/J.TCS.2021.07.023}
}

@inproceedings{LBOST18,
  author       = {Leen Lambers and
                  Kristopher Born and
                  Fernando Orejas and
                  Daniel Strüber and
                  Gabriele Taentzer},
  editor       = {Reiko Heckel and
                  Gabriele Taentzer},
  title        = {Initial Conflicts and Dependencies: Critical Pairs Revisited},
  booktitle    = {Graph Transformation, Specifications, and Nets -- In Memory of Hartmut
                  Ehrig},
  series       = {Lecture Notes in Computer Science},
  volume       = {10800},
  pages        = {105--123},
  publisher    = {Springer},
  year         = {2018},
  doi          = {10.1007/978-3-319-75396-6\_6}
}

@inproceedings{LKST19,
  author       = {Leen Lambers and
                  Jens Kosiol and
                  Daniel Strüber and
                  Gabriele Taentzer},
  editor       = {Esther Guerra and
                  Fernando Orejas},
  title        = {Exploring Conflict Reasons for Graph Transformation Systems},
  booktitle    = {Graph Transformation -- 12th International Conference, {ICGT} 2019,
                  Held as Part of {STAF} 2019, Eindhoven, The Netherlands, July 15--16,
                  2019, Proceedings},
  series       = {Lecture Notes in Computer Science},
  volume       = {11629},
  pages        = {75--92},
  publisher    = {Springer},
  year         = {2019},
  doi          = {10.1007/978-3-030-23611-3\_5}
}

@inproceedings{LSTBH18,
  author       = {Leen Lambers and
                  Daniel Strüber and
                  Gabriele Taentzer and
                  Kristopher Born and
                  Jevgenij Huebert},
  editor       = {Michel Chaudron and
                  Ivica Crnkovic and
                  Marsha Chechik and
                  Mark Harman},
  title        = {Multi-granular conflict and dependency analysis in software engineering
                  based on graph transformation},
  booktitle    = {Proceedings of the 40th International Conference on Software Engineering,
                  {ICSE} 2018, Gothenburg, Sweden, May 27 -- June 03, 2018},
  pages        = {716--727},
  publisher    = {{ACM}},
  year         = {2018},
  doi          = {10.1145/3180155.3180258}
}

@article{LauerKT24,
  author       = {Alexander Lauer and
                  Jens Kosiol and
                  Gabriele Taentzer},
  title        = {Empowering model repair: a rule-based approach to graph repair without
                  side effects - extended version},
  journal      = {Innov. Syst. Softw. Eng.},
  volume       = {20},
  number       = {4},
  pages        = {597--618},
  year         = {2024},
  doi          = {10.1007/S11334-024-00587-W}
}

@inproceedings{Rensink04,
  author       = {Arend Rensink},
  editor       = {Hartmut Ehrig and
                  Gregor Engels and
                  Francesco Parisi{-}Presicce and
                  Grzegorz Rozenberg},
  title        = {Representing First-Order Logic Using Graphs},
  booktitle    = {Graph Transformations, Second International Conference, {ICGT} 2004,
                  Rome, Italy, September 28 -- October 2, 2004, Proceedings},
  series       = {Lecture Notes in Computer Science},
  volume       = {3256},
  pages        = {319--335},
  publisher    = {Springer},
  year         = {2004},
  doi          = {10.1007/978-3-540-30203-2\_23}
}

@inproceedings{LauerKT25,
	author       = {Alexander Lauer and
	Jens Kosiol and
	Gabriele Taentzer},
	editor       = {J{\"{o}}rg Endrullis and
	Matthias Tichy},
	title        = {Granular Conflict Analysis for Transformation Rules with Application
	Conditions},
	booktitle    = {Graph Transformation -- 18th International Conference, {ICGT} 2025,
	Held as Part of {STAF} 2025, Koblenz, Germany, June 11--12, 2025, Proceedings},
	series       = {Lecture Notes in Computer Science},
	volume       = {15720},
	pages        = {63--90},
	publisher    = {Springer},
	year         = {2025},
	doi          = {10.1007/978-3-031-94706-3\_4}
}

@book{fowler2018refactoring,
  title={Refactoring: improving the design of existing code},
  author={Fowler, Martin},
  year={2018},
  publisher={Addison-Wesley Professional}
}

@inproceedings{RensinkC25,
  author       = {Arend Rensink and
                  Andrea Corradini},
  editor       = {Nils Jansen and
                  Sebastian Junges and
                  Benjamin Lucien Kaminski and
                  Christoph Matheja and
                  Thomas Noll and
                  Tim Quatmann and
                  Mari{\"{e}}lle Stoelinga and
                  Matthias Volk},
  title        = {On Categories of Nested Conditions},
  booktitle    = {Principles of Verification: Cycling the Probabilistic Landscape --
                  Essays Dedicated to Joost-Pieter Katoen on the Occasion of His 60th
                  Birthday, Part {I}},
  series       = {Lecture Notes in Computer Science},
  volume       = {15260},
  pages        = {393--418},
  publisher    = {Springer},
  year         = {2024},
  doi          = {10.1007/978-3-031-75783-9\_16}
}

@article{FritscheLKST26,
  author       = {Lars Fritsche and
                  Alexander Lauer and
                  Maximilian Kratz and
                  Andy Schürr and
                  Gabriele Taentzer},
  title        = {Using weakest application conditions to rank graph transformations
                  for graph repair},
  journal      = {Log. Methods Comput. Sci.},
  volume       = {22},
  number       = {1},
  year         = {2026},
  doi          = {10.46298/LMCS-22(1:10)2026}
}

@inproceedings{ArendtBJKT10,
  author       = {Thorsten Arendt and
                  Enrico Biermann and
                  Stefan Jurack and
                  Christian Krause and
                  Gabriele Taentzer},
  editor       = {Dorina C. Petriu and
                  Nicolas Rouquette and
                  {\O}ystein Haugen},
  title        = {{Henshin: Advanced Concepts and Tools for In-Place {EMF} Model Transformations}},
  booktitle    = {Model Driven Engineering Languages and Systems -- 13th International
                  Conference, {MODELS} 2010, Oslo, Norway, October 3--8, 2010, Proceedings,
                  Part {I}},
  series       = {Lecture Notes in Computer Science},
  pages        = {121--135},
  publisher    = {Springer},
  year         = {2010},
  doi          = {10.1007/978-3-642-16145-2\_9}
}

@phdthesis{Pennemann2009,
  author       = {Karl{-}Heinz Pennemann},
  title        = {Development of correct graph transformation systems},
  school       = {University of Oldenburg, Germany},
  year         = {2009},
  url          = {https://nbn-resolving.org/urn:nbn:de:gbv:715-oops-9483},
  urn          = {urn:nbn:de:gbv:715-oops-9483}
}
	\newpage
	\appendix
	\section{Additional formal Results and Proofs}
\label{app:proofs}
First, we show a property of pullbacks which we will use throughout the proofs.
\begin{lemma}\label{lem:included_pullback}
	Consider the diagram below so that $g'' = m \circ g$ and $f'' = m \circ f$, the inner square $(g \circ f', f \circ g')$ is a pullback if and only if the outer square ($g''\circ f', f'' \circ g')$ is a pullback.
	
	\begin{center}
		\begin{tikzpicture}
			\matrix (m) [	matrix of math nodes,
			nodes in empty cells,
			row sep=1em,
			column sep=1em,
			minimum width=1em]
			{
				& &  & & B & & & & \\ \\
				X & & A & & & & D & & E\\ \\
				& & &   & C \\};
			\path[-stealth]
			(m-3-1) edge[] node[fill=white] {\edge{x}}(m-3-3)
			edge[bend left=20] node[fill=white] {\edge{h}}(m-1-5)
			edge[bend right = 20] node[fill=white] {\edge{k}}(m-5-5)
			(m-3-3) edge[] node[fill=white] {\edge{f'}}(m-1-5)
			edge[] node[fill=white] {\edge{g'}}(m-5-5)
			
			(m-5-5) edge[] node[fill=white] {\edge{f}}(m-3-7)
			(m-1-5) edge[] node[fill=white] {\edge{g}}(m-3-7)
			
			(m-3-7) edge[\rmono] node[fill=white] {\edge{m}}(m-3-9)
			
			(m-1-5) edge[] node[fill=white] {\edge{g''}}(m-3-9)
			(m-5-5) edge[] node[fill=white] {\edge{f''}}(m-3-9)
			
			;
			
		\end{tikzpicture}
	\end{center}
\end{lemma}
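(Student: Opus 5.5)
We prove both directions directly from the universal property of pullbacks, using only that $m$ is an $\Mono$-morphism and hence a monomorphism; no adhesivity is needed. The key observation is that a cone over the cospan $(g,f)$ and a cone over the cospan $(g'',f'')$ are the same thing. Given $h\colon X\to B$ and $k\colon X\to C$:
\begin{itemize}
	\item if $g\circ h = f\circ k$, then $g''\circ h = m\circ g\circ h = m\circ f\circ k = f''\circ k$;
	\item conversely, if $g''\circ h = f''\circ k$, then $m\circ g\circ h = m\circ f\circ k$, and cancelling the monomorphism $m$ gives $g\circ h = f\circ k$.
\end{itemize}
Also, the square $(f',g')$ commutes over $(g,f)$ iff it commutes over $(g'',f'')$, by the same argument applied to $h=f'$, $k=g'$.

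For ($\Rightarrow$), assume $(A,f',g')$ is a pullback of $(g,f)$. Take any $h,k$ with $g''\circ h = f''\circ k$. By the observation, $g\circ h = f\circ k$. The inner pullback property then yields a unique $x\colon X\to A$ with $f'\circ x = h$ and $g'\circ x = k$. Uniqueness with respect to the outer square holds because the defining equations of $x$ do not mention $g,f$ or $g'',f''$ at all.

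For ($\Leftarrow$), the argument is symmetric. Assume $(A,f',g')$ is a pullback of $(g'',f'')$. Any cone $(h,k)$ with $g\circ h = f\circ k$ is also a cone over $(g'',f'')$. So there is a unique mediating $x$, and it satisfies the same two equations.

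The only point needing care, and the only place a hypothesis is used, is cancellation in the reverse implication of the observation. This requires that $\Mono$-morphisms are monomorphisms, which holds in any adhesive HLR category since $\Mono$ is a class of monomorphisms by definition. Everything else is routine diagram chasing.
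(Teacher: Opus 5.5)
Your proof is correct and matches the paper's: both directions transfer cones between the cospans $(g,f)$ and $(g'',f'')$, post-composing with $m$ in one direction and cancelling the monomorphism $m$ in the other. You are slightly more thorough than the paper, which does not explicitly check the uniqueness of the mediating morphism or the commutativity of the two squares.
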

\begin{proof}
	\begin{enumerate}
		\item Let us first assume that the inner square $(g \circ f', f \circ g')$ is a pullback. 
		We need to show that the outer square is also a pullback. 
		Let $h \colon X \to B$ and $k \colon X \to C$ be morphisms so that $f'' \circ k = g'' \circ h$. 
		For the outer square to be a pullback, we need to show that there is a morphism $x \colon X \to A$ with $h = f' \circ x$ and $k = g' \circ x$. In particular, we have 
		\begin{align*}
			g'' \circ h &= f'' \circ k &&\iff \\
			m \circ f \circ k &= m \circ g \circ h &&\iff \\
			f \circ k &= g \circ h.
		\end{align*}
		Now, because the inner square is a pullback, there must be a morphism $x \colon X \to A$ with $k = g' \circ x$ and $h = f' \circ x$.
		\item Let us assume that the outer square $(f'' \circ g', g'' \circ f')$ is a pullback. 
		Now we want to show that the inner square is a pullback. 
		Again, let $h \colon X \to B$ and $k \colon X \to C$ be morphisms so that $g \circ h = f \circ k$.
		This implies that 
		\begin{align*}
			g \circ h &= f \circ k &&\iff \\
			m \circ g \circ h &= m \circ  f \circ k &&\iff \\
			g''\circ h &= f'' \circ k.
		\end{align*}
		Because the outer square is pullback, there is a morphism $x \colon X \to A$ with $h = f' \circ x$ and $k = g' \circ x$. \qed
	\end{enumerate}
\end{proof}

\subsection{Transfer of Definitions and Results from~\cite{LauerKT25}}\label{sec:previous_results}

In the following we present the definition of the shift operator for proto-essence, the definition of disabling essences and the results of~\cite{LauerKT25} that directly apply to our extended construction of proto-essences presented in~\cref{chapter:essences}.

\begin{definition}[Shift of proto-essence by insertion {\cite[Definition 8]{LauerKT25}}]\label{def:shift_proto_essence}
	Given two rules $\rleAC{1}$ and $\rleAC{2}$ and a proto essence by insertion $pe \coloneqq (a \circ c \colon C \to R_1, b_i \circ c \colon C \to P_i) \in \IPEss{}{\rho_1}{\rho_2}$, the \emph{shift of $pe$ over $\rho_1$}, denoted by $\PEShift{\rho_1}{pe}$, is defined as 
	\begin{equation*}
		\PEShift{\rho_1}{pe} = (a' \circ c', b_i' \circ c')
	\end{equation*}
	where (1) in the diagram below is a pullback, (IPO) is the initial pushout over $k_i$ (both stem from the computation of $pe$ and (2) is a pushout constructed by computing the pushout complement of $b_i$ and $k_i$. 
	
	\begin{center}
		\begin{tikzpicture}
			\matrix (m) [	matrix of math nodes,
			nodes in empty cells,
			row sep=1em,
			column sep=1em,
			minimum width=1em]
			{
				&    & & B_{}& &  C_{}& &  \\
				&    & & & \text{\scriptsize(IPO)} & & & \\
				&    & & A_i'& &   A_i& & & & A_i'\\
				&    & & &\text{\scriptsize (1)} &  & & & \text{\scriptsize (2)} \\
				&L_1 & & K_1 & & R_1 & & P_i & & P_i'\\};
			
			\path[-stealth]
			(m-1-4) edge[] node[fill=white] {\edge{c'}} (m-3-4)
			edge[\rmono] node[fill=white] {\edge{k_i'}} (m-1-6)
			(m-3-4) edge[] node[] {}(m-5-4)
			edge[] node[fill=white] {\edge{a'}}(m-5-2)
			edge[\rmono] node[fill=white] {\edge{k_i}}(m-3-6)
			
			(m-5-4) edge[\rmono] node[fill=white] {\edge{r_1}}(m-5-6)
			edge[\lmono] node[fill=white] {\edge{l_1}}(m-5-2)
			
			(m-5-4) edge[\rmono] node[fill=white] {\edge{r_1}}(m-5-6)
			(m-3-6) edge node[fill=white]{\edge{a}} (m-5-6)
			edge node[fill=white]{\edge{b_i}} (m-5-8)
			(m-1-6) edge node[fill=white]{\edge{c}} (m-3-6)
			(m-3-10) edge[\lmono] node[fill=white]{\edge{k_i}}(m-3-6)
			edge[] node[fill=white]{\edge{b_i'}}(m-5-10)
			(m-5-10) edge[\lmono] node[fill=white] {\edge{p'_{P'_i}}}(m-5-8)
			;
		\end{tikzpicture}
	\end{center}
\end{definition}

\begin{definition}[Disabling essence {\cite[Definition 9]{LauerKT25}}]
	\label{def:asynchronous_essences}
	Given two rules $\rleAC{1}$ and $\rleAC{2}$, a proto essence by deletion $(a \circ c \colon C \to L_1, b_i \circ c \colon C \to P_i) \in \DPEss{}{\rho_1}{\rho_2}$ is a \emph{disabling essence by deletion} of $\rho_1$ in $\rho_2$ if 
	\begin{enumerate}
		\item if it is embedded in a transformation pair $(\trans{L_1P_i}{e_{L_1}}{\rho_1}{H_1}, \trans{L_1P_i}{e_{P_i} \circ p_{L_2}}{\rho_2}{H_2})$ where $(e_{L_1} \colon L_1 \to L_1P_i, e_{P_i} \colon P_i \to L_1P_i) \in \Epi$,
		\item $P_i$ is existentially bound in $ac_2$ or $P_i = L_2$, and 
		\item $(a,b_i) \neq (!_{L_1}, !_{P_i})$,
	\end{enumerate} 
	Given a proto-essence by insertion $(a \circ c \colon C \to R_1, b_i \circ c \colon C \to P_i) \in \IPEss{}{\rho_1}{\rho_2}$, the span $\PEShift{\rho_1}{a \circ c, b_i \circ c}$ is a \emph{disabling essence by insertion} of $\rho_1$ in $\rho_2$ if 
	\begin{enumerate}
		\item if it is embedded in a transformation pair $(\trans{L_1P_i}{e_{L_1}}{\rho_1}{H_1}, \trans{L_1P_i}{e_{P_i} \circ p_{L_2}}{\rho_2}{H_2})$ where $(e_{L_1} \colon L_1 \to L_1P_i, e_{P_i} \colon P_i \to L_1P_i) \in \Epi$,
		\item $P_i$ is universally bound in $ac_2$, and 
		\item $(a_i, b_i) \neq (!_{R_1}, !_{P_i})$. 
	\end{enumerate}
	The set of all disabling essences of two rules $\rho_1$ and $\rho_2$ is denoted by $\ESS{\rho_1}{\rho_2}$.

\end{definition}

The following proposition shows that the set of disabling essences for rules with application conditions is an extension of disabling essences for plain rules~\cite{ACR19}.
For this, we show that the set of disabling essences for rules with application conditions contains the set disabling essences for plain rules. 
Moreover, each disabling essence, where both morphisms have the \acp{LHS} of the rules as codomain, are also disabling essences for plain rules. 
This means that our approach exactly computes disabling essences for plain rules.
\begin{proposition}[Containment of disabling essence for plain rules]\label{prop:containment_proto_essences}
	Given two rules $\rho_1 = \rleAC{1}$, $\rho_2 = \rleAC{2}$, the set of disabling essence by deletion of $\rho_1$ in $\rho_2$ contains the set of disabling essences for the plain rules of $\rlePlain{1}$ in $\rlePlain{2}$. 
	Moreover, each disabling essence by deletion of the form $ b_1 \circ c \colon C \to L_1, b_2 \circ c \colon C \to L_2$ is a disabling essence of $\rlePlain{1}$ in $\rlePlain{2}$.
\end{proposition}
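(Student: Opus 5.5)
We prove the two claims separately, reducing both to the known characterisation of disabling essences for plain rules from~\cite{ACR19} and to the level-wise definition of $\DPEss{i}{e_{L_1}}{e_{P_n}}$.

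\emph{Containment.} Let $(b_1\circ c, b_2\circ c)$ be a disabling essence for the plain rules. By~\cite{ACR19} it equals $\Ess{l_1}{e_{L_1}}{e_{L_2}}$ for some cospan $(e_{L_1}\colon L_1\to L_1L_2, e_{L_2}\colon L_2\to L_1L_2)\in\Epi$ for which $e_{L_1}, e_{L_2}$ are matches of a transformation pair. We take the path of length zero, $P_n = L_2$, which \cref{def:essencesAC} explicitly allows. The recursion then consists only of the base case, so $\DPEss{0}{e_{L_1}}{e_{L_2}} = \Ess{l_1}{e_{L_1}}{e_{L_2}}$, and the proto-essence coincides with the plain one. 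We then check the three conditions of \cref{def:asynchronous_essences}:
\begin{enumerate}
\item Embeddability in a transformation pair over $L_1L_2$ holds by taking the given matches. This is ac-disregarding, and, as in~\cite{LauerKT25}, condition~1 is read with respect to ac-disregarding transformations.
\item $P_i = L_2$ is explicitly permitted.
\item Non-triviality holds because a plain disabling essence is non-trivial~\cite[Theorem~3.13]{ACR19}.
\end{enumerate}
Hence the plain essence lies in $\ESS{\rho_1}{\rho_2}$ as an essence by deletion.

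\emph{Converse.} Let $(b_1\circ c\colon C\to L_1, b_2\circ c\colon C\to L_2)$ be a disabling essence by deletion. Its second codomain is $L_2$, so it is $\Ess{l_1}{e_{L_1}}{e_{P_i}}$ at some level $i$ with $P_i = L_2$. Since $\Mono$-normal form forbids isomorphisms in conditions, no $P_i$ with $i\ge1$ on a path can equal $L_2$. Therefore the essence must come from level $0$ or from the path $P_n = L_2$. In both cases it is $\Ess{l_1}{e'_{L_1}}{e'_{L_2}}$ for the induced morphisms $e'_{L_1} = e_{L_1}$ and $e'_{L_2} = e_{P_n}\circ p_n\circ\dots\circ p_1$ into $L_1P_n$.

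The \emph{main obstacle} is that $(e'_{L_1}, e'_{L_2})$ need not be jointly epimorphic, whereas plain essences are defined via $\Epi$-cospans or matches. To handle this, I would factor $(e'_{L_1}, e'_{L_2})$ through its unique $\Epi$-$\Mono$ factorization $L_1\to E\hookrightarrow L_1P_n$. By \cref{lem:included_pullback}, the pullbacks (1) and (2) of \cref{fig:Construction_essences} computed in $E$ coincide with those computed in $L_1P_n$. The initial pushout (3) depends only on $k_1$, so it is unchanged as well. Hence the proto-essence is unchanged.

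It remains to see that the factored matches into $E$ still form a transformation pair. Condition~1 gives a transformation pair over $L_1P_n$. Pushout complements along $\Mono$ restrict along the $\Mono$-morphism $E\hookrightarrow L_1P_n$, using the gluing condition, which is stable under $\Mono$-restriction in adhesive HLR categories. This yields a transformation pair at $E$. Consequently the essence is the plain disabling essence of this pair, which completes the proof.
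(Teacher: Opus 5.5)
Your proof is correct and uses the same ingredients as the paper's: the level-$0$ step of \cref{def:essencesAC} coincides with the plain construction, $P_n = L_2$ is admissible, and $\Epi$-$\Mono$ factorization together with \cref{lem:included_pullback} leaves the essence unchanged. The paper, however, uses the factorization in the containment direction, which is precisely the step you delegate to~\cite{ACR19}. In your converse direction it is unnecessary, because condition~1 already provides transformations at $L_1P_n$ and \cref{def:essential_conflicts} admits matches into arbitrary objects. Your restriction of transformations along $\Mono$ is in fact what the paper's containment argument tacitly needs for condition~1. It is best justified by pullback-stability of pushouts along $\Mono$ rather than a gluing condition, since initial pushouts are only assumed over $\Mono$-morphisms.

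One caveat: \cite[Theorem~3.13]{ACR19} does not make every plain disabling essence non-trivial. By \cref{def:essential_conflicts}, the trivial essence of a non-conflicting pair also qualifies, yet condition~3 of \cref{def:asynchronous_essences} excludes it. So the containment holds only for essences of pairs in which $t_1$ actually disables $t_2$, a reading the paper also leaves implicit.
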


\begin{proof}[of \cref{prop:containment_proto_essences}]
	Let a disabling essence by deletion $de = (b_1 \circ c \colon C \to L_1, b_2 \circ c \colon C \to L_2) \in \DEss{\rho_1}{\rho_2}$ be given. 
	By \cref{def:essencesAC} there is either a pair of morphism $(e_{L_1} \colon L_1 \to L_1P_n, e_{P_n} \colon P_n \to L_1P_n) \in \Epi$, where $P_n$ is either a leaf of the tree-repesentation of $ac_2$ or the \ac{LHS} $L_2$ of $\rho_2$  so that $\DPEss{n}{e_{L_1}}{e_{P_n}} = de$. 
	In both cases, the morphism $b_2 \circ c$ has codomain $L_2$ if and only if
	$\DPEss{n}{e_{L_1}}{e_{P_n}} = \DPEss{0}{e_{L_1}}{e_{P_n}} = \ess_{l_1}(e_{L_1}, e_{P_n} \circ \ldots \circ p_1) \neq (!_{L_1}, !_{L_2})$. 
	Because the first iteration of our construction is identical to the construction of disabling essences for plain rules \cite[Definition 3.8]{ACR19}, $de$ is also a disabling essence for plain rules. 
	Finally, we must show that each disabling essence for plain rules is contained in $\DEss{\rho_1}{\rho_2}$:
	Disabling essence of plain rules are defined as $\ess_{l_1}(m_1,m_2)$ for some cospan $(m_1 \colon L_1 \to G, m_2 \colon L_2 \to G)$. 
	I.e. the construction starts by calculating the pullback of $(m_1,m_2)$ as shown in the diagram below. By \EMfactorization\ there is a pair of morphisms $(e_{L_1} \colon L_1 \to L_1L_2, e_{L_2} \colon L_2 \to L_1L_2) \in \Epi$ and a monomorphism $m^* \colon L_1L_2 \mono G \in \Mono$ with $m_1 = m^* \circ e_{L_1}$ and $m_2 = m^* \circ e_{L_2}$. By \cref{lem:included_pullback} the square (1) is also a pullback and $\ess_{l_1}(m_1,m_2) = \ess_{l_1}(e_{L_1}, e_{L_2}) = \DPEss{n}{e_{L_1}}{e_{L_2}}$, i.e., this is also a disabling essence for rules with application conditions. \qed
	
	\begin{center}
		\begin{tikzpicture}
			\matrix (m) [	matrix of math nodes,
			nodes in empty cells,
			row sep=1em,
			column sep=1em,
			minimum width=1em]
			{
				& & L_1 & & & & \\ 
				A & & \text{\scriptsize(1)}& & L_1L_2 & & G\\ 
				&   & L_2 \\};
			\path[-stealth]
			\Edge{2-1}{}{fill=white}{b_1}{1-3}
			\Cedge{}{fill=white}{b_2}{3-3}			
			
			\Edge{3-3}{bend right = 10}{fill=white}{m_2}{2-7}
			\Cedge{}{fill=white}{e_{L_2}}{2-5}
			
			\Edge{1-3}{bend left = 10}{fill=white}{m_1}{2-7}
			\Cedge{}{fill=white}{e_{L_1}}{2-5}
			
			\Edge{2-5}{\rmono}{fill=white}{m^{*}}{2-7}
			
			;
			
		\end{tikzpicture}
	\end{center}
\end{proof}

Finally, we show that a disabling essence is embedded in each parallel dependent transformation pair.
\begin{theorem}\label{thm:Embedding_of_disabling_essence}
	For each pair of parallel dependent transformations $(t_1,t_2)$ via rules $\rho_1$ and $\rho_2$, there is a disabling essence $de \in \ESS{\rho_1}{\rho_2}$ that is embedded in $(t_1,t_2)$.
\end{theorem}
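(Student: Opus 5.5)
We argue by cases on the reason for parallel dependence. By \cref{def:parallel_independence}, $(t_1,t_2)$ is parallel dependent if and only if, for some direction (say $t_1$ causes a conflict for $t_2$; the other direction is symmetric with the roles of the rules swapped), either (a) no morphism $d_1\colon L_2\to D_1$ with $g_1\circ d_1=m_2$ exists, or (b) $d_1$ exists but $h_1\circ d_1\not\models ac_2$.

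\emph{Case (a).} Here the underlying plain transformations are in conflict. By \cite[Theorem~3.13]{ACR19}, the plain disabling essence $\essdbl(t_1,t_2)=\ess_{l_1}(m_1,m_2)$ is non-trivial. We take the \EMfactorization\ $m_1=m^*\circ e_{L_1}$, $m_2=m^*\circ e_{L_2}$ with $m^*\in\Mono$. As in the proof of \cref{prop:containment_proto_essences}, \cref{lem:included_pullback} shows that the pullbacks over $L_1L_2$ and over $G$ coincide. Hence $\ess_{l_1}(e_{L_1},e_{L_2})=\DPEss{0}{e_{L_1}}{e_{L_2}}$ lies in $\DEss{\rho_1}{\rho_2}$: the three conditions of \cref{def:asynchronous_essences} hold with $P_i=L_2$, and the embedding into the transformation pair over $L_1L_2$ is obtained by restricting $t_1,t_2$ along $m^*$. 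The embedding morphisms into $(t_1,t_2)$ are $m_1$ and $m^*\circ e_{L_2}$ composed appropriately. The pullback square (1) of \cref{def:embedding:spans} again follows from \cref{lem:included_pullback}.

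\emph{Case (b).} Since $h_1\circ d_1\not\models ac_2$ while $m_2=g_1\circ d_1\models ac_2$, we use the recursive semantics to descend the tree $\Tree(ac_2)$ simultaneously at $G$ and at $H_1$. At each node we follow the branch where satisfaction differs, using the pushed-inward negation normal form. This yields a path $L_2\to P_1\to\dots\to P_i$ and a witness of one of two kinds. Either an $\Mono$-occurrence $q\colon P_i\to G$ of an existentially bound $P_i$ that does not factor through $g_1$ (so that it is destroyed by $t_1$), or an occurrence $q'\colon P_i\to H_1$ of a universally bound $P_i$ that does not factor through $h_1$ (so that it is created by $t_1$). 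We extend the path to a leaf $P_n$ only if needed; since the recursion in $\dpess$ stops at the first non-trivial level, a leaf below $P_i$ is harmless.

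In the deletion case, we factor $(m_1,q)$ through $\Epi$ as $(e_{L_1},e_{P_i})$ followed by an $\Mono$-morphism into $G$. Levels below $i$ are either trivial or already yield a non-trivial essence that embeds. Non-factorization of $q$ through $g_1$ means that the pullbacks (1) and (2) of \cref{fig:Construction_essences}, computed for $P_i$, give a non-iso $k_1$, hence a non-trivial $\ess_{l_1}$, again via \cref{lem:included_pullback}. The insertion case is dual: we use $R_1$, $n_1$ and $q'$, apply the same argument to $\rho_1^{-1}$, and then transport the result along $\PEShift{\rho_1}{\cdot}$ of \cref{def:shift_proto_essence}.

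The main obstacle is the bookkeeping in case (b). We must show that the level at which $\dpess$ stops indeed yields an essence that embeds into $(t_1,t_2)$, with the pullback condition of \cref{def:embedding:spans} holding for the embedding morphisms. In the insertion case we must also show that the pushout-complement construction in $\pshift$ preserves embeddability back into $G$. For the latter I would first verify that $q'$ restricted to the shifted object factors through $D_1$ and then through $g_1$, and then use pushout--pullback decomposition in the adhesive HLR category.
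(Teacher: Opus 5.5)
Your proposal follows the paper's proof. It uses the same case split into destruction of the match $m_2$ (settled via \cite[Theorem~3.13]{ACR19}, \EMfactorization\ and \cref{lem:included_pullback}, exactly as in \cref{prop:containment_proto_essences}) and violation of $ac_2$, and your simultaneous descent through $\Tree(ac_2)$ to a destroyed existentially bound or created universally bound occurrence is precisely the argument the paper delegates to the ANF proof of \cite[Theorem~2]{LauerKT25}, again closed off by \EMfactorization\ and \cref{lem:included_pullback}, with the case where only $t_2$ causes a conflict for $t_1$ handled by symmetry (yielding an essence in $\DEss{\rho_2}{\rho_1}$) in both.
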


\begin{proof}[of \Cref{thm:Embedding_of_disabling_essence}]
	Given a pair of transformations $(t_1,t_2) = (\trans{G}{\rho_1}{m_1}{H_1}, \trans{G}{\rho_2}{m_2}{H_2})$ via rules $\rho_1 = \rleAC{1}$ and $\rho_2 = \rleAC{2}$ so that $t_1$ causes a conflict for $t_2$.
	For this, there are two possibilities, either $t_1$ destroys the match $m_2$ of $t_2$, or it introduces a violation of the application condition $ac_2$ of $\rho_2$.
	If the conflict is caused by destroying the match $m_2$, \cref{prop:containment_proto_essences} implies that a disabling essence of $\rho_1$ in $\rho_2$ embeds into the transformation pair \cite[Theorem 3.13]{ACR19}. 
	
	Otherwise, if the application condition $ac_2$ is violated, the proof is analogous to the proof of \cite[Theorem 2]{LauerKT25}. 
	The only difference is that we consider morphisms that lead to the leaves of the tree representation of the application condition (instead of just considering the morphisms in a condition in \ac{ANF}) and that we need to obtain a morphism pair contained in $\Epi$ to argue that the embedded span is actually contained in $\DEss{\rho_1}{\rho_2}$. This morphism pair can obtained by using \EMfactorization\ at an appropriate point and using \Cref{lem:included_pullback}. \qed
	
\end{proof}

\subsection{Proofs}

\begin{proof}[of \cref{thm:disjointness_ac-disregarding}]
	Let two rules $\rho_i = \rleAC{i}$, with $i = 1,2$, an ac-conflicting rule overlap $ro = (b_i \colon A \to P_i, b_j \colon A \to P_j, p_{L_1}, p_{L_2})$ of $\rho_1$ and $\rho_2$, and a transformation pair $(t_1,t_2)$ be given so that $ro$ is embedded in $(t_1,t_2)$. 
	We want to show that $(t_1,t_2)$ is parallel independent ac-disregaring.
	Because $ro$ is ac-conflicting, there is an ac-disregarding parallel independent transformation pair $(t_1',t_2') = (\trans{G'}{\rho_1}{m_1'}{H_1'}, \trans{G'}{\rho_2}{m_2'}{H_2'})$ so that $ro$ is embedded via morphisms $m_1^i$ and $m_2^j$ as shown in the diagram below. 
	We can construct the pullback of $m_1'$ and $m_2'$ by calculating the pullbacks (2), (3) and (4). 
	Because $t_1'$ and $t_2'$ are parallel independent ac-disregarding, the morphisms $l_1'$ and $l_2'$ constructed by calculating the pullbacks (5) and (6) are isomophisms. 
	We can use (2), (3) and (4) to also construct the pullback of $m_1$ and $m_2$ which results in the same morphisms as for $m_1'$ and $m_2'$.
	This implies that $t_1$ and $t_2$ are parallel independent ac-disregarding.
	
	\begin{center}
		\begin{tikzpicture}
			\matrix (m) [	matrix of math nodes,
			nodes in empty cells,
			row sep=1em,
			column sep=1em,
			minimum width=1em]
			{
				K_1' & & & & A_{}' & & & & K_2' \\ \\
				& \text{\scriptsize(5)}& & D_1 & \text{\scriptsize(4)} & D_2 & & \text{\scriptsize(6)} \\ \\
				K_1& & L_1&  \text{\scriptsize(2)} & A_{} & \text{\scriptsize(3)} & L_2 && K_2 \\ \\
				& &  & P_i& \text{\scriptsize(1)}& P_j &  & &  \\ \\
				& & & &  G_{}/G'\\
			};
			\path[-stealth] 
			
			\Edge{1-1}{}{}{}{5-1}
			\Cedge{\rmono}{fill=white}{l_1'}{1-5}
			
			\Edge{1-9}{}{}{}{5-9}
			\Cedge{\lmono}{fill=white}{l_2'}{1-5}
			
			\Edge{1-5}{}{}{}{3-4}
			\Cedge{}{}{}{3-6}
			
			\Edge{3-4}{}{}{}{5-3}
			\Cedge{}{}{}{5-5}
			
			\Edge{3-6}{}{}{}{5-7}
			\Cedge{}{}{}{5-5}
			
			\Edge{5-5}{}{}{}{7-4}
			\Cedge{}{}{}{7-6}
			
			\Edge{5-3}{}{fill=white}{p_{L_1}}{7-4}
			\Cedge{bend right = 40}{fill= white}{m_1/m_1'}{9-5}
			\Edge{5-7}{}{fill=white}{p_{L_2}}{7-6}
			\Cedge{bend left=40}{fill= white}{m_1/m_1'}{9-5}
			
			\Edge{7-4}{}{fill=white}{m_1^i/m_1^{i'}}{9-5}
			\Edge{7-6}{}{fill=white}{m_2^j/m_2^{j'}}{9-5}
			
			\Edge{5-1}{\rmono}{fill=white}{l_1}{5-3}
			\Edge{5-9}{\lmono}{fill=white}{l_2}{5-7}

			;
		\end{tikzpicture}
	\end{center}
	
	We show the second part of the claim by contradiction: If a non-ac-conflicting rule overlap is embedded in a ac-disregarding parallel independent transformation, this implies that the rule overlap is ac-conflicting (by definition). \qed
\end{proof}

\begin{proof}[of \cref{thm:embedding_of_ac_conflicting_essences}]
	Given a pair of transformations $(t_1,t_2) = (\trans{G}{\rho_1}{m_1}{H_1}, \trans{G}{\rho_1}{m_1}{H_1})$ via rules $\rho_1 = \rleAC{1}$ and $\rho_2 = \rleAC{2}$ and a disabling essence $de = (b_1 \colon A \to L_1, b_j \colon A \to P_j, id_{L_1}, p_{L_2})$ that is embedded in $(t_1,t_2)$ via morphisms $m_1 \colon L_1 \to G$ and $m_2^j \colon P_j \to G$ as shown in the diagram below.
	To show the claim, we must show that (a) $l_2'$ (constructed by the pullbacks (2) and (3) shown in the diagram below) is an isomorphism if $de$ is ac-conflicting and that (b) $de$ is ac-conflicting if $l_2'$ is an isomorphism.
	We start by showing (a): Because $de$ is ac-conflicting, we know that the transformation pair $(t_1,t_2)$ is parallel independent ac-disregarding (\cref{def:ac-conflicting} and \cref{thm:disjointness_ac-disregarding}). 
	The square (1) + (2) (as composition of pullbacks) is a pullback of $m_1$ and $m_2$. 
	By the definition of parallel independence (\cref{def:parallel_independence}), the morphism $l_2'$ obtained by calculating the pullback (3) (and also by calculating (2) and (3)) must be an isomorphism. 
	
	To show (b), we assume that $l_2'$ is an isomorphism and we must show that the embedding of $de$ implies that $t_1$ and $t_2$ are parallel independent ac-disregarding. 
	Let us first show that $t_1$ does not cause a conflict for $t_2$: 
	\begin{center}
		\begin{tikzpicture}
			\matrix (m) [	matrix of math nodes,
			nodes in empty cells,
			row sep=1em,
			column sep=1em,
			minimum width=1em]
			{
				& &  & & & A_{} & & & L_2' & & K_2' \\ 
				& & & & & & & \text{\scriptsize(2)} & & \text{\scriptsize(3)}\\
				R_1 & & K_1 & & L_1 & \text{\scriptsize(1)}& P_j & & L_2 & & K_2 & & R_2 \\ \\
				H_1 & & D_1 & & & G & & & & & D_2 & & H_2 \\
			};
			\path[-stealth] 
			\Edge{1-6}{}{fill=white}{b_1}{3-5}
			\Cedge{}{fill=white}{b_j}{3-7}
			
			\Edge{1-9}{}{fill=white}{p'_{L_2}}{1-6}
			\Cedge{}{}{}{3-9}
			
			\Edge{1-11}{\lmono}{fill=white}{l_2'}{1-9}
			\Cedge{}{}{}{3-11}
			
			\Edge{3-11}{\lmono}{fill=white}{l_2}{3-9}
			\Cedge{}{}{}{5-11}
			\Cedge{\rmono}{fill=white}{r_2}{3-13}
			
			\Edge{5-11}{\lmono}{fill=white}{g_2}{5-6}
			\Cedge{\rmono}{fill=white}{h_2}{5-13}
			
			\Edge{3-13}{}{}{}{5-13}
			\Edge{3-1}{}{}{}{5-1}

			\Edge{5-3}{\rmono}{fill=white}{g_1}{5-6}
			\Cedge{\lmono}{fill=white}{h_1}{5-1}
			
			\Edge{3-3}{\rmono}{fill=white}{l_1}{3-5}
			\Cedge{\lmono}{fill=white}{r_1}{3-1}
			\Cedge{}{}{}{5-3}
			
			\Edge{3-5}{}{fill=white}{m_1}{5-6}
			\Edge{3-9}{}{fill=white}{m_2}{5-6}
			\Cedge{}{fill=white}{p_{L_2}}{3-7}
			
			\Edge{3-7}{}{fill=white}{m_2^j}{5-6}
			;
		\end{tikzpicture}
	\end{center}
	Again, the square (1) + (2) is a pullback of of $m_1$ and $m_2$. 
	Since $l_2'$ is an isomorphism, \cref{def:parallel_independence} implies that $t_1$ does not cause a conflict for $t_2$.

	To show that $t_2$ does not cause a conflict for $t_1$ we must consider whether $de$ is an inserting or a deleting essence. 
	We start by assuming that $de$ is a deleting essence:
		
	\begin{center}
		\begin{tikzpicture}
			\matrix (m) [	matrix of math nodes,
			nodes in empty cells,
			row sep=1em,
			column sep=1em,
			minimum width=1em]
			{
				K_1' & & & L_1' \\ 
				& \text{\scriptsize(4)}\\
				K_1 & & L_1 & \text{\scriptsize(1) + (2)}& L_2 \\ \\
				& & & G\\ 
			};
			\path[-stealth] 
			\Edge{1-1}{\rmono}{fill=white}{l_1'}{1-4}
			\Cedge{}{}{}{3-1}
			\Edge{3-1}{\rmono}{fill=white}{l_1}{3-3}
			
			\Edge{3-3}{}{fill=white}{m_1}{5-4}

			\Edge{3-5}{}{fill=white}{m_2}{5-4}

			\Edge{1-4}{}{}{}{3-3}
			\Cedge{}{}{}{3-5}
			
			;
		\end{tikzpicture}
	\end{center}
	Since $de$ is a disabling essence of $\rho_1$ in $\rho_2$, we know that $(b_1, b_j)$ is calculated as pullback of some morphism pair $(e_{L_1}, e_{P_n} \circ \ldots \circ p_{j+1}) \in \Epi$ during the construction of disabling essences. Moreover, we know that (1) + (2) and (4) are also computed during the computation of the disabling essences (in the first iteration in \cref{def:essencesAC}).
	This implies that $l_1'$ must be an isomorphism (as otherwise $de$ would not be a disabling essence of $\rho_1$ and $\rho_2$ when starting the construction with $(e_{L_1}, e_{P_n} \circ \ldots \circ p_{j+1})$). 
	I.e., $t_2$ does not cause a conflict for $t_1$.
	
	\begin{center}
		\begin{tikzpicture}
			\matrix (m) [	matrix of math nodes,
			nodes in empty cells,
			row sep=1em,
			column sep=1em,
			minimum width=1em]
			{
				D & & L_2' \\ 
				& \text{\scriptsize(6)}\\
				A & & A \\ 
				& \text{\scriptsize(5)}\\
				K_{} && L_1 && P_j \\ \\
				D_1 & & & G \\
			};
			
			\path[-stealth] 
			\Edge{1-1}{}{}{}{3-1}
			\Cedge{\rmono}{fill=white}{$k$}{1-3}
			
			\Edge{3-1}{\rmono}{fill=white}{id_{A}}{3-3}
			\Cedge{}{fill=white}{a}{5-1}
			
			\Edge{5-1}{\rmono}{fill=white}{l_1}{5-3}
			\Cedge{}{}{}{7-1}
			\Edge{7-1}{\rmono}{fill=white}{g_1}{7-4}
			
			\Edge{1-3}{}{fill=white}{p'_{L_2}}{3-3}
			\Edge{3-3}{}{fill=white}{b_1}{5-3}
			\Cedge{}{fill=white}{b_j}{5-5}
			
			\Edge{5-3}{}{fill=white}{m_1}{7-4}
			\Edge{5-5}{}{fill=white}{m_2^j}{7-4}
			;
		\end{tikzpicture}
	\end{center}
	
	If $ce$ is a inserting essence, by construction of the shift of proto-essences (\cref{def:shift_proto_essence}), there is a morphism $a \colon A \to L_1$ with $b_1 = l_1 \circ a$. 
	This implies that the square (5) shown in the figure above is a pullback. 
	Then, we construct pullback (6) (where $L_2'$ is obtained by computing the pullback of $m_1$ and $m_2$ as shown above) and the square (5) + (6) is also a pullback (as composition of pullbacks). 
	Because pullbacks are stable under isomorphisms, $k \colon D \to L_2'$ is an isomorphism, i.e., $t_2$ does not cause a conflict for $t_1$. \qed

\end{proof}

	\begin{figure}[t]
	\begin{tikzpicture}
		\matrix (m) [	matrix of math nodes,
		nodes in empty cells,
		row sep=1em,
		column sep=.8em,
		minimum width=1em]
		{	
			& & & &   & & &   & & &  \\
			& & & &   & & &  A & & &  \\				\\			
			R_1 & & K_1 & & L_1 & & P_j & \text{\scriptsize(1)} & P_i & & L_2 & & K_2 & & R_2 \\ \\
			& & & & & & &  P_iP_j \\				\\
			H_1 & & D_1 & &     & &  & G_{}&    & &     & & D_2 & & H_2 \\
		};
		\path[-stealth]

		(m-4-3) edge[\lmono] node [above] {\edge{l_1}}(m-4-1)
		edge[\rmono] node [above] {\edge{r_1}}(m-4-5)
		
		(m-8-3) edge[\lmono] node [above] {\edge{g_1}}(m-8-1)
		edge[\rmono] node [above] {\edge{h_1}}(m-8-8)
		
		(m-4-13) edge[\lmono] node [above] {\edge{l_1}}(m-4-11)
		edge[\rmono] node [above] {\edge{r_1}}(m-4-15)
		
		(m-8-13) edge[\rmono] node [above] {\edge{h_2}}(m-8-15)
		edge[\lmono] node [above] {\edge{g_2}}(m-8-8)
		
		(m-4-1) edge node [fill=white] {\edge{n_1}}(m-8-1)
		(m-4-3) edge node [] {}(m-8-3)
		
		(m-4-15) edge node [fill=white] {\edge{n_2}}(m-8-15)
		(m-4-13) edge node [] {}(m-8-13)
		
		(m-4-5) edge node [fill=white]{\edge{p_{L_1}}}(m-4-7)
		(m-4-5) edge[bend right = 20] node [fill=white, near end]{\edge{m_1}}(m-8-8)
		
		(m-4-11) edge node [fill=white]{\edge{p_{L_2}}}(m-4-9)
		(m-4-11) edge[bend left = 20] node [fill=white, near end]{\edge{m_2}}(m-8-8)
		
		(m-4-7) edge node [fill=white]{\edge{e_{P_j}}} (m-6-8)
		edge[bend right = 15] node [fill=white]{\edge{m_1^j}} (m-8-8)
		
		(m-4-9) edge node [fill=white]{\edge{e_{P_i}}} (m-6-8)
		edge[bend left = 15] node [fill=white]{\edge{m_2^i}} (m-8-8)
		
		(m-2-8) edge node [fill=white]{\edge{b_j'}}(m-4-7)
		edge node [fill=white]{\edge{b_i'}}(m-4-9)
		
		(m-6-8) edge[\lmono] node [fill=white]{\edge{m^*}}(m-8-8)

		;
		
	\end{tikzpicture}
	\begin{tikzpicture}
		\matrix (m) [	matrix of math nodes,
		nodes in empty cells,
		row sep=1em,
		column sep=.8em,
		minimum width=1em]
		{	
			& & & &   & & &   & & &  \\
			& & & &   & & &  A & & &  \\				\\			
			R_1 & & K_1 & & L_1 & & P'_j & \text{\scriptsize(2)} & P'_i & & L_2 & & K_2 & & R_2 \\ \\
			& & & & & & &  P'_iP'_j \\				\\
			H_1 & & D_1 & &     & &  & G_{}&    & &     & & D_2 & & H_2 \\
		};
		\path[-stealth]

		(m-4-3) edge[\lmono] node [above] {\edge{l_1}}(m-4-1)
		edge[\rmono] node [above] {\edge{r_1}}(m-4-5)
		
		(m-8-3) edge[\lmono] node [above] {\edge{g_1}}(m-8-1)
		edge[\rmono] node [above] {\edge{h_1}}(m-8-8)
		
		(m-4-13) edge[\lmono] node [above] {\edge{l_1}}(m-4-11)
		edge[\rmono] node [above] {\edge{r_1}}(m-4-15)
		
		(m-8-13) edge[\rmono] node [above] {\edge{h_2}}(m-8-15)
		edge[\lmono] node [above] {\edge{g_2}}(m-8-8)
		
		(m-4-1) edge node [fill=white] {\edge{n_1}}(m-8-1)
		(m-4-3) edge node [] {}(m-8-3)
		
		(m-4-15) edge node [fill=white] {\edge{n_2}}(m-8-15)
		(m-4-13) edge node [] {}(m-8-13)
		
		(m-4-5) edge node [fill=white]{\edge{p'_{L_1}}}(m-4-7)
		(m-4-5) edge[bend right = 20] node [fill=white, near end]{\edge{m_1}}(m-8-8)
		
		(m-4-11) edge node [fill=white]{\edge{p'_{L_2}}}(m-4-9)
		(m-4-11) edge[bend left = 20] node [fill=white, near end]{\edge{m_2}}(m-8-8)
		
		(m-4-7) edge node [fill=white]{\edge{e_{P'_j}}} (m-6-8)
		edge[bend right = 15] node [fill=white]{\edge{m_1^{j'}}} (m-8-8)
		
		(m-4-9) edge node [fill=white]{\edge{e_{P'_i}}} (m-6-8)
		edge[bend left = 15] node [fill=white]{\edge{m_2^{i'}}} (m-8-8)
		
		(m-2-8) edge node [fill=white]{\edge{b_j}}(m-4-7)
		edge node [fill=white]{\edge{b_i}}(m-4-9)
		
		(m-6-8) edge[\lmono] node [fill=white]{\edge{m^*}}(m-8-8)

		;
		
	\end{tikzpicture}
	\caption{Embedding of two rule overlaps in a transformation pair} \label{fig:embedding_rule_overlap}
\end{figure}

\begin{proof}[of \cref{thm:embedding_conflict_essences}]
	Given a pair of transformations $(t_1,t_2) = (\trans{G}{\rho_1}{m_1}{H_1}, \trans{G}{\rho_2}{m_2}{H_2})$ via rules $\rho_1 = \rleAC{1}$ and $\rho_2 = \rleAC{2}$, and two rule overlaps $ro = (\Span{P_j}{b_j}{A}{b_i}{P_i}, p_{L_1} \colon L_1 \to P_j, p_{L_2} \colon L_2 \to P_i)$ and $(\Span{P'_j}{b'_j}{A'}{b'_i}{P'_i}, p'_{L_1} \colon L_1 \to P'_j, p'_{L_2} \colon L_2 \to P'_i)$ that are embeddable in $(t_1,t_2)$ as shown in \Cref{fig:embedding_rule_overlap}. 
	We want to show that there is a composed overlap $co \in \Comp{ro}{ro'}$ that is also embedded in $(t_1,t_2)$. 
	To do so, well will construct a composed overlap on the basis of \cref{def:composition_overlaps} that is embedded in the transformation pair. 
	
	As shown in the diagram in \cref{fig:composed_construction} below, by \EMfactorization, there are morphism pairs $(e_{P_i} \colon P_i \to P_i^*, e_{P'_i} \colon P_i' \to P_i^*) \in \Epi$ and $(e_{P_j} \colon P_j \to P_j^*, e_{P'_j} \colon P_j' \to P_j^*)$ and morphisms $m_i^* \colon P_i^* \mono G \in \Mono$ and $m_j^* \colon P_j^* \mono G \in \Mono$ so that 
	\begin{align*}
		m_1^i &= m_i^* \circ e_{P_i}, \\
		m_2^{i'} &= m_i^* \circ e_{P'_i}, \\
		m_1^j &= m_j^* \circ e_{P_j}, \text{ and}\\
		m_2^{j'} &= m_j^* \circ e_{P'_j}. \\
	\end{align*}
	
	\begin{figure}[t]
		\centering
		\begin{tikzpicture}
				\matrix (m) [	matrix of math nodes,
			nodes in empty cells,
			row sep=1em,
			column sep=.8em,
			minimum width=1em]
			{	
				& & & & A \\ \\
				& & P_i & && &  P_j \\ \\
				P_i^* & & & & G_{ } & & & & P_j^* \\ \\
				&& P_i' && & & P_j' \\ \\
				& & & & A' \\
			};
			
			\path[-stealth] 
			
			\Edge{1-5}{}{fill=white}{b_i}{3-3}
			\Cedge{}{fill=white}{b_j}{3-7}
			
			\Edge{3-3}{}{fill=white}{m_1^i}{5-5}
			\Cedge{}{fill=white}{e_{P_i}}{5-1}
			
			\Edge{3-7}{}{fill=white}{m_2^j}{5-5}
			\Cedge{}{fill=white}{e_{P_j}}{5-9}
			
			\Edge{7-3}{}{fill=white}{m_1^{i'}}{5-5}
			\Cedge{}{fill=white}{e_{P'_i}}{5-1}
			
			\Edge{7-7}{}{fill=white}{m_2^{j'}}{5-5}
			\Cedge{}{fill=white}{e_{P'_j}}{5-9}
			
			\Edge{5-1}{\rmono}{fill=white}{m_i^*}{5-5}
			\Edge{5-9}{\lmono}{fill=white}{m_j^*}{5-5}
			
			\Edge{9-5}{}{fill=white}{b'_i}{7-3}
			\Cedge{}{fill=white}{b'_j}{7-7}
			;
		\end{tikzpicture}
		\caption{Construction of composed rule overlaps on the basis of two embeddings}\label{fig:composed_construction}
	\end{figure}
	Moreover, there is another morphism pair $(e_{P_i^*} \colon P_i^* \mono K^*, e_{P_j^*} \colon P_j^* \mono K^*) \in \Epi$ and a morphism $m^* \colon K^* \mono G \in \Mono$ so that the left side of the  diagram in \cref{fig:composition2} commutes. 
	The fact that $e_{P_i^*}$ and $e_{P_j^*}$ are $\Mono$-morphisms, follows because $m_i^*$ and $m_j^*$ are $\Mono$-morphisms and the class $\Mono$ is closed under decomposition.
	
	\begin{figure}[t]
		\centering
		\begin{tikzpicture}
			\matrix (m) [	matrix of math nodes,
			nodes in empty cells,
			row sep=1em,
			column sep=1em,
			minimum width=1em]
			{
				L_1& & A & & L_2 \\ \\ \\
				P_i & & A^*_{} & & P_j \\ \\
				& P_i^* & \text{\scriptsize (1)} & P_j^* \\ \\ 
				& & K^* \\ \\
				& & G \\
			};
			\path[-stealth] 
			
			\Edge{1-1}{}{fill=white}{p_{L_1}}{4-1}
			\Edge{1-5}{}{fill=white}{p_{L_2}}{4-5}
			
			\Edge{1-3}{}{fill=white}{b_i}{4-1}
			\Cedge{}{fill=white}{b_j}{4-5}
			
			\Edge{4-1}{}{fill=white}{e_{P_i}}{6-2}
			\Cedge{bend right = 25}{fill=white}{m_1^i}{10-3}
			
			\Edge{6-2}{\rmono}{fill=white}{e_{P^*_i}}{8-3}
			\Cedge{bend right = 10, \lmono}{fill=white}{m_i^*}{10-3}

			\Edge{6-4}{\lmono}{fill=white}{e_{P^*_j}}{8-3}
			\Cedge{bend left = 10, \rmono}{fill=white}{m_j^*}{10-3}
			
			\Edge{4-5}{}{fill=white}{e_{P_j}}{6-4}
			\Cedge{bend left = 25}{fill=white}{m_2^j}{10-3}
			
			\Edge{4-3}{\lmono}{fill=white}{b_i^*}{6-2}
			\Cedge{\rmono}{fill=white}{b_j^*}{6-4}
			
			\Edge{8-3}{\lmono}{fill=white}{m^*}{10-3}
			
			;
		\end{tikzpicture}
		\begin{tikzpicture}
			\matrix (m) [	matrix of math nodes,
			nodes in empty cells,
			row sep=1em,
			column sep=1em,
			minimum width=1em]
			{
				& & A' \\ \\ \\
				P'_i & & A^*_{} & & P'_j \\ \\
				& P_i^* & \text{\scriptsize (1)} & P_j^* \\ \\ 
				& & K^* \\ \\ 
				\\ 
			};
			\path[-stealth]

			\Edge{1-3}{}{fill=white}{b_i}{4-1}
			\Cedge{}{fill=white}{b_j}{4-5}
			
			\Edge{4-1}{}{fill=white}{e_{P'_i}}{6-2}
			\Edge{6-2}{\rmono}{fill=white}{e_{P^*_i}}{8-3}
			
			\Edge{6-4}{\lmono}{fill=white}{e_{P^*_j}}{8-3}
			
			\Edge{4-5}{}{fill=white}{e_{P'_j}}{6-4}
			
			\Edge{4-3}{\lmono}{fill=white}{b_i^*}{6-2}
			\Cedge{\rmono}{fill=white}{b_j^*}{6-4}
			
			;
		\end{tikzpicture}
		\caption{Construction of composed rule overlaps on the basis of two embeddings}
		\label{fig:composition2}
	\end{figure}
	
	We get the object $A^*$ and the morphisms $b_i^*$ and $b_j^*$ by computing the pullback $(1)$ of $(e_{P_i^*}, e_{P_j^*})$. 
	By \Cref{lem:included_pullback}, the square $(m_i^* \circ b_i^*, m_j^* \circ b_j^*)$ is also a pullback and we have
	\begin{align*}
		m_1 &= m_1^i \circ p_{L_1}  \\
		&= m_i^* \circ e_{P_i} \circ p_{L_1}.
	\end{align*}
	and
	\begin{align*}
		m_2 &= m_2^j \circ p_{L_2}  \\
		&= m_j^* \circ e_{P_j} \circ p_{L_2}.
	\end{align*} 
	This means that the rule overlap $co = (\MonoCoSpan{P_i^*}{b_i^*}{A^*}{b_j^*}{P_j^*}, e_{P_i} \circ p_{L_1}, e_{P_j} \circ p_{L_2})$ is embedded in $(t_1,t_2)$.
	Secondly, we must show that $ce \in \Comp{ro}{ro'}$:
	For this, we need to show that the squares $(e_{P_i^*} \circ e_{P_i} \circ b_j, e_{P_j^*} \circ e_{P_j} \circ b_j)$ and $(e_{P^*_i} \circ e_{P'_i} \circ b'_j, e_{P^*_j} \circ e_{P'_j} \circ b'_j)$ in the diagram in \cref{fig:composition2} are pullbacks. 
	This follows directly with \cref{lem:included_pullback} since both squares $(m_1^i \circ b_i, m_2^j \circ b_j)$ and $(m_1^{i'} \circ b_i', m_2^{j'} \circ b_j')$ are pullbacks because both rule overlaps $ro$ and $ro'$ are embedded in the transformation pair via $m_1^i$, $m_1^j$, $m_1^{i'}$ and $m_2^{j'}$, respectively. 
	
	We also need to show that if a composed essence $co \in \Comp{ro}{ro'}$ is embedded in $(t_1,t_2)$ via morphisms $m_i^* \colon P_i^* \to G$ and $m_j^* \colon P_j^* \to G$ (shown \cref{fig:composition2}), then both overlaps $ro$ and $ro'$ are embedded in $(t_1,t_2)$:
	First, we will show that $ro$ is embedded in $(t_1,t_2)$. 
	Since $co$ is a composition of $ro$ and $ro'$, there is a morphism pair $(e_{P_j^*},e_{P_i^*}) \in \Mono$ so that square (1) and the overall square (involving $b_i$ and $b_j$) is a pullback.
	The same square can be constructed by computing the pullback (2), (3), and (4) in \cref{fig:compoistion_3} which will now use to show that $ro$ is embedded in $(t_1,t_2)$:
	
	\begin{figure}
		\centering
		\begin{tikzpicture}
				\matrix (m) [	matrix of math nodes,
			nodes in empty cells,
			row sep=1em,
			column sep=1em,
			minimum width=1em]
			{
				A & & D_2  & & P_j \\ 
				 & \text{\scriptsize (4)} & & \text{\scriptsize (3)}\\
				D_1 & & A^*_{} & & P_j^* \\ 
				& \text{\scriptsize (2)} & & \text{\scriptsize (1)}\\
				P_i & & P_i^* & & K^* \\
			};
			\path[-stealth] 
			
			\Edge{1-1}{bend left = 25}{fill=white}{b_j}{1-5}
			\Cedge{bend right=25}{fill=white}{b_i}{5-1}
			\Cedge{}{}{}{1-3}
			\Cedge{}{}{}{3-1}
			
			\Edge{1-3}{}{}{}{3-3}
			\Cedge{\rmono}{}{}{1-5}
			
			\Edge{3-1}{}{}{}{3-3}
			\Cedge{\lmono}{}{}{5-1}

			\Edge{3-3}{\rmono}{fill=white}{b_j^*}{3-5}
			\Cedge{\lmono}{fill=white}{b_i^*}{5-3}
			
			\Edge{1-5}{}{fill=white}{e_{P_i}}{3-5}
			\Edge{5-1}{}{fill=white}{e_{P_j}}{5-3}
			\Edge{3-5}{\lmono}{fill=white}{e_{P_j^*}}{5-5}
			\Edge{5-3}{\rmono}{fill=white}{e_{P_j^*}}{5-5}
			;
		\end{tikzpicture}
		\caption{consturction of an embedding out of the embedding of a rule overlap}
		\label{fig:compoistion_3}
	\end{figure}
	 
	Because $co$ is embedded in $(t_1,t_2)$ there is an morphism pair $(e_{P^*_i} \colon P_i^* \to P_i^*P_j^*, e_{P^*_j} \colon P^*_j \to P_i^*P_j^*) \in \Epi$ and a morphism $m^* \colon P_i^*P_j^* \mono G \in \Mono$ so that $m_i^* = m^* \circ e_{P_i^*}$ and $m_j^* = m^* \circ e_{P_j^*}$ by \EMfactorization. Moreover, \cref{lem:included_pullback} implies that the square (5) (shown in the diagram below) is a pullback (shown in \cref{fig:composition4}). 
	We continue by constructing the pullbacks $(2)$, $(3)$ and $(4)$. 
 	The overall square (2) + (3) + (4) + (5)  is also a pullback (as composition of pullbacks). 
 	If we choose $m_1^i \coloneq m_i^* \circ e_{P_i}$ and $m_2^j \coloneq m_j^* \circ e_{P_j}$, the overall squares commutes and \cref{lem:included_pullback} implies that the outer square (consisting of $m_1^i \circ b_i$ and $m_2^j \circ b_j$) is a pullback, i.e., $ro$ is embedded in $(t_1,t_2)$ via $m_1^i$ and $m_2^j$. 
 	Analogously, we can also show that $ro'$ is embedded in $(t_1,t_2)$. \qed
	
	\begin{figure}
		\centering
		\begin{tikzpicture}
			\matrix (m) [	matrix of math nodes,
			nodes in empty cells,
			row sep=1em,
			column sep=1em,
			minimum width=1em]
			{
				L_1& & A_{} & & L_2 \\ \\
				& D_1 &\text{\scriptsize (4)} & D_2\\ \\
				P_i & \text{\scriptsize (2)} & A^*_{} & \text{\scriptsize (3)}& P_j \\ \\
				& P_i^* & \text{\scriptsize (5)} & P_j^* \\ \\ 
				& & P_i^*P_j^* \\ \\
				& & G \\
			};
			\path[-stealth] 
			
			\Edge{1-1}{}{fill=white}{p_{L_1}}{5-1}
			\Edge{1-5}{}{fill=white}{p_{L_2}}{5-5}

			\Edge{1-3}{}{}{}{3-2}
			\Cedge{bend right = 30}{fill=white}{b_i}{5-1}
			\Cedge{bend left = 30}{fill=white}{b_j}{5-5}
			\Cedge{}{}{}{3-4}
			
			\Edge{3-2}{\lmono}{}{}{5-1}
			\Cedge{}{}{}{5-3}
			
			\Edge{3-4}{\rmono}{}{}{5-5}
			\Cedge{}{}{}{5-3}
			
			\Edge{5-1}{}{fill=white}{e_{P_i}}{7-2}
			\Cedge{bend right = 25}{fill=white}{m_1^i}{11-3}
			
			\Edge{7-2}{}{fill=white}{e_{P^*_i}}{9-3}
			\Cedge{bend right = 10}{fill=white}{m_i^*}{11-3}

			\Edge{7-4}{}{fill=white}{e_{P^*_j}}{9-3}
			\Cedge{bend left = 10}{fill=white}{m_j^*}{11-3}
			
			\Edge{5-5}{}{fill=white}{e_{P_j}}{7-4}
			\Cedge{bend left = 25}{fill=white}{m_2^j}{11-3}
			
			\Edge{5-3}{\lmono}{fill=white}{b_i^*}{7-2}
			\Cedge{\rmono}{fill=white}{b_j^*}{7-4}
			
			\Edge{9-3}{\lmono}{fill=white}{m^*}{11-3}
			
			;
		\end{tikzpicture}
		\caption{composition of an embedding out of an embedding of a composed essence}\label{fig:composition4}
	\end{figure}
	\end{proof}

\begin{proof}[of \cref{thm:correctness_ce_conditions}]
	Given a pair of transformations $(t_1,t_2) = (\trans{G}{\rho_1}{m_1}{H_1}, \trans{G}{\rho_2}{m_2}{H_2})$ via rules $\rho_1 = \rleAC{1}$, and $\rho_2 = \rleAC{2}$. 
	To show the claim, we need to  show that (i) a symbolic conflict essence of $\rho_1$ and $\rho_2$ is embeddable in $(t_1,t_2)$ if the transformation pair is parallel dependent and that (ii) the embeddability of a symbolic conflict essence in $(t_1,t_2)$ implies parallel dependence.
	
	\begin{enumerate}
		\item We start by assuming that $(t_1,t_2)$ is parallel dependent and show that a symbolic conflict essence is embedded in $(t_1,t_2)$: 
		By completeness of conflict essences (\cref{cor:embedding_conflict_essences}), there is a symbolic conflict essence $(ce, ac_{ce}) \in \CEssAC{\rho_1}{\rho_2}$, so that $ce$ is embedded in $(t_1,t_2)$. 
		Therefore, it remains to show that the embedding morphisms $m_1^j$ and $m_2^i$ satisfy $ac_{ce}$. 
		If $ce$ is not ac-conflicting, $ac_{ce} = \true$ and $(m_1^j, m_2^i) \models ac_{ce}$.
		
		Otherwise, if $ce$ is ac-conflicting, \cref{thm:disjointness_ac-disregarding} implies that $(t_1,t_2)$ is ac-dis\-re\-gar\-ding parallel independent. 
		Moreover, there is an morphism $m^{**} \colon D \to G$ where $(1)$ is the figure below is a pushout (which exists, because either $b_j$ or $b_i$ is an $\Mono$-morphism) so that 
		$m_1^j = m^{**} \circ b_i'$ and $m_2^i = m^{**} \circ b_j'$.
		\begin{figure}[h]
			\centering
			\begin{tikzpicture}
				\matrix (m) [	matrix of math nodes,
				nodes in empty cells,
				row sep=1em,
				column sep=.8em,
				minimum width=1em]
				{	
					& & & &   & & &  A & & &  \\				\\			
					R_1 & & K_1 & & L_1 & & P_j & \text{\scriptsize(1)} & P_i & & L_2 & & K_2 & & R_2 \\ \\
					& & & & & & &  D \\				\\
					H_1 & & D_1 & &     & &  & G_{}&    & &     & & D_2 & & H_2 \\
				};
				\path[-stealth]

				(m-3-3) edge[\lmono] node [above] {\edge{l_1}}(m-3-1)
				edge[\rmono] node [above] {\edge{r_1}}(m-3-5)
				
				(m-7-3) edge[\lmono] node [above] {\edge{g_1}}(m-7-1)
				edge[\rmono] node [above] {\edge{h_1}}(m-7-8)
				
				(m-3-13) edge[\lmono] node [above] {\edge{l_1}}(m-3-11)
				edge[\rmono] node [above] {\edge{r_1}}(m-3-15)
				
				(m-7-13) edge[\rmono] node [above] {\edge{h_2}}(m-7-15)
				edge[\lmono] node [above] {\edge{g_2}}(m-7-8)
				
				(m-3-1) edge node [fill=white] {\edge{n_1}}(m-7-1)
				(m-3-3) edge node [] {}(m-7-3)
				
				(m-3-15) edge node [fill=white] {\edge{n_2}}(m-7-15)
				(m-3-13) edge node [] {}(m-7-13)
				
				(m-3-5) edge node [fill=white]{\edge{p_{L_1}}}(m-3-7)
				(m-3-5) edge[bend right = 20] node [fill=white, near end]{\edge{m_1}}(m-7-8)
				
				(m-3-11) edge node [fill=white]{\edge{p_{L_2}}}(m-3-9)
				(m-3-11) edge[bend left = 20] node [fill=white, near end]{\edge{m_2}}(m-7-8)
				
				(m-3-7) edge node [fill=white]{\edge{b_i'}} (m-5-8)
				edge[bend right = 15] node [fill=white]{\edge{m_1^{j}}} (m-7-8)
				
				(m-3-9) edge node [fill=white]{\edge{b_j'}} (m-5-8)
				edge[bend left = 15] node [fill=white]{\edge{m_2^{i}}} (m-7-8)
				
				(m-1-8) edge node [fill=white]{\edge{b_j}}(m-3-7)
				edge node [fill=white]{\edge{b_i}}(m-3-9)
				
				(m-5-8) edge[] node [fill=white]{\edge{m^{**}}}(m-7-8)
				;
			\end{tikzpicture}
			\caption{Embedding of a conflict essence in a transformation pair}\label{fig:embedding_conflict_essence}
		\end{figure}
		Because $(t_1,t_2)$ are parallel independent ac-diregardingly there is the extension shown in \cref{fig:ac_disregarding_initial_transformation} diagram where $m^* \colon L_1 + L_2 \to D$ is the coproduct morphism of $(L_1+ L_2, i_{L_1}, i_{L_2})$ so that 
		$m^* \circ i_{L_1} = b_i' \circ p_{L_1}$ and $m^* \circ i_{L_2} = b_j' \circ p_{L_2}$ \cite[Lemma 4]{LO21} (shown in \cref{fig:construction_conflict_essence}).

			\begin{figure}[h]
			\centering
			\begin{tikzpicture}
				\matrix (m) [	matrix of math nodes,
				nodes in empty cells,
				row sep=1em,
				column sep=1em,
				minimum width=1em]
				{
					R_1+L_2 & & L_1+L_2 & & L_1+ R_2 \\
					& & & & \\
					H_1 & & G_{} & & H_2 \\
				};
				\path[-stealth]
				(m-1-1) edge[] node [left]{}(m-3-1)
				(m-1-3) edge[-implies, double equal sign distance] node [above]{\edge{\rho_1,i_{L_1}}}(m-1-1)
				edge[-implies, double equal sign distance] node [above]{\edge{\rho_2,i_{L_2}}}(m-1-5)
				
				(m-1-3) edge[] node [fill=white]{\edge{m^{**} \circ m^{*}}}(m-3-3)
				
				(m-3-3) edge[-implies, double equal sign distance] node [above]{\edge{\rho_1, m_1}}(m-3-1)
				(m-3-3) edge[-implies, double equal sign distance] node [above]{\edge{\rho_2, m_2}}(m-3-5)
				
				(m-1-5) edge[] node []{}(m-3-5)
				;
				
			\end{tikzpicture}
			\caption{Extension diagram for ac-diregarding parallel independent transformations}\label{fig:ac_disregarding_initial_transformation}
		\end{figure}
		
		Because $(\trans{L_1 + L_2}{\rho_1}{i_{L_1}}{R_1+L_2}, \trans{L_1 + L_2}{\rho_2}{i_{L_2}}{R_1+L_2}, ac_{L_1 + L_2} \wedge ac^*_{L_1+L_2})$ is a critical pair for $\rho_1$ and $\rho_2$ \cite[Theorem 5]{LO21}, and the transformations are parallel dependent, $m^{**} \circ m^* \models ac_{L_1+ L_2} \wedge ac^*_{L_1 + L_2}$ \cite[Lemma 6.2]{EGHLO12}. 
		With correctness of the shift operator, we also have $m^{**} \models \Shift(m^*, ac_{L_1+L_2} \wedge ac^*_{L_1 + L_2})$.
		It follows that $(m_1^j, m_2^i) \models \exists(\Span{L_1}{b_i'}{D}{b_j'}{L_2}, \Shift(m^*, ac_{L_1+L_2})$, because we also have $m_1^j = m^* \circ b_i'$ and $m_2^i = m^* \circ b_j'$. 
		This means, that $(ce, ac_{ce})$ is embedded in $(t_1,t_2)$ via $m_1^j$ and $m_2^i$.
		
		\item Let us now assume that a symbolic conflict essence $(ce, ac_{ce}) \in \CEss{\rho_1}{\rho_2}$ is embedded in $(t_1,t_2)$ via morphisms $m_1^j \colon P_j \to G$ and $m_2^i \colon P_i \to G$. 
		We want to show $t_1$ and $t_2$ are parallel dependent.
		For this, we need to distinguish whether $(ce, ac_{ce})$ is ac-conflicting or not. 
		
		If $(ce, ac_{ce})$ is not ac-conflicting, \cref{thm:disjointness_ac-disregarding} implies that $t_1$ and $t_2$ are parallel dependent.
		
		Secondly, let us assume that $ce$ is ac-conflicting. 
		This means that the transformation pair $(t_1,t_2)$ is parallel independent ac-disregarding (\cref{thm:disjointness_ac-disregarding}).
		Again, there is an extension diagram  as shown in \cref{fig:ac_disregarding_initial_transformation}~\cite[Lemma 4]{LO21}. 
		Moreover, because (1) in \cref{fig:embedding_conflict_essence} is a pushout (which exists because $ce$ is ac-conflicting), the morphism $m^{**} \colon D \to G$, so that $m_1^j = m^{**} \circ b_i'$ and $m_2^i = m^{**} \circ b_j'$, is unique and because $(m_1^j, m_2^i) \models ac_{ce}$, we have $m^{**} \models \Shift(m^*, ac_{L_1+L_2} \wedge ac^*_{L_1+L_2})$. 
		This implies that $m^{**} \circ m^* \models ac_{L_1+L_2} \wedge ac^*_{L_1+L_2}$, i.e., the transformation pair $(t_1,t_2)$ is parallel dependent	\cite[Lemma 6.2]{EGHLO12}. \qed
		
	\end{enumerate}
\end{proof}

\begin{proof}[of \cref{relation_to_initial_conflicts}]
	To relate each symbolic conflict essence $sce = (ce, ac_{ce})$ to a unique initial conflict, we need to distinguish whether the included conflict essence is ac-conflicting or not. 
	Let us start by assuming that $ce$ is not ac-conflicting. 
	Here, $ce$ is either also a disabling essence or an composed essence. 
	If $ce$ is a disabling essence, the filtering in \cref{def:asynchronous_essences} implies that $ce$ is also a disabling for the plain rules and $ce$ can be uniquely related to an initial conflict~\cite{ACR19}. 
	If $ce$ is a composed essence, we know that by \cref{thm:embedding_conflict_essences}, both conflict essences $ce$ has been composed by are embedded in a transformation pair if $ce$ is embedded in it. 
	This means that both of these essences related to the same initial conflict, since at most one initial conflict embeds in a parallel dependent transformation pair~\cite{LO21}.
	Therefore, we can also relate $sce$ to this initial conflict.
	
	If $ce$ is ac-conflicting, an embedding of $ce$ in a transformation pair implies ac-disregarding parallel independence. 
	Now, the symbolic intial conflict is also embedded in this transformation pair~\cite[Lemma 4]{LO21} and because this the only initial conflict that embeds in ac-disregarding parallel independent transformation pairs, we can uniquely relate $sce$ to the symbolic initial conflict.
	
	The second part of the claim follows from the facts that an initial conflict can be embedded in each parallel dependent transformation pair and \cref{thm:correctness_ce_conditions} which shows the same statement for symbolic conflict essences. \qed
\end{proof}

	\section{Additional Examples}
\label{app:detailed_examples}

The following example is a more detailed version of \cref{ex:construction_deleting_essence}.
\begin{example}[Construction of a proto-essence by deletion]\label{ex:construction_deleting_essence_complete}
	\Cref{fig:ex_construction_proto_essence} shows the construction of a proto-essence by deletion of the rules \decapAttribute\ and \pullAttribute. 
	In this example, we chose the path $L_2, p_1, P_1, p_3,P_3$ as already described in \cref{ex:treeStructure}. 
	Please note that the graphs $P_1, P_2$ and $P_3$ now also show the included occurrence of the \ac{LHS} of \pullAttribute. 
	The construction starts with the overlap $L_1P_3$ of the \ac{LHS} $L_1$ of \decapAttribute\ and $P_3$. 
	Obviously, $L_1P_3$ is equal to $P_3$; the embedding of $L_1$ is implicitly described by the node identifiers. 
	For instance, the node \textsf{1:Class} is mapped to the node \textsf{1,10:Class} and so on. 
	The edges are mapped according to the mapping of their source and target nodes. 
	
	In the first step of the construction, we construct the pullback of $e_{L_1}$ and $e_{P_3} \circ p_3 \circ p_1$ which results in the graph $A_0$. 
	For graphs, the pullback object is the smallest subgraph of $L_1P_3$ so that elements of $L_1$ and $L_2$ overlap. 
	Clearly, for $L_1P_3$ no elements of the both \acp{LHS} overlap, which means that  $A_0$ is the empty graph. 
	We continue by calculating the pullback of $l_1$ and $A_0 \to L_1$ (which was obtained by the first pullback). 
	As $A_0$ is the empty graph, the resulting object $A_0'$ is also the empty graph. 
	Clearly, the morphism from $A_0'$ to $A_0$ is an isomorphism, i.e., we proceed by considering the interactions of \decapAttribute\ and $P_1$. 
	
	The results of the pullbacks in the second iteration are the graphs $A_1$ and $A_1'$. 
	$L_1$ and $P_1$ only overlap at the node \textsf{1,10:Class} in $L_1P_3$, and since \decapAttribute\ does not delete any nodes, the result $A'_1$ of the second pullback also contains the node \textsf{1,10:Class}. 
	Once again, there is an isomorphism from $A_1'$ to $A_1$ (the identity morphism) and we proceed with the next iteration.
	
	In the third iteration, we obtain the graphs $A_2$ and $A_2'$. 
	When the rule \decapAttribute\ is applied and overlaps with $P_3$ as described in the overlap $L_1P_3$, it will delete the nodes \textsf{3:Method}, \textsf{2:Method} and their incoming edges. 
	This means that the nodes \textsf{3,12:Method}, and \textsf{2,13:Method} contained in $A_2$ will be deleted. 
	In consequence, the result $A_2'$ of the second pullback will now contain the nodes \textsf{1,10:Class}, \textsf{4,11:Attribute} and the \textsf{variables} edge that connects them. 
	This, in particular, means that there is no isomorphism from $A_2'$ to $A_2$ and we have found a situation in which an application of \decapAttribute\ can delete an occurrence of $P_3$.

	\begin{figure}
		\includegraphics[width = \textwidth]{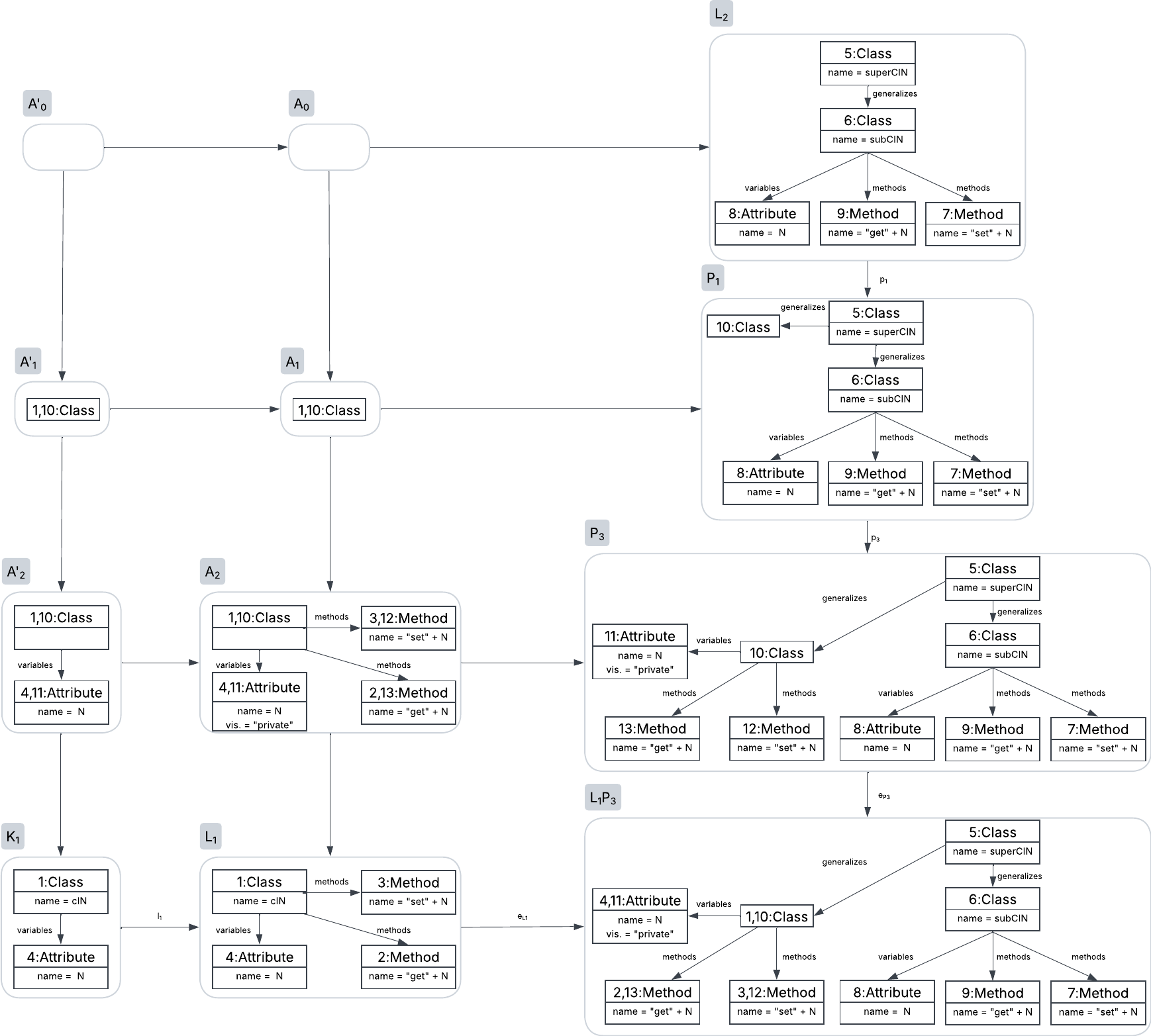}
		\caption{Construction of proto-essences by deletion of the rules \decapAttribute\ and \pullAttribute }\label{fig:ex_construction_proto_essence}
	\end{figure}

	In the final step of the computation, we calculate the initial pushout of the morphism $A_2' \to A_2$ as shown in \cref{fig:ex_construction_proto_essence_ipo}.
	Intuitively, the graphs $B$ and $C$ of the initial pushout describe the elements that are deleted by \decapAttribute\ and their \emph{boundary}, i.e., each node that is attached to a deleted edge, i.e., the \textsf{Method} nodes, their incoming edges, and the \textsf{Class} node. 
	Moreover, the graph $B$ shows the elements that are not deleted but needed in $C$, i.e., all those nodes that are not  deleted but have an incoming or outgoing edge that is deleted. 
	In this particular case, this only applies for the node \textsf{1,10:Class}.
	\begin{figure}
		\centering
		\includegraphics[width = 0.8\textwidth]{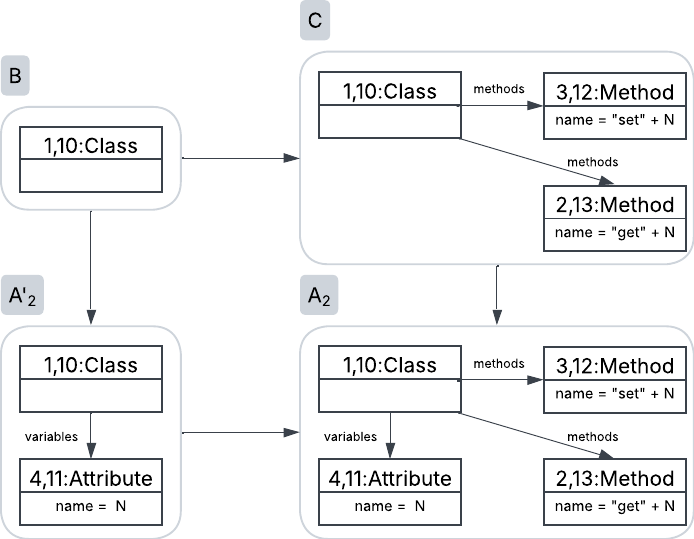}
		\caption{Initial pushout that concludes the construction of proto-essences by deletion of of the rules \decapAttribute\ and \pullAttribute}\label{fig:ex_construction_proto_essence_ipo}
	\end{figure}
	The computation of the initial pushout concludes the construction  and the resulting proto-essence $(a \circ c \colon C \to L_1, b \circ c \colon C \to P_3)$ is shown in \cref{fig:ex_proto_essence}.
	\begin{figure}
		\includegraphics[width= \textwidth]{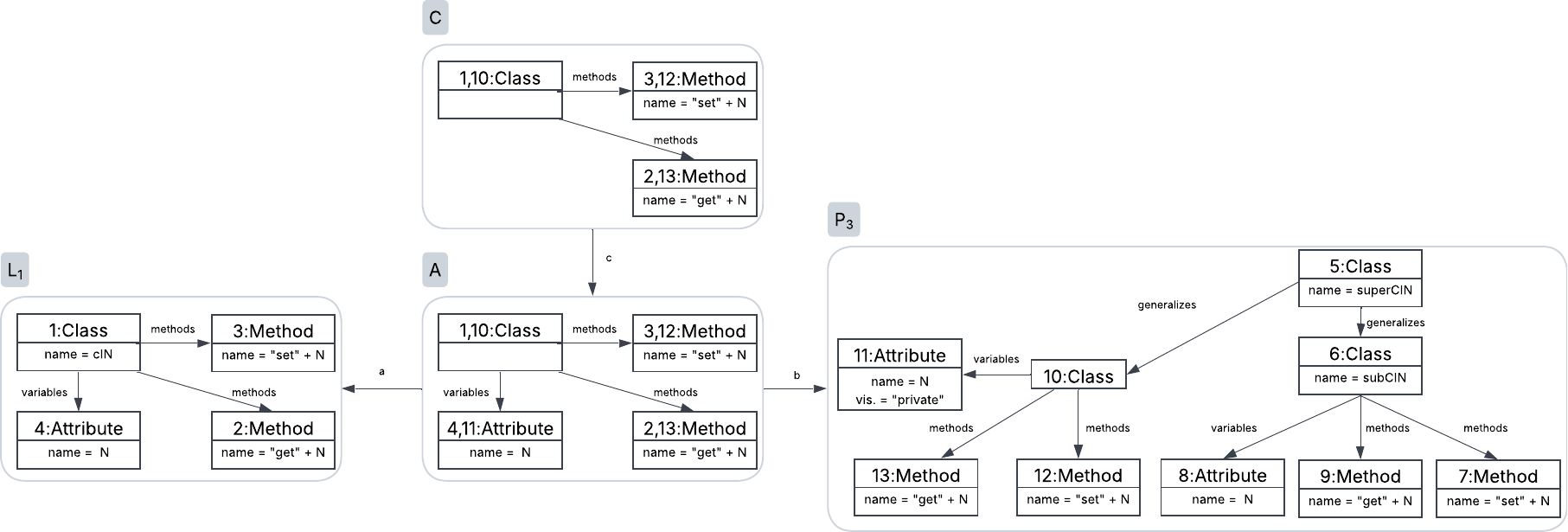}
		\caption{Proto-essence by deletion of the rules \decapAttribute\ and \pullAttribute}\label{fig:ex_proto_essence}
	\end{figure}
\end{example}

Applying the just illustrated computation to all possible overlaps, considering both directions and also insertions, provides the set of proto-essences of the rule pair \decapAttribute{} and \pullAttribute{}; in our example, all proto-essences turn out to be disabling essences. 
We discussed this set in \cref{ex:disabling_essences} at the start of \cref{chapter:essences}. 
The following example explains the individual essences in more detail. 
\begin{example}[Disabling essence]\label{ex:disabling_essences_complete}
	\Cref{fig:deleting_essences} shows six disabling essences of the  \emph{decapsulateAttribute} and \emph{pullUpEncapsulatedAttribute} rules. Moreover, they also represent rule overlaps of the rules.
	Only the objects $A$ (in the grey boxes) and $C$ (in the orange boxes) are displayed. 
	The morphisms to the rules' \acp{LHS} and the objects in the application condition are given through the node identifiers. 
	The graph contained in \textsf{$A^1_{L_1P_3}$} shows 
	that \emph{decapsulateAttribute} could violate the application condition of \emph{pullUpEncapsulatedAttribute} by destroying an occurrence of $P_3$, which is existentially bound in the application condition of \emph{pullUpEncapsulatedAttribute}. 
	In this disabling essence, the node \textsf{4,11:Attribute} is mapped to \textsf{4:Attribute} in the \ac{LHS} of \emph{decapsulateAttribute} and to \textsf{11:Attribute} in $P_3$.
	Whereas \textsf{$A^1_{L1P3}$} shows the elements that must be overlapped when this essence is embedded in a transformation pair, \textsf{$C^1_{L1P3}$} indicates the exact elements that cause a conflict: 
	When embedded in a transformation pair, \emph{decapsulateAttribute} destroys both \textsf{methods} edges leading from \textsf{1,10:Class} to \textsf{3,12:Method} and from \textsf{1,10:Class} to \textsf{2,13:Method}.
	The existence of an embedding of this disabling essence implies that an occurrence of $P_3$ is destroyed. 
	However, this does not imply that the application condition of \emph{pullUpEncapsulatedAttribute} is violated: 
	The matched class \textsf{1,10:Class} might contain other methods that, together with \textsf{1,10:Class}, form another occurrence of $P_3$.
	\textsf{$A^2_{L1P3}$} and \textsf{$A^3_{L1P3}$} show two other ways so that \emph{decapsulateAttribute} can destroy an occurrence of $P_3$. 
	Either \emph{decapsulateAttribute} only destroys the \textsf{method} edge to the getter-method of $P_3$ (shown in $A^3_{L1P3}$) or it only destroys the \textsf{method} edge to the setter-method of $P_3$ (shown in $A^2_{L1P3}$).

	In the middle row of \cref{fig:deleting_essences}, we display disabling essences indicating that \emph{decapsulateAttribute} causes a conflict by destroying the \ac{LHS} of \emph{pullUpEncapsulatedAttribute}, and vice versa. 
	Note that these essences are also calculated by the approach for plain rules as presented in~\cite{ACR19}. 
	When $A_{L1L2}$ is embedded in a transformation pair, \emph{decapsulateAttribute} destroys the match of \emph{pullUpEncapsulatedAttribute} by destroying the setter method and its incoming \textsf{methods} edge, which are also matched by \emph{pullUpEncapsulatedAttribute}. 
	$A^1_{L2L1}$ describes an similar conflict, but in this case it is the getter method and its edge that are destroyed.
	These two disabling essences also describe conflicts caused by \emph{pullUpEncapsulatedAttribute}, because \emph{pullUpEncapsulatedAttribute} destroys the respective \textsf{methods} edge, which is also matched by \emph{decapsulateAttribute}.
	Finally, $A^2_{L2L1}$ describes a conflict caused by \emph{pullUpEncapsulatedAttribute} that arises when \emph{pullUpEncapsulatedAttribute} destroys the \textsf{methods} edge leading to \textsf{4,8:Attribute}, which is also matched by \emph{decapsulateAttribute}.
	(The bottom row of \cref{fig:deleting_essences} will be explained in \cref{ex:composedRuleOverlap}.)
\end{example}

The following two examples illustrate constructions from \cref{chapter:embedding}. 
First, we illustrate the use of our criterion for checking rule overlaps to be ac-conflicting. 

\begin{example}[Ac-conflicting disabling essence]\label{ex:ac-conflicting-essence}
	In \cref{ex:ac-conflicting}, we check whether a conflict essence of \decapAttribute\ and \pullAttribute\ (shown in \cref{fig:deleting_essences}) is ac-conflicting. 
	For this, we compute the two pullbacks shown in \cref{ex:ac-conflicting}, where the first computes the elements of $P_3$ in which $A^3_{L_1P_3}$ and $L_2$ overlap. 
	The result is the graph $L_2'$ which is equal to the empty graph. 
	In the second step, we compute the pullback of $l_2$ and $L_2' \to L_2$ to check which elements will be deleted by \pullAttribute. 
	Since $L_2'$ is the empty graph, $K_2'$ is also equal to the empty graph, i.e., the disabling essence is ac-conflicting.	
	\begin{figure}
		\includegraphics[width= \textwidth]{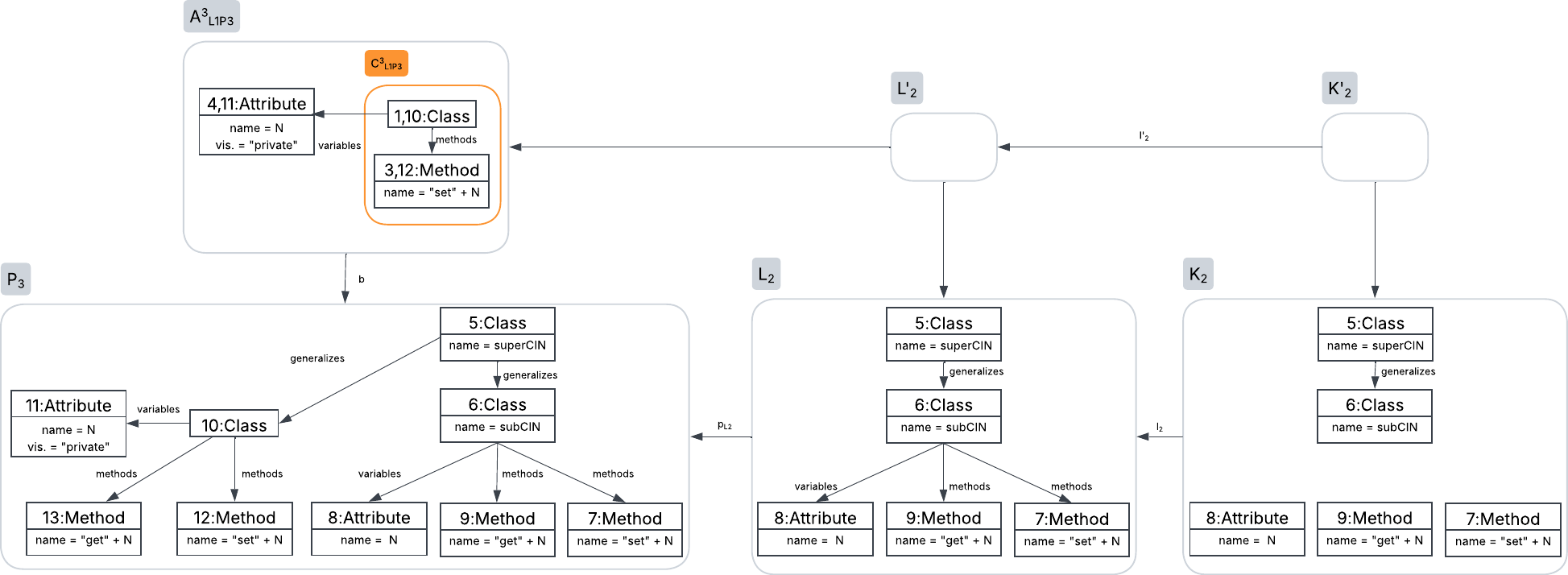}
		\caption{Check whether a disabling essence for the rules \decapAttribute\ and \pullAttribute\ is ac-conflicting}\label{ex:ac-conflicting}
	\end{figure}
\end{example}

The following extends \cref{ex:composedRuleOverlap} by providing a detailed computation of a composed rule overlap.
\begin{example}[Composed rule overlap]\label{ex:composedRuleOverlap1}
	\Cref{fig:deleting_essences} shows two composed rule overlaps of disabling essences of \decapAttribute\ and \pullAttribute. 
	\textsf{Comp($A_{L1L2}, A^1_{L2L1}$)} indicates that $A_{L1L2}$ and $A^1_{L2L1}$ can be embedded in the same transformation by identifying \textsf{1,6:Class}. 
	An occurrence of this composed overlap indicates that \emph{decapsulateAttribute} causes a conflict with \emph{pullUpEncapsulatedAttribute} by destroying \textsf{3,7:Method}, which is matched by \emph{pullUpEncapsulatedAttribute}. 
	It also indicates that \emph{pullUpEncapsulatedAttribute} causes a conflict with \emph{decapsulateAttribute} by destroying the edge leading to \textsf{2,9: Method}.
	When \textsf{Comp($A_{L1L2}, A^2_{L2L1}$)} is embedded in a transformation pair, \emph{decapsulateAttribute} will destroy the match of \emph{pullUpEncapsulatedAttribute} by deleting the edge leading to \textsf{3,7:Method} and \emph{pullUpEncapsulatedAttribute} will destroy the match of \emph{decapsulateAttribute} by deleting the edge leading to \textsf{4,8:Attribute}.

	\begin{figure}
		\includegraphics[width = \textwidth]{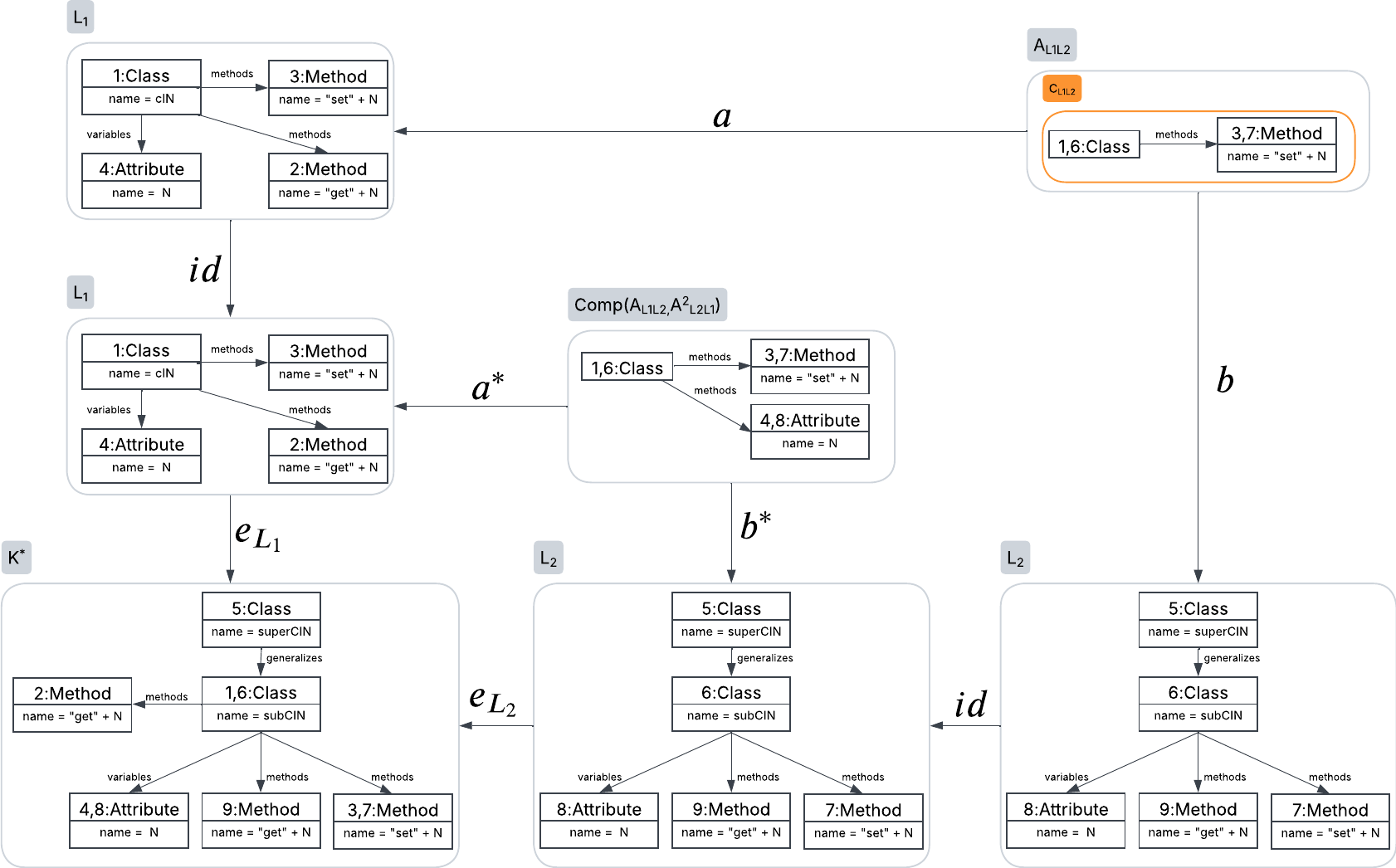}
		\caption{Composed disabling essence of the rule \decapAttribute\ and \pullAttribute}
		\label{fig:ex_composed_essences_construction}
	\end{figure}
	
	The exact computation of \textsf{Comp($A_{L1L2}, A^2_{L2L1}$)} is shown in \cref{fig:ex_composed_essences_construction}. 
	As both disabling essences involved signal a conflict for the plain rules, they are of the form 
	$$(a \colon C \to L_1, b \colon C \to L_2)$$ 
	This means that in the first step of the computation (when we are searching for an overlap of the codomains of $a$ and $a'$ and of $b$ and $b'$), we can simply chose $L_1$ and $L_2$, i.e., the \acp{LHS} of \decapAttribute\ and  \pullAttribute. 
	The second overlap $K^*$ is now an overlap of $L_1$ and $L_2$ as shown in the bottom left of \cref{fig:ex_composed_essences_construction}. 
	We now obtain the morphisms $a^* \in \Mono$ and $b^* \in \Mono$ by computing the pullback of $e_{L_1}$ and $e_{L_2}$.
	The result contains the three nodes \textsf{1,6:Class}, \textsf{3,7:Method}, \textsf{4,8:Class}, and their outgoing as these are the elements in $K^*$ that are matched by both $e_{L_1}(L_1)$ and $e_{L_2}(L_2)$.
\end{example}

\begin{example}[Symbolic conflict essence]\label{ex:detailed_symbolic_essence}
	Together with a cospan condition, each conflict essence shown in \cref{fig:deleting_essences} forms a symbolic conflict essence.
	For \textsf{A\textsubscript{L1L2}, A\textsuperscript{1}\textsubscript{L2L1}}, \textsf{A\textsuperscript{2}\textsubscript{L2L1}}, and the composed essences, 
	which are not ac-conflicting, the application condition is equal to $\true$. 
	The remaining conflict essences 
	are equipped with a non-trivial cospan condition of the form $$ac_{ce} \coloneq \exists(\CoSpan{P_i}{a'}{D}{b'}{P_j}, \Shift(m^*, ac_{L_1+L_2} \wedge ac^*_{L_1+L_2})),$$ where $d \coloneq \Shift(m^*, ac_{L_1+L_2} \wedge ac^*_{L_1+L_2})$ ensures that 
	\begin{enumerate}
		\item both rules satisfy their application conditions before the transformations (checked by $ac_{L_1 + L_2}$) and that
		\item the transformation pair is parallel dependent (checked by $ac^*_{L_1+L_2}$). This means that $ac^*_{L_1+L_2}$ checks whether both application conditions are satisfied after the transformations, or not. 
	\end{enumerate}
	Through the cospan, we can use both embedding morphisms of the symbolic conflict essence to check the satisfaction of $d$.
	
	\cref{fig:ex_symbolic_conflict_essence} shows a symbolic conflict essence for the rules \decapAttribute\ and \pullAttribute. 
	The contained conflict essence is displayed at the top, and the corresponding cospan condition at the bottom of the figure. 
	As explained in \cref{ex:disabling_essences}, the conflict essence signals that \decapAttribute\ deletes the getter and setter methods of the private attribute \textsf{4,11:Attribute} which could lead to a violation of the application condition of \pullAttribute\ if the class \textsf{1,10:Class} does not contain other getter or setter methods for this attribute, or a public attribute with the name \textsf{N}.
	
	The first row of the cospan condition contains the cospan that is constructed by calculating the pushout of the two morphisms contained in the conflict essence (compare \cref{def:symbolic_conflict_essences}). 
	In this case, the obtained graph $D$ is equal to $P_3$, and we just add a morphism from $L_1$ in $D$ accoring to the node identifiers. 
	Moreover, the graph $D$ is exactly the graph that has initially been used to construct the conflict essence $A^1_{L1P3}$. 
	The second row of the condition shows $ac_{L_1 + L_2}$ that checks whether both rules satisfy their respective application condition before the transformation. 
	As \decapAttribute\ has no application condition, we only need to check the application condition of \pullAttribute. 
	Since \pullAttribute\ will move the node \textsf{8:Attribute} and its getter and setter method from \textsf{6:Class} to \textsf{5:Class}, we need to check whether each subclass of \textsf{5:Class} contains an attribute with name \textsf{N} that is either public or has corresponding getter and setter methods contained in the same class. 
	This is why in $P_1'$ we are searching for any other subclass of \textsf{5:Class}. 
	Each of these classes must now contain an attribute with name \textsf{N} that is either 
	\begin{enumerate}
		\item public, as checked by $P_2'$ or 
		\item private and has getter and setter methods contained in the same method (as checked by $P_3'$). 
	\end{enumerate}
	Please note that we omit another graph that checks whether \textsf{1,10:Class} contains an public attribute with name \textsf{N} for space reasons. 
	This graph is essentially equal to $P_2'$ with the difference that, instead of in \textsf{16:Class}, the attribute \textsf{17:Attribute} is contained in \textsf{1,10:Class} and the class \textsf{16:Class} does not exist.
	
	The bottom row of the condition shows $ac^*_{L_1+L_2}$ that checks whether \pullAttribute\ will still satisfy its application condition after \decapAttribute\ was applied. 
	For this, we only need to check whether \textsf{1,10:Class} still satisfies the remaining part of the condition. 
	This is because \decapAttribute\ only changes structure contained in this class. 
	There are two possibilities so that \textsf{1,10:Class} still satisfies the condition, either 
	\begin{enumerate}
		\item it contains another attribute with name \textsf{N} that is public, as modelled in $P_4'$, or 
		\item it contains two other getter and setter methods for \textsf{4,11:Attribute} that are not deleted by \decapAttribute\ as modelled in $P_5'$.
	\end{enumerate}
	If this condition is satisfied, we know that \textsf{1,10:Class} still satisfies the condition after an application of \decapAttribute\ and, in particular, that \pullAttribute\ will satisfy its application condition after \decapAttribute\ was applied.
	
	\begin{sidewaysfigure}
		\includegraphics[width= \textwidth]{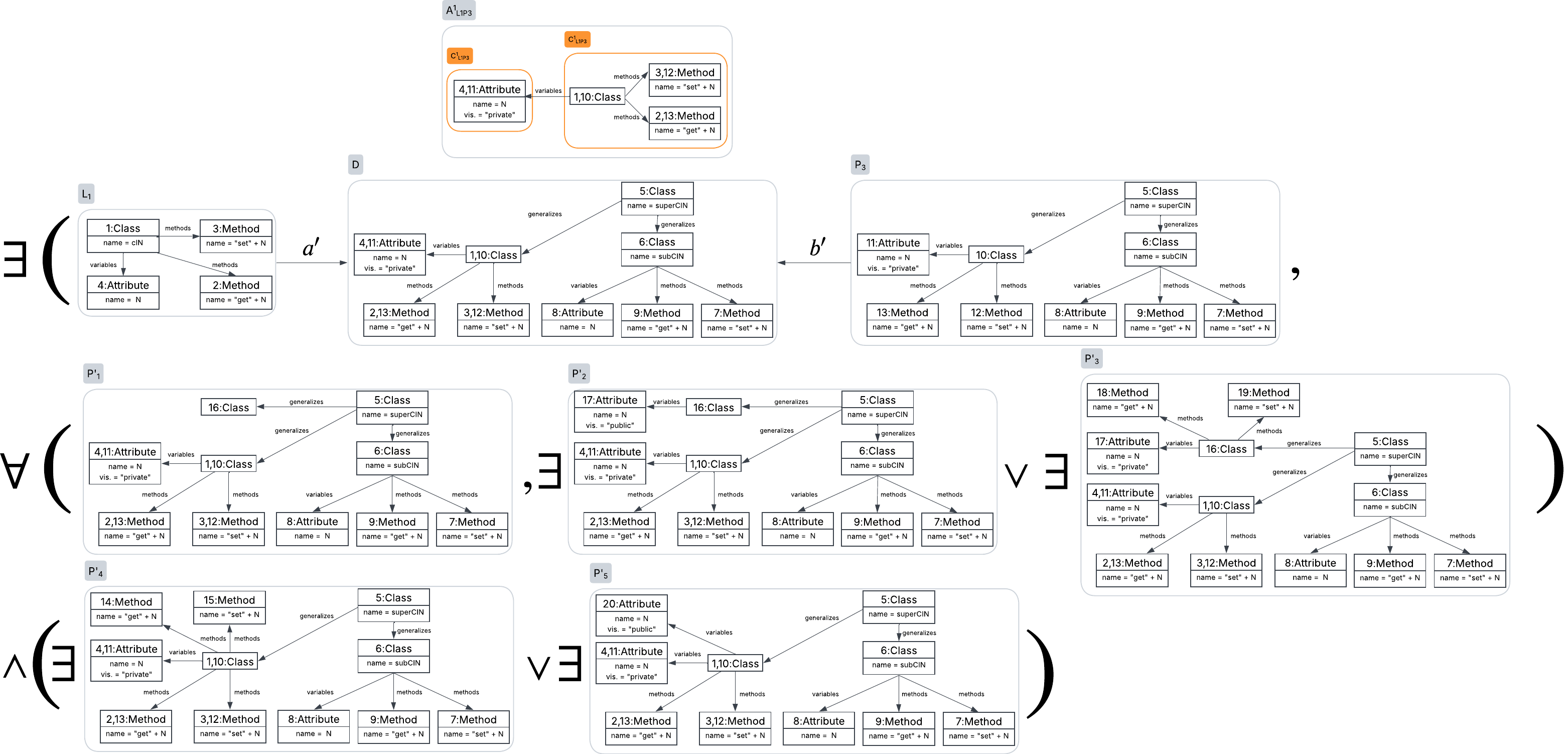}
		\caption{Symbolic conflict essence of the rules \decapAttribute\ and \pullAttribute}\label{fig:ex_symbolic_conflict_essence}
	\end{sidewaysfigure}
	
\end{example}

\end{document}